\DeclareMathOperator{\E}{\mathbb{E}}
\DeclareMathOperator{\negl}{\text{negl}}
\newtheorem{theorem}{Theorem}[section]
\newtheorem{lemma}[theorem]{Lemma}
\theoremstyle{definition}
\newtheorem{definition}{Definition}[section]
\newcommand{\ignore}[1]{}
\newcommand{\zu}{\{0,1\}}                           
\newcommand{\calA}{{\mathcal{A}}}
\newcommand{\calC}{{\mathcal{C}}}
\newcommand{\calT}{{\mathcal{T}}}
\newcommand{\B}{{\mathbb{B}}}
\newcommand{\key}{{\mathtt{key}}}
\newcommand{\SetupA}{Setup}
\newcommand{\ShuffA}{Shuffling}
\newcommand{\ObShuffA}{Oblivious \ShuffA}
\newcommand{\KObShuffA}{$K$-Oblivious \ShuffA}
\newcommand{\Shuffle}{{{\mathsf{Sh}}}}              
\newcommand{\ShuffG}{\mathsf{OSGame}}               
\newcommand{\transD}{{\calT}}                       
\newcommand{\trans}{{\mathtt{trans}}}               
\newcommand{\transmD}{{\mathcal{M}}}                
\newcommand{\transm}{{\mathtt{Mtrans}}}             
\newcommand{\touched}{{\mathsf{Touched}}}           
\newcommand{\K}{\touched}                           
\newcommand{\nameAlgoS}{{\sf CacheShuffle}}         
\newcommand{\firstC}{{\sf K}\nameAlgoS{\sf Basic}}  
\newcommand{\secondC}{\nameAlgoS{\sf Root}}         
\newcommand{\thirdC}{\nameAlgoS}                    
\newcommand{\fullC}{\nameAlgoS{\sf Root}}
\newcommand{\kC}{{\sf K}\nameAlgoS}
\newcommand{\ksecondC}{{\sf K}\nameAlgoS{\sf Root}}
\newcommand{\Spray}{\text{\sf Spray}}               
\newcommand{\Recalibrate}{\text{\sf Recalibrate}}   
\newcommand{\RSpray}{\text{\sf RSpray}}             
\newcommand{\source}{{\mathsf{Source}}}
\newcommand{\inputR}{{\mathsf{R}\source}}	  
\newcommand{\inputN}{n}                           
\newcommand{\destInd}{D}                          
\newcommand{\destIndN}{d}                         
\newcommand{\temp}{{\mathsf{temp}}}               
\newcommand{\blockSub}{{\mathsf{srcInd}}}         
\newcommand{\sourceB}{{\mathsf{sourceB}}}
\newcommand{\Aa}{{\source}}  			   
\newcommand{\dest}{{\mathsf{Dest}}}
\newcommand{\Dd}{{\dest}}                          
\newcommand{\dBuck}{{\mathsf{destB}}}              
\newcommand{\D}{\dBuck}                            
\newcommand{\T}{{\temp}}                           
\newcommand{\rem}{{\mathsf{Rem}}}                  
\newcommand{\Q}{{\mathsf{Q}}}
\newcommand{\A}{{\blockSub}}                        
\newcommand{\ntBucket}{\nu}                     
\newcommand{\ndBucket}{{\xi}}            
\newcommand{\tbDown}{{\mathsf{tbDown}}}
\newcommand{\dummyAlgoS}{{\sf KCacheShuffleDummy}}
\newcommand{\indCpa}{{\it IND-CPA}}
\newcommand{\kO}{\KObShuffA}
\newcommand{\kOS}{{\KObShuffA}}
\DeclareMathOperator{\poly}{\text{poly}}
\DeclareMathOperator{\Enc}{\text{\sf Enc}}
\DeclareMathOperator{\Dec}{\text{\sf Dec}}
\newcommand{\fourthC}{{\dummyAlgoS}}
\newcommand{\dMap}{{\mathsf{dMap}}}
\newcommand{\bZero}{{\mathbf{0}}}
\newcommand{\cnt}{{\mathsf{cnt}}}
\newcommand{\pos}{{\mathsf{pos}}}
\newcommand{\idx}{{\mathsf{idx}}}
\newcommand{\F}{{\mathbb{F}}}
\newcommand{\Key}{{\key}}
\newcommand{\U}{{\mathbb{U}}}
\newcommand{\adv}{{\calA}}
\newcommand{\ch}{{\calC}}
\newcommand{\from}{\leftarrow}
\newcommand{\osGame}{{\mathsf{OSGame}}}
\newcommand{\shuffleInd}{{\mathsf{ShuffleIndGame}}}
\newcommand{\indCpaGame}{{\mathsf{IndCPAGame}}}
\newcommand{\tCt}{\mathsf{touchCt}}
\newcommand{\tInd}{\mathsf{touchInd}}
\newcommand{\dCt}{\mathsf{destCt}}
\newcommand{\dInd}{\mathsf{destInd}}
\title{CacheShuffle: An Oblivious Shuffle Algorithm using Caches\thanks{
The first version of this report (May 19, 2017) described $\nameAlgoS$ and
$\secondC$.
The second version of this report (September 5, 2017) introduced the concept of a
{\KObShuffA} and described $\firstC$ and $\dummyAlgoS$.
The current version describes $\kC$.
}}
\author{
Sarvar Patel\thanks{Google, Inc., {\tt sarvar@google.com}}\and
Giuseppe Persiano\thanks{Google, Inc. and Universit\`a di Salerno, {\tt giuper@gmail.com}}\and
Kevin Yeo\thanks{Google, Inc., {\tt kwlyeo@google.com}}
}
\date{\today}
\begin{document}
\maketitle

\begin{abstract}

We consider the problem of {\em\ObShuffA}, a critical component in several applications
in which one wishes to hide the pattern of data access,
and the problem of {\em\KObShuffA}, a refinement thereof.
We provide efficient algorithms for both problems and discuss their application to
the design of Oblivious RAM.
The task of a {\KObShuffA} algorithm is to obliviously shuffle
$N$ encrypted blocks that have been randomly allocated on the server
in such a way that an adversarial server learns nothing about the new allocation of blocks.
The security guarantee should hold also with respect to an adversary that
has learned the initial position of $K$ {\em touched} blocks out of the $N$ blocks.
The classical notion of {\ObShuffA} is obtained for $K=N$.

\smallskip
We start by presenting a family of algorithms for {\ObShuffA}.
Our first construction, that we call $\fullC$, is tailored for clients with $O(\sqrt{N})$ blocks of
memory and uses $(4+\epsilon)N$ blocks of bandwidth, for every $\epsilon>0$ and has negligible in $N$
failure probability.
$\fullC$ is a 4.5x improvement over the previous best known result on practical sizes of $N$.
We also present $\thirdC$ that obliviously shuffles using $O(S)$ blocks of client
memory with $O(N\log_S N)$ blocks of bandwidth.

\smallskip
We then turn to {\KObShuffA} and give
algorithms that require $2N + f(K)$ blocks of bandwidth,
for some function $f$.
That is, any extra bandwidth above the $2N$ lower bound depends solely on $K$.
We present $\firstC$ that uses
$O(K)$ client storage and exactly $2N$ blocks of bandwidth.
For smaller client storage requirements, we show
$\kC$, which uses $O(S)$ client storage and requires
$2N + (1+\epsilon)O(K \log_S K)$ blocks of bandwidth.

\smallskip
Finally, motivated by applications to ORAM design, we consider also the case in which,
in addition to the $N$ blocks, the server stores $D$ dummy blocks whose content is is irrelevant but
still their positions must be hidden by the shuffling.
For this case, we design algorithm \dummyAlgoS\ that,
for $N+D$ blocks and $K$ touched blocks, uses $O(K)$ client storage and
$D+(2+\epsilon)N$ blocks of bandwidth.

\smallskip
We discuss how to use \firstC\ and \dummyAlgoS\ to improve practical Oblivious RAM constructions.
\end{abstract}

\clearpage

\tableofcontents
\clearpage

\section{Introduction}
Cloud storage has become an increasingly popular technology due to the many benefits it offers users.
Uploading files to the cloud allows users to share documents easily without incurring into
bandwidth costs or the annoyance of email attachments.
Users are able to access documents from anywhere without having
to transfer data between machines. The burden of replicating data and
recovering from machine failures is placed on the storage provider.
For many corporations, cloud storage becomes cost efficient since the price of cloud storage
may be cheaper than developing and maintaining their own internal storage
systems.

Some users might want to hide the contents of their data from their cloud providers.
A first attempt would be to encrypt all documents by the client before uploading the files to the server.
Work done in \cite{Liu:2014:SPL:2580107.2580271} and
\cite{Naveed:2015:IAP:2810103.2813651} show that the access pattern
to encrypted data may leak information to cloud storage providers.
Data oblivious algorithms and storage can be used to hide the access pattern
to remote data with Oblivious Random Access Memory (ORAM) being the primary example.
ORAM was first introduced by \cite{Goldreich:1987:TTS:28395.28416}
(see also~\cite{Goldreich:1996:SPS:233551.233553}) that introduced the so called Square Root ORAM
(with a square root communication overhead and client memory)
and the Hierarchical ORAM construction which has poly-logarithmic amortized
cost and sublinear client storage. Recently, more practical constructions
such as Path ORAM \cite{Stefanov:2013:POE:2508859.2516660},
Partition ORAM \cite{partition_oram} and
Recursive Square Root ORAM \cite{cryptoeprint:2017:964} have appeared.
The best known asymptotic results with $O(\log N)$ amortized query cost with $O(N^\epsilon)$ blocks
of client storage was first shown in \cite{Goodrich:2012:PGD:2095116.2095130}.
The result was improved to make the worst case also $O(\log N)$ in
\cite{Goodrich:2011:ORS:2046660.2046680}. However, these asymptotic results
have constants too large to be considered practical at the moment.

Many ORAM constructions have amortized costs due to the need of periodically running
an {\em oblivious shuffling} algorithm. Roughly speaking, an oblivious shuffle
moves around the data blocks in such a way that the server cannot link together
the position of a block before the shuffle
with the position of the same block after the shuffle.
This is used to completely remove any links that the server might have created
with the data blocks in the position before the oblivious shuffle.
Essentially, all extracted information is rendered useless.
This idea appeared in the original Square Root ORAM and Hierarchical ORAM that
allowed clients to perform accesses until a break point was reached where the server might
be able to extract extra information from the access pattern.
At this point, the client performs an oblivious shuffle (after which, new queries cannot provide
extra information to the server), and queries can occur again.

The early approach to oblivious shuffling involved the use of sorting circuits
(or of oblivious sorting algorithms that can also be seen as sorting circuits).
The client evaluates the compare-exchange gates one at a time and for each gate
it downloads the two encrypted blocks that are input to the gate,
re-encrypts them and uploads them in right order.
Batcher's sort is considered the most practical algorithm
\cite{Batcher} even though it has asymptotic cost of $O(N \log^2 N)$.
Sorting networks such as AKS~\cite{AKS} and Zig-Zag~\cite{Goodrich:2014:ZSS:2591796.2591830} have $O(N \log N)$ size, but
are considered impractical due to large constants.
Randomized Shellsort~\cite{Goodrich:2011:RSS:2049697.2049701}
has larger depth than AKS but the constant hidden
in the big Oh notation is smaller.
Oblivious shuffling based on sorting circuits is interesting because the client need only to store
a constant number of data blocks but it requires bandwidth proportional to the size of the network which
is $\Omega(N\log N)$.
The first oblivious shuffling algorithm not based on sorting circuits, the {\em Melbourne Shuffle},
was introduced in \cite{Ohrimenko2014} and uses $O(N)$ bandwidth
while only requiring $O(\sqrt N)$ blocks to be stored on the client at any time.

\paragraph{Results and Contributions.}
In this paper, we present practical algorithms for oblivious shuffling.

We base our design on the following observation that has been overlooked by previous work.
As we have discussed, the main goal of oblivious shuffling is to make sure that the
adversary cannot accumulate too much information on which slot in server memory holds which block
in algorithms that hide the access pattern to data blocks.
However, it is seldom the case that the adversary gets to learn the position of {\em all} the $N$ blocks
but, rather, of a number of blocks that is equal to the size of the client memory.
In addition, the client knows exactly which blocks have been touched by the server.
We call these blocks the {\em touched} blocks.
This is the case, for example,
for the Square Root ORAM of \cite{Goldreich:1987:TTS:28395.28416,Goldreich:1996:SPS:233551.233553} and of
its hierarchical versions.
Motivated by this observation, we introduce the
concept of a {\em {\KObShuffA} Algorithm} that
reshuffles $N$ data blocks, $K$ of which are touched. For $K=N$,
the notions of a {\KObShuffA} Algorithm coincides with the original notion of an
{\ObShuffA} Algorithm of~\cite{Ohrimenko2014}.

Before tackling the problem of designing efficient {\KObShuffA} Algorithms,
we revisit the original Oblivious Shuffling problem by providing improved algorithms.
All our algorithms use a {\em cache} in client memory to store blocks downloaded from the server before
they can be uploaded to the server. The main technical difficulty is to show that the cache does not
grow too much.
We first apply this design principle in Section~\ref{sec:fullRoot} by presenting an {\ObShuffA} Algorithm,
$\fullC$,
that uses bandwidth $(4+\epsilon)N$ and client memory $O(\sqrt{N})$ and has negligible failure probability.
For similar client memory usage and error probability, Melbourne Shuffle~\cite{Ohrimenko2014}
uses about 4 times more bandwidth.
We generalize this construction to $\thirdC$ in Section~\ref{sec:fullRecursive}
to any client memory $S=\omega(\log N)$ in which case bandwidth is $O(N\log N/\log S)$ and
probability of failure is still negligible.

We then turn to the design of {{\KObShuffA} Algorithms} for $K\leq N$.
From a high level,
the number $K$ of touched data blocks succinctly describes the difficulty of shuffling the specific data sets.
In the extreme case that no block has been disclosed
(e.g., in an ORAM in which no query has been performed),
clearly no shuffle is required. On the other hand, if all blocks have been touched,
then oblivious shuffling becomes hardest.
All previous oblivious shuffling algorithms have always considered the most difficult scenario and have
reshuffled so to guarantee security as if all the blocks had been touched, even if that was not the case
in the specific application.
Our work is the first to separate the two problems.
In Section~\ref{sec:smallK}, we give a simple {{\KObShuffA} Algorithm}
$\firstC$
for the case in which client memory $S\geq K$. This setting is relevant to ORAM design and,
for every $K$, we obtain an algorithm with bandwidth $2N$.
In Section~\ref{sec:largeK}, we give, for every client memory $S=\omega(\log N)$ and for every $\epsilon$,
algorithm $\kC$ that uses bandwidth $2N+c\cdot(1+\epsilon)K\log_S K$, for constant $c$.
For the special case of $S=\sqrt{K}$, we have algorithm $\ksecondC$ that uses bandwidth $2N+(4+\epsilon) K$.
For every $\epsilon>0$, the algorithms have negligible in $N$ abort probability.

Motivated by the problem of designing bandwidth efficient ORAM, in Section~\ref{sec:dummy} we consider a scenario in which
there are $D$ {\em dummy} blocks, which can be of arbitrary values, in addition to $N$ {\em real} blocks.
It is possible to use any {{\KObShuffA} Algorithm}
in this scenario and just treat the dummy blocks as any other block.
By taking into account instead the fact that the content of the dummy blocks is irrelevant
we present algorithm \dummyAlgoS\ that has bandwidth of $D+2(N+\epsilon)$ blocks for some small $\epsilon>0$
for the case in which $S\geq K$. Applying directly $\firstC$ would result in bandwidth $2(N+D)$ and so we
are saving bandwidth corresponding to $D$ blocks. The savings come at the cost of
a small amount of server computation,

In the table below,
we compare our algorithms with the previous best algorithm for {\ObShuffA},
Melbourne Shuffle~\cite{Ohrimenko2014}.

\begin{figure}[H]
\begin{center}
        \begin{tabular}{ | l | l | l |}
        \hline
        &  Client Storage & Bandwidth\\ \hline
	Melbourne Shuffle~\cite{Ohrimenko2014} & $O(\sqrt{N})$ & $\approx 18N$\\ \hline \hline
	\secondC  & $O(\sqrt{N})$ & $(4+\epsilon)N$\\ \hline
	\thirdC   & $O(S)$        & $O(N\log_S N)$\\ \hline
        \firstC   & $O(K)$        & $2N$\\ \hline
	\ksecondC & $O(\sqrt{K})$ & $2N+(4+\epsilon)K$\\ \hline
	\kC       & $O(S)$        & $2N+(1+\epsilon)O(K\log_S K)$\\ \hline
	\fourthC  & $O(K)$        & $D+(2+\epsilon)N$\\ \hline
        \end{tabular}
\end{center}
\caption{$N$ denotes the number of blocks.
Algorithm \fourthC\ receives $D$ additional dummy blocks, for a total of $N+D$ blocks.
Algorithm \ksecondC\ is obtained from algorithm \kC\ by setting $S=\sqrt{N}$.
For all algorithms, server storage is $c\cdot N$, for small constant $c$.
}
\end{figure}

\section{Definitions}
Our reference scenario is a cloud storage model with a {\em client} that
wishes to outsource the storage of $N$ data blocks of identical sizes to a {\em server} that has storage of capacity $M\geq N$.
In this context, we consider the problem of obliviously shuffling the data blocks.

We assume that the data blocks have been uploaded by the {\em\SetupA} algorithm that takes as input a sequence
$\B=(B_1,\ldots,B_N)$ of $N$ data blocks of identical sizes $B$
and a {\em permutation} $\pi:[N]\rightarrow [N]$.
The \SetupA\ algorithm randomly selects an encryption key $\key$ for a symmetric encryption scheme
and uploads the data blocks encrypted using $\key$ to the server
by storing it in the first $N$ locations of the server storage according to $\pi$;
that is, if $\pi(i)=j\in[N]$, an encrypted copy of the $i$-th data block is stored at the $j$-th location of
the server storage.
Note that $\pi$ is a permutation and each of the $N$ data blocks is uploaded exactly once to the server.

Once the data has been uploaded, an adversary $\adv$ is allowed to query for the position
of a subset $\touched$ of the data blocks and for each queried block $i$,
the value $\pi(i)$ is revealed to $\adv$.
We call the data blocks in $\touched$, the {\em touched} data blocks.

The \ShuffA\ algorithm, instead, takes as input the encryption key $\key$ used to setup the
$N$ blocks, the permutation map $\pi$, the set $\touched$ of touched data blocks and a new permutation $\sigma$.
The task of the {\ShuffA} algorithm is to re-permute the $N$ data blocks stored on the server
according to permutation map $\sigma$.
At each step, the \ShuffA\ algorithm can download a block $i$
to {\em client memory} by specifying the block's current location on the server
or upload a block from client memory to server memory by specifying its destination on the server.
In addition, the \ShuffA\ algorithm can ask the server to perform operation on locally stored data blocks.
We are interested in {\em oblivious} \ShuffA\ algorithms that, roughly speaking,
have the property of hiding information about the content of the data blocks and on $\sigma$,
even to an adversarial algorithm that has partial information on $\pi$ (the set $\touched$)
and observes the blocks downloaded and uploaded by the \ShuffA\ algorithm.

\paragraph{The mechanics of the \ShuffA\ algorithm.}
A \ShuffA\ algorithm receives as input the initial permutation $\pi$, the final permutation $\pi$ and the set $\touched$.
The \ShuffA\ algorithm proceeds in steps and each step can be either
a {\em move} step or a {\em server computation} step.
The state after the $q$-th step is described by a {\em server allocation map}
$\rho_q:[M]\rightarrow[N]\cup\{\perp\}$ and by a {\em client allocation map} $L_q:[S]\rightarrow[N]\cup\{\perp\}$.
Each allocation map specify the block in each of the $M$ server locations and
$S$ client locations, respectively.
More precisely, $\rho_{q}(j)=i$ means that, after the $q$-th step is performed,
the $j$-th server location contains an encryption of the $i$-th data block.
If instead $\rho_{q}(j)=\perp$, then an encryption of a dummy data block is stored at location $j$.
Note that, unlike permutations, the argument of an allocation map is an index of a slot in memory and its value is a block index.
Similarly statements are true for the client allocation map but $L_q(j)=\perp$ denotes an empty client slot.

When a \ShuffA\ algorithm starts,
the server allocation map $\rho_0$ coincides with permutation map $\pi$
on the first $N$ storage location of the $M$ server
memory locations and has dummy blocks on the remaining $N-M$ locations;
that is, for $1\leq i\leq N$, $\rho_0(i)=\pi^{-1}(i)$ and $\rho_0(i)=\perp$, for $i>N$.
Instead $L_0[i]=\perp$ for all $i\in[S]$ (that is, initially, no block is stored in the client's local memory).
At each step, a \ShuffA\ algorithm can perform
either a {\em move} operation or a {\em server computation} operation.
A move operation can be either a {\em download} or an {\em upload} move and they
modify the state as follows.
If the $q$-th move is a download move with {\em source} $s_q$ and {\em destination} $d_q$,
it has the effect of storing an encryption  of block $\rho_{q-1}(s_q)$ stored at server location $s_q$
at location $L_q(d_q)$ of the client memory; that is, the block at location $s_q$
on the server is downloaded, decrypted using $\key$ and re-encrypted by using $\key$ and fresh randomness.
As a consequence, the server allocation map stays the same $\rho_q:=\rho_{q-1}$
and $L_q$ is the same as $L_{q-1}$ with the exception that
$L_q(d_q)=\rho_{q-1}(s_q)$.
If instead, the $q$-th move is an upload move with {\em source} $s_q$ and {\em destination} $d_q$,
it has the effect of uploading the block in client location $s_l$ to server location $d_q$;
that is, the client allocation map stays the same $S_q:=S_{q-1}$ and $\rho_{q}$ differs from
$\rho_{q-1}$ only for the values at $d_q$.
Our algorithms will also use special upload moves with $s_q=\perp$
in which a dummy block (say, a block consisting of all $0$'s) is uploaded to server location $d_q$.
A server computation operation is instead specified by a circuit that takes as input
a subset of the blocks and modifies the blocks stored at the server. As we shall see, this operation
consists of homomorphic operation on ciphertext and can be used to save bandwidth while requiring more
server computation.
can also perform operations on blocks stored on the server in which case
the length of the description of the circuit describing the operation is added to the bandwidth.
We compute the {\em bandwidth} of a \ShuffA\ algorithm using the block size $B$ as a unit of measurement;
thus bandwidth is simply the number of move operations plus
the size of the circuits corresponding to server computation operations divided by $B$.

\paragraph{Efficiency measures.}
Three measures of efficiency can be considered for a \ShuffA\ algorithm:
the total bandwidth $T$,
the amount $S$ of client memory and the amount $M$ of server memory.
Note that oblivious shuffling of $N$ data blocks is trivial for clients with memory $S\geq N$:
download all the $N$ encrypted blocks in some fixed order; decrypt and re-encrypt each block;
finally, upload the newly encrypted blocks to the new locations one by one in some fixed order.

In this paper,
we give oblivious shuffling algorithms that use memory $S=o(N)$ and server memory $M=O(N)$.
In most cases, server memory is cheaper than block transfers,
so we do not try to optimize for the hidden constants of server memory
(which is however small for all our constructions).
Our main objective is to design algorithms with small $T$.

\paragraph{The security notion.}
A {\em transcript} of an execution of a \ShuffA\ algorithm $\Shuffle$
consists of the initial content of the server memory,
the ordered list of the sources of all download moves,
the ordered list of the destinations of all the upload moves as well as the data blocks uploaded with each
move,
and the list of circuits uploaded by the client.
We stress that a transcript only contains the server locations that are involved in each move
(that is the source for the downloads and the destination for the uploads)
but not the client locations so to model the fact that an adversarial server $\adv$
cannot observe where each block is stored when downloaded
and from which client location each uploaded block comes from.

For every sequence of $N$ blocks $\B=(B_1,\ldots,B_N)$, every subset $\touched$ of touched blocks,
and every pair of permutations $(\pi,\sigma)$,
a \ShuffA\ algorithm $\Shuffle$ naturally induces a probability distribution
$\transD_{\Shuffle}(\B,\pi,\sigma,\touched)$ over all possible transcripts.
We capture the notion of a {\em \KObShuffA} algorithm
by the following game $\ShuffG^\adv_\Shuffle$ for \ShuffA\ algorithm $\Shuffle$
between an adversary $\adv$ and the challenger $\ch$.
In the formalization of our notion of security, we allow the adversary $\adv$ to
receive partial information on the starting permutation map $\pi$ to
reflect the fact that the \ShuffA\ algorithm $\Shuffle$ might be part of a larger
protocol whose execution leaks information on $\pi$. More precisely, in our formalization
we allow $\adv$ to choose the initial location on the server of a subset $\touched$ of the $N$ data blocks
and we parametrize the security notion by the cardinality $K$ of the set $\touched$.
The challenger $\ch$ fills in the remaining $N-K$ locations randomly under the constraint that each of the
$N$ blocks appears in exactly one location on the server.
Then, $\adv$ proposes
two sequences, $\B_0$ and $\B_1$, of $N$ blocks
and
two permutations, $\sigma_0$ and $\sigma_1$, and $\ch$ randomly picks $b\in\zu$
and samples a transcript $\transmD^{\Shuffle}(\Enc(K, \B_b),\pi,\sigma_b,\touched)$ according to $\transD^{\Shuffle}(\Enc(K,\B_b),\pi,\sigma_b,\touched)$.
$\adv$ then, on input $\trans$, outputs its guess $b'$ for $b$.
We say that an adversary $\adv$ is {\em $K$-restricted} if
it specifies the location of at most $K$ blocks; that is $|\touched|\leq K$.

\begin{definition}
For shuffle algorithm $\Shuffle$ and adversary $\adv$,
we define game
$\ShuffG^\adv_{\Shuffle}(N,\lambda)$ as follows
\begin{enumerate}
	\item $\adv$ chooses a subset $\touched\subseteq[N]$ and specifies $\pi(i)$ for each $i\in\touched$;
	\item
	$\adv$ chooses two pairs $(\B_0, \sigma_0)$ and $(\B_1, \sigma_1)$
	and sends them to $\ch$;
\item $\ch$ completes the permutation $\pi$ by randomly choosing the values
at the point left unspecified by $\adv$;
\item $\ch$ randomly selects $b\from\{0,1\}$ and sends $\adv$  transcript $\transmD^{\Shuffle}(\Enc(K,\B_b),\pi,\sigma_b,\touched)$,
	drawn according to $\transD^{\Shuffle}(\Enc(K,\B_b),\pi,\sigma_b,\touched)$;
\item $\adv$ on input $\transmD^\Shuffle(\Enc(K,\B_b),\pi,\sigma_b,\touched)$ outputs $b'$;
\end{enumerate}
Output $1$ iff $b=b'$.
\end{definition}

\begin{definition}[{\KObShuffA}]
We say that \ShuffA\ algorithm $\Shuffle$ is a {\em \KObShuffA} algorithm if for all
$K$-restricted probabilistically polynomial time adversaries $\adv$, and for all $N=\poly(\lambda)$
	$$\Pr[\ShuffG^\adv_\Shuffle(N,\lambda)=1]\le\frac{1}{2}+\negl(\lambda).$$
\end{definition}
\noindent
We refer to $N$-Oblivious Shuffling as just Oblivious Shuffling.

\section{Tools}
In this section we review some of the tools we use to prove security of our constructions.
\subsection{Encryption}\label{ind-cpa}
As we previously mentioned, the basic operation of oblivious shuffling
involves either download a block from the server to the client
or uploading a block from the client to the server.
This means that the same block could be downloaded in one step
and subsequently uploaded in a future step.
We wish to prevent the server from linking that the same block was
being downloaded/uploaded at various steps.

To prevent the server from linking data contents, the client can always
decrypt and encrypt each data block with new randomness that is independent
on the input and output permutations. The \indCpa\ game encompasses
the above needs. In simple terms, \indCpa\ states that the encryption
of two
plaintexts are indistinguishable.

\begin{definition}[IND-CPA]
Let $\calA$ be an adversary and $\calC$ consisting of $\Enc$ and $\Dec$ be
the challenger. The following game between $\calA$ and $\calC$ is defined
as the $\indCpaGame^{\calA}_{(\Enc, \Dec)}(\lambda)$ game.

\begin{enumerate}
	\item
	$\calC$ generates private key $\key$ of length $\lambda$;
	\item
	$\calA$ asks for $\poly(\lambda)$ encryptions under $\key$ from $\calC$;
	\item
		$\calA$ submits two distinct plaintexts $p_0$ and $p_1$
		as the challenge;
	\item
		$\calC$ picks secret bit $b$ and sends
		$\Enc(\key, p_b)$
		to $\calA$;
	\item
	$\calA$ asks for $\poly(\lambda)$ encryptions under $\key$ from $\calC$;
	\item
	$\calA$ outputs $b'$;
\end{enumerate}
Output 1 iff $b = b'$.
\end{definition}

\begin{definition}[IND-CPA secure]
We say that the encryption scheme $(\Enc, \Dec)$ is IND-CPA secure if
for all probabilistically-polynomial time adversaries $\calA$,
$$
	\Pr[\indCpaGame^{\calA}_{(\Enc, \Dec)}(\lambda) = 1] \le \frac{1}{2} + \negl(\lambda).
$$
\end{definition}
\noindent
Throughout the rest of this work, we will assume that $(\Enc,\Dec)$ is secure under
\indCpa.

\subsection{Pseudorandom Permutations}
In the problem definition, we state that the input of the {\em Shuffle} problem
includes two permutations, $\pi$ and $\sigma$. In general, storing
true random permutations requires $O(N \log N)$ bits via information theory
lower bounds. However, it is possible to have space-efficient
constructions for pseudorandom permutations. Furthermore, we still wish for
the permutation to be accessible, that is fast to evaluate $\pi(i)$ for any $i$.
For example, we do not want to be required to use $O(N)$ computation to find $\pi(i)$.

One of the first space-efficient pseudorandom permutations was by Black and Rogaway~\cite{Black2000}, which required the storage of only three keys. However,
their scheme only provided security guarantees for up $N^{1/4}$ evaluations.
Work by Morris et al~\cite{crypto-2009-23861} pushed the guarantees up to
$N^{1-\epsilon}$ queries. The construction by Hoang et al~\cite{Hoang2012}
pushed security up to $(1 - \epsilon)N$ queries until the Mix-and-Cut Shuffle~\cite{Ristenpart2013} provided a fully-secure pseudorandom permutation allowing
evaluation on all $N$ possible inputs. The Sometimes-Recurse Shuffle~\cite{Morris2014} the efficiency of the Mix-and-Cut Shuffle allowing evaluations in $O(\log N)$ AES evaluations while only storing a single key.

For any sublinear storage Oblivious Shuffling algorithms to make sense, we will
assume that the input and output permutations $\pi$ and $\sigma$ are pseudorandom permutations with small storage. In practice, the Sometimes-Recurse Shuffle~\cite{Morris2014} would suffice.

\subsection{Proving $K$-Obliviousness for Move-Based \ShuffA\ Algorithms}
Move-based algorithms only perform move operations between the server storage and the client
storage and never ask the server to perform any computation on the encrypted blocks
stored on server storage. For this class of algorithms, to prove obliviousness it is sufficient
to show that for every random $\pi$ and for every $\touched$, the sequence consisting
of the sources of the download moves and of the destination of the upload moves
is independent of $\sigma$ give $\touched$ and $\pi(\touched)$.
More precisely, we define $\transmD^\Shuffle(\Enc(K,\B),\pi,\sigma,\touched)$
as the distribution of the
move transcript $\transm$ obtained from a transcript $\trans$ distributed according to
$\transD^\Shuffle(\Enc(K,\B),\pi,\sigma,\touched)$ by removing the initial encrypted blocks and the encrypted
blocks associated with upload moves.
It is not difficult to prove that if
$\transD^\Shuffle(\Enc(K,\B),\pi,\sigma,\touched)$ is independent of $\sigma$ given
$\touched$ and $\pi(\touched)$
and the encryption scheme is IND-CPA, then $\Shuffle$ is a
{\KObShuffA} algorithm.

\subsection{Probability tools}
We will use the notion of {\em negatively associated} random variables.

\begin{definition}
The random variables $X_1,\ldots,X_n$ are {\em negatively associated} if for every two disjoint
index sets, $I,J\subseteq[n]$,
$$\E[f(X_i, i\in I)\cdot    g(X_j,j\in J)]\leq
  \E[f(X_i,i\in I)]\cdot \E[g(X_j,j\in J)],
$$
for all functions $f$ and $g$ that both non-increasing or both non-decreasing.
\end{definition}
We are going to use the following property of negatively associated random variables. For a proof
see, for example, Lemma 2 of \cite{NA}.
\begin{lemma}
\label{lemma:ndf}
Let $X_1,\ldots,X_n$ be negatively associated random variables. Then,
for non-decreasing functions $f_1,\ldots,f_k$ over disjoint variable
sets $S_1,\ldots,S_k$
	$$\E\left[\prod_{i\in [k]} f_i(S_i)\right]\leq
	\prod_{i\in [k]}\E\left[f_i(S_i)\right].$$
\end{lemma}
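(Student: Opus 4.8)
The plan is to prove the inequality by induction on the number $k$ of functions, using the two-function inequality in the definition of negative association as both the base case and the engine of the inductive step. For $k=1$ the statement is a trivial equality, and for $k=2$ it is exactly the defining property of negatively associated variables, applied to the disjoint index sets underlying $S_1$ and $S_2$ with the two non-decreasing functions $f_1,f_2$.

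For the inductive step, I would assume the claim for any $k-1$ functions and then collapse the first $k-1$ factors into a single function
$$F := \prod_{i\in[k-1]} f_i(S_i),$$
which depends only on the variables indexed by $I := S_1\cup\cdots\cup S_{k-1}$. Since the $S_i$ are pairwise disjoint, $I$ is disjoint from $S_k$, so the definition of negative association applied to the pair $(F,f_k)$ gives
$$\E\left[F\cdot f_k(S_k)\right]\le \E[F]\cdot\E[f_k(S_k)].$$
Bounding the first factor by the induction hypothesis, $\E[F]=\E\left[\prod_{i\in[k-1]}f_i(S_i)\right]\le\prod_{i\in[k-1]}\E[f_i(S_i)]$, and multiplying through then closes the induction, yielding $\E\left[\prod_{i\in[k]}f_i(S_i)\right]\le\prod_{i\in[k]}\E[f_i(S_i)]$.

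The one point that requires care — and the main, if technical, obstacle — is verifying the hypotheses needed to invoke the definition in the grouped step: $F$ must itself be a \emph{non-decreasing} function of the variables in $I$ so that $(F,f_k)$ is a pair of non-decreasing functions over disjoint sets. A product of non-decreasing functions need not be non-decreasing in general (e.g.\ $x\cdot x$), so I would carry the standing assumption that each $f_i$ is \emph{non-negative}, which holds in the intended application where the $f_i$ are of the form $e^{tX_i}$. Under non-negativity, the product of two non-negative non-decreasing functions is again non-negative and non-decreasing, so $F\ge 0$ is non-decreasing by a trivial inner induction on the number of factors, and the outer induction proceeds exactly as described. I would flag this non-negativity condition explicitly, since it is implicit in the cited source and essential to the validity of the grouping argument.
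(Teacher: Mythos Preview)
The paper does not actually prove this lemma: it simply states the result and refers the reader to Lemma~2 of~\cite{NA} for a proof. Your proposal therefore cannot be compared against a proof in the paper, but it is essentially the standard induction argument one finds in that reference, and it is correct. Your flagging of the implicit non-negativity hypothesis is appropriate and is exactly the assumption under which the cited result is stated. One minor quibble: your counterexample ``$x\cdot x$'' is slightly off, since in the lemma the factors depend on \emph{disjoint} coordinates; the relevant failure is rather that $(x,y)\mapsto xy$ is not coordinatewise non-decreasing on all of $\mathbb{R}^2$, which is what actually obstructs the grouping step without non-negativity.
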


We will also use the fact that the Balls and Bins process is negatively
associated (see Section 2.2 from~\cite{NA}).
\begin{theorem}
\label{thm:bbna}
Consider the Balls and Bins process with $m$ balls and $n$ bins.
Let $B_1,\ldots,B_n$ be the number of balls in each of the bins.
Then, $B_1,\ldots,B_n$ are negatively associated.
\end{theorem}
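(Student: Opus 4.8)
The plan is to derive the negative association (NA) of the bin counts from two standard closure properties of NA random variables, after reducing everything to a trivial ``one ball'' base case. For each ball $i\in[m]$ and each bin $j\in[n]$, let $X_{ij}$ be the indicator that ball $i$ lands in bin $j$, so that $B_j=\sum_{i\in[m]} X_{ij}$. The argument then proceeds in three moves: (i) show that for a fixed ball $i$ the row $(X_{i1},\dots,X_{in})$ is NA; (ii) combine the $m$ independent rows into a single NA family over the index set $[m]\times[n]$; (iii) aggregate each bin's column into the count $B_j$.

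For step (i), fix a ball $i$ and note that the indicators $X_{i1},\dots,X_{in}$ are $\zu$-valued and satisfy $\sum_{j}X_{ij}=1$, since the ball lands in exactly one bin. I would prove directly the more general fact that any collection of $\zu$-valued variables $Y_1,\dots,Y_n$ with $\sum_k Y_k=1$ is NA. Fix disjoint index sets $I,J$ and, by symmetry (replacing $f,g$ by $-f,-g$), assume both $f$ and $g$ are non-decreasing. Because at most one variable equals $1$ and $I,J$ are disjoint, the events ``the $1$ lies in $I$'' and ``the $1$ lies in $J$'' are mutually exclusive, so on every outcome at least one of $f,g$ is evaluated at the all-zero argument $\bZero$. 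Expanding $\E[fg]$, $\E[f]$ and $\E[g]$ over the three mutually exclusive cases (the single $1$ lies in $I$, in $J$, or in neither) and writing $p_k=\Pr[Y_k=1]$, one finds that the gap $\E[f]\,\E[g]-\E[fg]$ factors exactly as a product of two terms, each of the form $\sum_k p_k\,(f(\cdot)-f(\bZero))$. Each such term is nonnegative because $f$ is non-decreasing and each single-coordinate argument dominates $\bZero$; hence $\E[fg]\le\E[f]\,\E[g]$ and each row is NA.

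For steps (ii) and (iii) I would invoke the two closure properties of NA. First, the union of mutually independent NA families is again NA; this applies here because distinct balls are thrown independently, so the $m$ rows $(X_{i1},\dots,X_{in})$, each NA by step (i), together form an NA family. Second, if one partitions an NA family into disjoint groups and replaces each group by a non-decreasing function of its members, the resulting variables remain NA. Taking the $j$-th group to be column $j$, namely $\{X_{ij}:i\in[m]\}$, and the aggregating function to be the (non-decreasing) sum, yields that $B_1,\dots,B_n$ are NA, which is the claim.

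The main obstacle is the base case (i): the two closure properties are standard bookkeeping, but the combinatorial content sits in verifying the NA inequality for the sum-to-one indicator vector. The care needed is in handling the case analysis over arbitrary disjoint $I,J$ and, crucially, in checking that the gap $\E[f]\E[g]-\E[fg]$ really does factor into two nonnegative pieces rather than merely being bounded term by term. A useful sanity check is that the argument never uses uniformity of the bin probabilities, so it in fact proves NA for any single-ball distribution over the $n$ bins and even for balls that are not identically distributed, which is exactly the robustness the later applications will want.
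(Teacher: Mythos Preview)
Your argument is correct: the factoring $\E[f]\E[g]-\E[fg]=\bigl(\sum_{k\in I}p_k(f(e_k)-f(\bZero))\bigr)\bigl(\sum_{k\in J}p_k(g(e_k)-g(\bZero))\bigr)\ge 0$ goes through exactly as you describe, and the two closure properties you invoke for steps (ii) and (iii) are precisely the ones the paper itself relies on elsewhere (Proposition~7.1 and Lemma~2 of~\cite{NA}). Note, however, that the paper does not give its own proof of this theorem at all: it simply quotes the result and refers the reader to Section~2.2 of~\cite{NA}. Your write-up is the standard proof found there, so there is nothing to compare; you have supplied what the paper outsources.
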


We use the following theorem from Queuing Theory (see~\cite{Sanders:2000} for a proof).
\begin{theorem}\label{thm:queuep}
Let $\mathsf{Q}$ be a queue with batched arrival rate $1-\epsilon$ and departure rate $1$ and let $q_t$ be the
size of the queue after $t$ batches of arrival.
Then, for all $\epsilon > 0$, $\E[e^{\epsilon q_t}]\leq 2$.
\end{theorem}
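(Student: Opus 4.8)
The plan is to prove the stronger, time-uniform statement by a Lyapunov (drift) argument on the moment generating function $e^{\epsilon q_t}$, rather than by computing the stationary distribution of $\Q$. First I would make the dynamics explicit: writing $A_t$ for the (nonnegative, i.i.d.) size of the $t$-th arrival batch, with $\E[A_t]=1-\epsilon$, and a single departure per batch, the queue obeys the Lindley recursion $q_t=(q_{t-1}+A_t-1)^+$ with $q_0=0$. I would then show by induction on $t$ that $\E[e^{\epsilon q_t}]\le 2$, the base case $t=0$ being immediate since $e^{\epsilon q_0}=1$.

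The heart of the argument is the one-step drift estimate, obtained by conditioning on $q_{t-1}$ and splitting on whether the queue is empty. When $q_{t-1}\ge 1$ the positive part is inactive, so $q_t=q_{t-1}+A_t-1$ and
$$\E\!\left[e^{\epsilon q_t}\mid q_{t-1}\right]=e^{\epsilon q_{t-1}}\cdot e^{-\epsilon}\,\E\!\left[e^{\epsilon A_t}\right]=\rho\, e^{\epsilon q_{t-1}},\qquad \rho:=e^{-\epsilon}\,\E\!\left[e^{\epsilon A_t}\right].$$
When $q_{t-1}=0$ the contribution is a bounded constant $C:=\E[e^{\epsilon(A_t-1)^+}]$. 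Averaging the two cases gives a contraction-plus-constant recursion $\E[e^{\epsilon q_t}]\le \rho\,\E[e^{\epsilon q_{t-1}}]+C$, and iterating from $q_0=0$ yields $\E[e^{\epsilon q_t}]\le \rho^t+C\,(1-\rho^t)/(1-\rho)$, a bound uniform in $t$. The remaining work is to verify that the drift is genuinely contractive, $\rho\le 1$, and to check that $C$ and $\rho$ combine to a value at most $2$.

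An alternative route I would keep in reserve is the reflection identity for the Lindley recursion, $q_t \stackrel{d}{=}\max_{0\le k\le t}S_k$ with $S_k=\sum_{i=1}^k(A_i-1)$, which turns the claim into a maximal inequality for the process $e^{\epsilon S_k}$ (a supermartingale precisely when $\rho\le 1$). This representation makes the exponential tail of $q_t$ transparent and hands off cleanly to a union/Doob bound, which is useful as an independent sanity check on the constant.

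The main obstacle is the drift condition at the specific exponent $\theta=\epsilon$: one needs $\rho=e^{-\epsilon}\E[e^{\epsilon A_t}]\le 1$, i.e.\ $\E[e^{\epsilon A_t}]\le e^{\epsilon}$, which does not follow from $\E[A_t]=1-\epsilon$ alone but requires controlling the tail of the batch distribution $A_t$ (for bounded or light-tailed batches a second-order expansion of both sides in $\epsilon$ gives $\rho=1-\Theta(\epsilon^2)<1$). Tightening this estimate together with the boundary constant $C$ so that $\rho^t+C(1-\rho^t)/(1-\rho)$ lands at exactly $2$ is the delicate part; this is precisely the regime where the quantitative batch model of \cite{Sanders:2000} is needed, and I would import its sharp drift bound rather than re-derive the exact constant from scratch.
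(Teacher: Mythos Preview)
The paper does not actually prove this theorem: it is stated as a tool from queueing theory and the reader is referred to \cite{Sanders:2000} for the proof. So there is no ``paper's own proof'' to compare against beyond the bare citation.

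Your proposal goes considerably further than the paper does, sketching the standard Lyapunov/drift route via the Lindley recursion $q_t=(q_{t-1}+A_t-1)^+$ and the contraction-plus-constant inequality $\E[e^{\epsilon q_t}]\le \rho\,\E[e^{\epsilon q_{t-1}}]+C$. This is the right framework, and your identification of the real difficulty is accurate: the bound $\rho=e^{-\epsilon}\E[e^{\epsilon A_t}]\le 1$ at the specific exponent $\theta=\epsilon$ is not a consequence of $\E[A_t]=1-\epsilon$ alone, and the statement as written in the paper does not specify the batch distribution. In the paper's applications (Lemma~\ref{lem:cache} and its analogue for $\RSpray$) the batches are balls-in-bins counts, hence binomial with mean $\approx 1-\epsilon$, which is light-tailed enough for the drift condition and the constant $2$ to go through; but you are right that this has to be imported from the quantitative model in \cite{Sanders:2000} rather than derived from the theorem's hypotheses as stated. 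In short, your approach is sound and strictly more informative than what the paper offers, and your caveat about needing the reference for the sharp constant matches exactly what the paper itself does.
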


We will also use concentration inequalities over the sum of independent
binary random variables.
\begin{theorem}[Chernoff Bounds]
\label{thm:chernoff}
Let $X = X_1+\ldots+X_n$, where $X_i = 1$ with probability $p_i$
and $X_i = 0$ with probability $1-p_i$ and all $X_i$ are independent.
Let $\mu = \E[X] = p_1 + \ldots + p_n$. Then
\begin{enumerate}
\item
$\Pr[X \ge (1+\delta)\mu] \le \exp(-\frac{\delta^2 \mu}{2+\delta})$
\item
$\Pr[X \le (1-\delta)\mu] \le \exp(-\frac{\delta^2 \mu}{2})$
\end{enumerate}
\end{theorem}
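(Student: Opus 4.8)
The plan is to prove both bounds by the standard exponential-moment (Chernoff) method. For a threshold $a$ and a free parameter $t>0$, I would apply Markov's inequality to the nonnegative random variable $e^{tX}$ to get $\Pr[X\ge a]\le \E[e^{tX}]/e^{ta}$, and symmetrically $\Pr[X\le a]\le \E[e^{-tX}]/e^{-ta}$ for the lower tail. Independence of the $X_i$ then factors the moment generating function as $\E[e^{tX}]=\prod_i\E[e^{tX_i}]$, reducing everything to a per-variable estimate and a one-parameter optimization.

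First I would bound each factor. Since $X_i$ is Bernoulli, $\E[e^{tX_i}]=1+p_i(e^{t}-1)$, and the elementary inequality $1+x\le e^{x}$ applied with $x=p_i(e^{t}-1)$ gives $\E[e^{tX_i}]\le\exp\bigl(p_i(e^{t}-1)\bigr)$. Multiplying over $i$ and using $\mu=\sum_i p_i$ yields the master bound $\E[e^{tX}]\le\exp\bigl(\mu(e^{t}-1)\bigr)$, valid for every real $t$. Substituting into Markov gives $\Pr[X\ge(1+\delta)\mu]\le\exp\bigl(\mu(e^{t}-1)-t(1+\delta)\mu\bigr)$ for the upper tail, and the analogous expression with $t\mapsto -t$ for the lower tail.

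Next I would optimize the free parameter. Setting the derivative of the exponent in $t$ to zero gives $t=\ln(1+\delta)$ for the upper tail and $t=\ln\frac{1}{1-\delta}$ for the lower tail (both positive for $\delta\in(0,1)$), producing the textbook expressions
\[
\Pr[X\ge(1+\delta)\mu]\le\left(\frac{e^{\delta}}{(1+\delta)^{1+\delta}}\right)^{\!\mu},
\qquad
\Pr[X\le(1-\delta)\mu]\le\left(\frac{e^{-\delta}}{(1-\delta)^{1-\delta}}\right)^{\!\mu}.
\]

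The final step, which I expect to be the only real obstacle, is simplifying these to the stated closed forms; this reduces to two scalar inequalities. For the upper tail I need $(1+\delta)\ln(1+\delta)-\delta\ge \delta^{2}/(2+\delta)$, and for the lower tail $\delta+(1-\delta)\ln(1-\delta)\ge \delta^{2}/2$. I would prove each by defining the difference as a function of $\delta$, noting it vanishes at $\delta=0$, and showing its derivative is nonnegative on the relevant range. For the lower tail this is clean: with $g(\delta)=\delta+(1-\delta)\ln(1-\delta)-\delta^{2}/2$ one has $g'(\delta)=-\ln(1-\delta)-\delta$ with $g'(0)=0$ and $g''(\delta)=\delta/(1-\delta)\ge0$, so $g$ increases from $0$. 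The upper-tail case is handled the same way but with a slightly more delicate derivative computation, since the term $\delta^{2}/(2+\delta)$ differentiates less cleanly. These are purely calculus facts and carry the only content beyond the generic Chernoff machinery.
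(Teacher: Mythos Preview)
Your argument is correct and is exactly the standard exponential-moment derivation of these bounds. Note, however, that the paper does not supply its own proof of this theorem: it is listed in the ``Probability tools'' subsection as a known concentration inequality and is simply invoked later (e.g., in Lemmas~\ref{lem:abort} and~\ref{lemma:fourth_fail}). So there is nothing to compare against; your write-up would serve as a complete proof where the paper gives none.

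One small suggestion on the step you flag as the ``only real obstacle.'' For the upper tail you can avoid the messier direct differentiation by first proving the cleaner scalar inequality
\[
\ln(1+\delta)\ \ge\ \frac{2\delta}{2+\delta}\qquad(\delta\ge 0),
\]
whose difference $h(\delta)=\ln(1+\delta)-\tfrac{2\delta}{2+\delta}$ satisfies $h(0)=0$ and $h'(\delta)=\dfrac{\delta^{2}}{(1+\delta)(2+\delta)^{2}}\ge 0$. Multiplying through by $(1+\delta)$ and subtracting $\delta$ then gives $(1+\delta)\ln(1+\delta)-\delta\ge \delta^{2}/(2+\delta)$ directly, with no second- or third-derivative bookkeeping. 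Your lower-tail calculus is already clean as written.
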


\section{\ObShuffA\ with $O(\sqrt{N})$ Client Memory}
\label{sec:fullRoot}
In this section we describe $\fullC$, an {\ObShuffA} algorithm that uses $O(\sqrt{N})$
client storage except with negligible probability.
More precisely, for every $\epsilon>0$, we describe an algorithm $\fullC_\epsilon$
uses $(3+\epsilon/2)N$ server storage,
bandwidth $(4+\epsilon)N$ and, except with negligible in $N$ probability,
$\delta_\epsilon\sqrt{N}$ client storage, for some constant
$\delta_\epsilon$ that depends solely from $\epsilon$.
Whenever $\epsilon$ is clear from the context or immaterial,
we will just call the algorithm $\fullC$.

We start by describing a simple algorithm that does not work but it gives a general idea of
how we achieve shuffling using small client memory.

For permutations $(\pi,\sigma)$, the input is
an array $\source[1,\ldots,N]$ of $N$ ciphertexts stored on server storage.
An encryption of block $B_l$ is stored as $\source[\pi(l)]$, for $l=1,\ldots,N$.
The expected output is an array $\dest[1,\ldots,N]$ such position $\dest[\sigma(l)]$, contains an
encryption of $B_l$.
The $N$ indices of $\dest$ are randomly partitioned into
$q:=\sqrt{N}$ {\em destination buckets}, $\dInd_1,\ldots,\dInd_q$,
by assigning each $j\in[N]$ to a uniformly chosen destination bucket.
Then the indices of array $\source$ are partitioned into $s:=\sqrt{N}$ {\em groups}
of $N/s=\sqrt{N}$ indices
with the $j$-th group consisting of indices in the interval $[(j-1)N/s+1,\ldots,j\cdot N/s]$,
for $j=1,\ldots,s$.
On average, each bucket has $q=s$ indices and
exactly one index from each group is assigned by $\sigma$ to each destination bucket.
If this were actually the case,
then the shuffle could be easily performed as follows using only $s$ blocks of client memory.
The blocks in each group of indices of $\source$ are downloaded one at a time in client memory.
When the $j$-th group has been completely downloaded,
exactly one block is uploaded to the $j$-th position of each destination bucket.
After all groups have been processed, each destination bucket
contains all the blocks albeit in the wrong order.
This can then be fixed easily by entirely downloading each destination bucket,
one at a time, to client memory and uploading the blocks in the correct order.

Unfortunately, it is unlikely that indices will distribute nicely over destination buckets.
Algorithm $\fullC$ is similar
except that it does not expect each source group to contain exactly one block for each destination
bucket and, for the few failures,
it stores the extra blocks in a cache stored at the client's private storage
with the hope that there will never be too many extra blocks.
It turns out that, for the above statement to be true,
we need a little bit of slackness that we achieve by slightly increasing the number of partitions
of $\dest$ to $q=(1 + \epsilon/2)\sqrt{N}$, for some $\epsilon>0$.
As we shall see, the algorithm of Section~\ref{sec:fullRecursive} will adopt the same framework but for
technical reasons we will create slackness in a different way.
Let us now proceed more formally.

\subsection{$\fullC$ Description}
For $\epsilon>0$,
we next describe algorithm $\fullC_\epsilon$ for input $(\pi,\sigma)$.
Algorithm $\fullC_\epsilon$ receives as inputs the permutations $\pi$ and $\sigma$
and the {\em source array} $\source[1,\ldots,N]$ of $N$ ciphertexts
such that an encryption of block $B_l$ is stored as $\source[\pi(l)]$, for $l=1,\ldots,N$.
$\thirdC$ outputs a {\em destination array} $\dest[1,\ldots,N]$ of $N$ ciphertexts
such that an encryption of block $B_l$ is stored as $\dest[\sigma(l)]$, for $l=1,\ldots,N$.

The $N$ indices of $\source$ are partitioned by the algorithm into
$s:=\sqrt{N}$ {\em groups} $\blockSub_1,\ldots,\blockSub_s$, each
of size $N/s=\sqrt{N}$, with  $\blockSub_j$ containing indices in the interval
$[(j-1)\cdot s+1,\ldots, j\cdot s]$.
The $N$ indices of the destination array $\dest$
are randomly partitioned by the algorithm into
$q:=(1+\epsilon/2)\sqrt{N}$ {\em destination buckets},
$\dInd_1,\ldots,\dInd_q$, by assigning each $i\in [N]$ to a randomly chosen destination bucket.
A destination bucket is expected to contain $N/q\approx(1-\epsilon/2)\sqrt{N}$ locations.
In addition, for each destination bucket, the algorithm initializes
$q$ {\em temporary} arrays $\temp_1,\ldots,\temp_q$ each of size $s$ on the server
and $q$ caches $\Q_1,\ldots,\Q_{q}$ on the client.
The working of the algorithm is divided into two phases: $\Spray$ and $\Recalibrate$.

The $\Spray$ phase consists of $s$ rounds, one for each group.
In the $i$-th $\Spray$ round,
the algorithm downloads all ciphertexts in the $i$-th group $\blockSub_i$.
Each downloaded ciphertext is decrypted, thus giving a block, say $B_l$, that is
re-encrypted with fresh randomness and stored in the cache corresponding to the destination
bucket containing $\sigma(l)$, that is $B_l$'s final destination.
After all $s$ blocks of $\blockSub_i$ have been downloaded and assigned to the caches,
the algorithm uploads one block from $\Q_j$, for $j=1,\ldots,q$,
to the $i$-th position of temporary array $\temp_j$.
If a queue is empty a {\em dummy} block containing an encryption of $0$'s is uploaded instead.

Note that after the $\Spray$ phase has completed every block has been
downloaded from the source array
and some have been uploaded to a temporary array and
some are still in the caches.
Nonetheless, each temporary array contains exactly $s$ ciphertexts and
all non-dummy blocks whose encryption is in $\temp_j$ are assigned by $\sigma$ to a position in $\dInd_j$.

The $\Recalibrate$ phase has a round for each destination bucket.
In the round for destination bucket $\dInd_j$,
the algorithm downloads all blocks from temporary array $\temp_j$
in increasing order.
Each block is decrypted, dummy blocks are discarded and the remaining blocks are re-encrypted using fresh
randomness.
Now, all blocks that belong in $\dInd_j$ are in client memory and the algorithm
uploads them to the correct position in $\dInd_j$ according to $\sigma$.
We present pseudocode of the algorithm in Appendix~\ref{sec:second_code}.

\subsection{Properties of $\fullC$}
It is easy to see that $\fullC_\epsilon$ uses $(3+\epsilon/2)N$ blocks of
server memory and $(4+\epsilon)N$ blocks of bandwidth.
Next, we are going to show that, for every $\epsilon>0$ there exists $\delta$ such that
the probability that at any given time the total size of the caches exceeds
$\delta\cdot\sqrt{N}$ is negligible.

We denote by $l_{i,j}$ the size of $\Q_j$ after processing
$\blockSub_i$. Thus, we are interested in bounding
$l_i=l_{i,1}+\ldots+l_{i,q}$ for all rounds
$i=1,\ldots,s$.

\begin{lemma}
	\label{lem:cache}
For every $\epsilon > 0$, there exists $\delta$ such that $\Pr[l_i>\delta q] < e^{-q}$.
\end{lemma}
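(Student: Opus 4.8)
\emph{Proof plan.} The plan is to view each cache $\Q_j$ as a queue and $l_{i,j}$ as its length after $i$ rounds, and then control the total length $l_i=\sum_{j=1}^q l_{i,j}$ by the exponential moment method. In the $i$-th $\Spray$ round the $s=\sqrt N$ blocks of group $\blockSub_i$ are downloaded; a block $B_l$ joins $\Q_j$ exactly when $\sigma(l)$ lies in the $j$-th destination bucket, after which one block leaves each nonempty queue (a dummy departure if it is empty). Writing $a_{i,j}$ for the number of round-$i$ blocks entering $\Q_j$, the queue obeys the Lindley recursion $l_{i,j}=\max(l_{i-1,j}+a_{i,j}-1,0)$: a batched arrival of mean $\E[a_{i,j}]=s/q=1/(1+\epsilon/2)$ per round and a departure rate of $1$. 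Setting $\epsilon'=\frac{\epsilon/2}{1+\epsilon/2}$, so that the arrival rate equals $1-\epsilon'<1$, I would bound $\Pr[l_i>\delta q]$ by Markov's inequality applied to $e^{\epsilon' l_i}=\prod_{j=1}^q e^{\epsilon' l_{i,j}}$.

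The central step is to factor this moment generating function. First I would argue that the arrival variables $\{a_{i,j}\}$ form a negatively associated family. For each fixed round $i$ the $s$ blocks of $\blockSub_i$ are thrown independently and uniformly into the $q$ buckets (since $\pi$ and $\sigma$ are random and the bucket partition of $\dest$ is chosen independently), so $(a_{i,1},\dots,a_{i,q})$ is exactly the Balls and Bins process and hence negatively associated by Theorem~\ref{thm:bbna}. Across different rounds the arrivals concern disjoint sets of blocks and are therefore independent, and the union of independent negatively associated families is again negatively associated. Since the recursion above exhibits $l_{i,j}$ as a coordinatewise non-decreasing function of the arrivals $(a_{r,j})_{r\le i}$ to bucket $j$ alone, the functions $e^{\epsilon' l_{i,j}}$ are non-decreasing over the pairwise disjoint variable sets $\{a_{r,j}:r\le i\}$. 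Lemma~\ref{lemma:ndf} then yields $\E[e^{\epsilon' l_i}]\le\prod_{j=1}^q\E[e^{\epsilon' l_{i,j}}]$.

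To bound each factor I would invoke Theorem~\ref{thm:queuep}: for a fixed $j$ the batches $a_{i,j}$ are i.i.d.\ with mean $1-\epsilon'$ against a departure rate $1$, so $\E[e^{\epsilon' l_{i,j}}]\le 2$, whence $\E[e^{\epsilon' l_i}]\le 2^q$. Markov's inequality then gives $\Pr[l_i>\delta q]\le 2^q e^{-\epsilon'\delta q}=\bigl(2e^{-\epsilon'\delta}\bigr)^q$, and choosing any $\delta>(1+\ln 2)/\epsilon'$ makes the base strictly smaller than $1/e$, so $\Pr[l_i>\delta q]<e^{-q}$, as claimed.

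The main obstacle is the moment generating function factorization: one must check that the per-round, per-bucket arrival counts genuinely assemble into a negatively associated family (Balls and Bins within a round, independence across rounds, closure under independent union) and that each queue length is a monotone function of only its own bucket's arrivals, so that the disjoint-variable hypothesis of Lemma~\ref{lemma:ndf} is met. A secondary check is that the induced batched process satisfies the hypotheses of Theorem~\ref{thm:queuep}, which holds because for each fixed queue the batch sizes are i.i.d.\ with mean $1-\epsilon'$.
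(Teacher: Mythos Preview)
Your proposal is correct and follows essentially the same argument as the paper: identify the per-round, per-bucket arrival counts as a Balls-and-Bins process (hence negatively associated by Theorem~\ref{thm:bbna}) within each round, use independence across rounds to extend negative association to the full family, observe that each $l_{i,j}$ is a non-decreasing function of its own bucket's arrivals so that Lemma~\ref{lemma:ndf} factorizes $\E[e^{\epsilon' l_i}]$, and then apply Theorem~\ref{thm:queuep} to bound each factor by $2$ before finishing with Markov. Your bookkeeping of the drift parameter $\epsilon'=\frac{\epsilon/2}{1+\epsilon/2}$ and the threshold $\delta>(1+\ln 2)/\epsilon'$ is in fact slightly more explicit than the paper's.
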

\begin{proof}
Let $X_{i,j}$ for all $i\in[s]$ and $j\in[q]$
be the number of blocks that go from $\blockSub_i$ into $\Q_j$.
For any fixed $i$, the set $X_{i,1},\ldots,X_{i,q}$ is a Balls
and Bins process with
$q$ bins.
Therefore, by Theorem~\ref{thm:bbna}, $X_{i,1},\ldots,X_{i,q}$ are negatively associated.
For any $i \ne j$, the sets of variables $X_{i,1},\ldots,X_{i,q}$
and $X_{j,1},\ldots,X_{j,q}$ are mutually independent.
By Proposition 7.1 of~\cite{NA}, the sets are also negatively associated.
Note, note that each $l_{i,j}$ is a non-decreasing function of the set of variables
$X_{1,j},\ldots,X_{i,j}$. Therefore, for any $j \ne k$, $l_{i,j}$ and
$l_{i,k}$ are non-decreasing functions over a disjoint set of negatively
associated variables.

By Markov's Inequality,
we get that $\Pr[l_i > \delta q] = \Pr[e^{\epsilon l_i} > e^{\epsilon \delta q}]<
	e^{-\epsilon \delta q} \E[e^{\epsilon l_i}]$.
For each $\Q_j$, $j = 1,\ldots,q$, the batched arrival rate
is $(N/s)/q \approx (1-\epsilon)$ and the departure
rate is $1$. So,
$$
	\E[e^{\epsilon q_i}] = \E\left[ \prod\limits_{j=1}^{q} e^{\epsilon l_{i,j}} \right]\leq
	\prod\limits_{j=1}^{q} \E[e^{\epsilon l_{i,j}}] \le 2^q.
$$
The second inequality follows from Theorem~\ref{lemma:ndf}
since $l_{i,j}$ are non-decreasing functions
over disjoint sets of negatively associated variables.
The last inequality is by Theorem~\ref{thm:queuep}.
Therefore, $\Pr[l_i > \delta q]<e^{-\left(\delta \frac{\epsilon}{1+\epsilon} - \ln 2\right)q}$.
The lemma follows when $\delta>(1+1/\epsilon) \ln 2e$.
\end{proof}
Note that, since $q=(1+\epsilon)\sqrt{N}$, the probability that any given time
the total size of the caches exceeds $\delta_\epsilon q$ is negligible in $N$.
We also remark that the $\Spray$ phase can be generalized to any two values of $s$ and $q$ such that
$s\cdot q=(1+\epsilon)N$ in which case memory $O(q)$ is used except with probability exponentially small in $q$.
This fact will be used in Section~\ref{sec:fullRecursive}.
Next we prove obliviousness.

\begin{lemma}\label{lemma:CSind}
For every $\epsilon > 0$,  $\fullC$ is an {\ObShuffA} algorithm.
\end{lemma}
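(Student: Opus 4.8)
The plan is to reduce the obliviousness of $\fullC$ to the criterion established in the subsection on Move-Based $\ShuffA$ Algorithms. Since $\fullC$ performs only download and upload moves and never asks the server to compute on ciphertexts, it is a move-based algorithm. Hence, by the sufficient condition stated there, it suffices to show that the move transcript $\transm$—the ordered list of download sources together with the ordered list of upload destinations—is, for every random $\pi$ and every $\touched$, independent of $\sigma$ given $\touched$ and $\pi(\touched)$. Once this distributional independence is in hand, IND-CPA security of $(\Enc,\Dec)$ closes the gap to full $\ShuffG$-indistinguishability automatically.

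\textbf{Analyzing the move transcript phase by phase.} First I would write out exactly which server locations appear in the transcript. In the $\Spray$ phase, round $i$ downloads every index in the source group $\blockSub_i=[(i-1)s+1,\ldots,is]$ in a fixed order, then uploads to the $i$-th slot of each temporary array $\temp_1,\ldots,\temp_q$ in the fixed order $j=1,\ldots,q$. Crucially, a dummy block of encrypted zeros is uploaded whenever a cache $\Q_j$ is empty, so the \emph{sequence of upload destinations is the same regardless of how many real blocks sit in each cache}. Thus the entire $\Spray$ transcript—both the download sources and the upload destinations—is a fixed deterministic sequence depending only on $N$, $s$, and $q$, and in particular is independent of $\sigma$ (and of $\pi$, $\touched$) entirely. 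In the $\Recalibrate$ phase, round $j$ downloads all of $\temp_j$ in increasing order and then uploads the recovered blocks to their positions in the destination bucket $\dInd_j$ dictated by $\sigma$. The download sources here are again fixed. The only transcript components that could conceivably depend on $\sigma$ are therefore (i) the partition of $[N]$ into destination buckets $\dInd_1,\ldots,\dInd_q$, and (ii) the upload destinations in $\Recalibrate$.

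\textbf{The key independence argument.} The partition into destination buckets is chosen by assigning each index of $\dest$ to a uniformly random bucket, independently of $\sigma$ by construction, so (i) contributes nothing that depends on $\sigma$. For (ii), I would argue that the multiset of upload destinations in $\Recalibrate$ is simply the set $\{1,\ldots,N\}$ partitioned across the buckets according to the fixed random bucket assignment—the blocks assigned by $\sigma$ to bucket $\dInd_j$ are uploaded precisely to the positions comprising $\dInd_j$. Since every real block is written to its correct $\sigma$-position and the bucket boundaries are fixed by the $\sigma$-independent random partition, the \emph{set} of destination positions touched in round $j$ equals $\dInd_j$ regardless of $\sigma$; the transcript reveals only these positions, in the order the blocks emerge, and writing out positions in increasing order (or the canonical order fixed by the pseudocode) makes even the ordering $\sigma$-independent. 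The one subtlety I would check carefully is that which block lands in which bucket is governed by $\sigma$ composed with the random bucket map, so conditioned on the bucket map the induced distribution over \emph{which destinations appear together} must be shown to be the same for $\sigma_0$ and $\sigma_1$; this follows because the random bucket partition is chosen uniformly and independently, so for any fixed $\sigma$ the distribution over destination-bucket contents is the uniform balls-into-bins distribution, identical across choices of $\sigma$.

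\textbf{The main obstacle} is handling the role of $\touched$ and $\pi(\touched)$ correctly: the adversary fixes $\pi$ on $\touched$ and sees $\pi(\touched)$, so I must confirm that the $\Spray$ download sources (which are the fixed source indices, hence reveal no new $\pi$-information beyond what $\touched$ already pins down) and the $\Recalibrate$ destinations (governed by $\sigma$) remain independent of $\sigma$ \emph{even after conditioning} on this side information. Because the $\Spray$ transcript is entirely deterministic and the $\Recalibrate$ destination sets are determined solely by the $\sigma$-independent random bucket partition, the conditioning does not couple $\sigma$ to the transcript. I would then invoke the move-based sufficient condition together with IND-CPA—noting that every downloaded block is re-encrypted with fresh randomness before upload, so ciphertext contents leak nothing about block identities—to conclude that $\fullC$ is an $\ObShuffA$ algorithm, completing the proof.
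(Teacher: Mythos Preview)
Your proposal is correct and follows essentially the same route as the paper: both reduce to the move-based sufficient condition and then verify phase by phase that the $\Spray$ download sources and upload destinations are deterministic, that the $\Recalibrate$ downloads are deterministic, and that the $\Recalibrate$ upload destinations are exactly the indices of the $\sigma$-independent random partition $\dInd_1,\ldots,\dInd_q$ written in increasing order. Your treatment is simply more thorough---explicitly addressing the conditioning on $\touched,\pi(\touched)$, the role of dummy uploads in $\Spray$, and the IND-CPA step---where the paper's proof is terse, but the underlying argument is identical.
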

\begin{proof}
It is sufficient to show that the accesses to server storage,
that is the sources of the download moves and the destinations
of the upload moves, are independent of $\sigma$, for random $\pi$.

In the $j$-th round of the $\Spray$ phase, downloads are performed from
$\blockSub_j$ and uploads have
as destination the $j$-th slot of each $\temp_i$.
Clearly these moves are independent of $\sigma$.

In the $j$-th round of the $\Recalibrate$ phase,
the downloads of $\temp_j$ occur in increasing order, independent of $\sigma$.
The uploads have as destination the entries of $\dInd_j$ in increasing order which is
clearly independent of $\sigma$.
\end{proof}

\section{\ObShuffA\ with Smaller Client Memory}
\label{sec:fullRecursive}
In this section we generalize algorithm $\fullC$ to $\thirdC$.
Specifically, for $S = \omega(\log N)$, we provide
an {\ObShuffA} algorithm that uses $O(S)$ client memory
and $O(N \log_S N)$ bandwidth.

When $\fullC$ completes the $\Spray$ phase,
all the data blocks that according to $\sigma$ belong to a location
in destination bucket $\dBuck_i$
are either on the server in $\temp_i$ or in client memory in $\Q_i$.
The $i$-th $\Recalibrate$ step then takes the blocks from each $\temp_i$ and $\Q_i$,
and arranges them so that they all end up in the right
position according to $\sigma$ in $\dBuck_i$.
The $i$-th $\Recalibrate$ step needs memory exactly equal to the size of
$\temp_i$. The key to a {\ObShuffA} that uses less client memory resides
in a $\Spray$ phase
that uses smaller memory while producing smaller $\temp_i$.
We call this new method as $\RSpray$.

\subsection{Description of $\RSpray$}
Algorithm $\RSpray$ is similar to $\Spray$ described in Section~\ref{sec:fullRoot} but
it achieves the needed slackness in a different way. Specifically,
the slackness is needed
to ensure that the arrival rate to each cache is smaller than the departure rate by at least a constant and
this is obtained by making the number $q$ of caches larger than the number of ciphertexts in an input group
by a constant factor.
$\RSpray$ instead takes a dual approach: the number of caches is equal to the number of ciphertexts
in an input group but it assumes that each group has a constant fraction of dummy ciphertexts
that need not to be added to the queue.
There is one extra subtle point.
Since we need the dummy to be uniformly distributed over the groups, $\RSpray$
partitions the input into random buckets.
Let us proceed more formally.

Algorithm $\RSpray$ receives as input {\em source array} $\inputR[1,\ldots,\inputN]$ of $\inputN$ ciphertexts
and a set $\destInd\subseteq[N]$ of $\destIndN$ {\em destination indices}.
$\inputR$ contains the encryptions of all blocks $l$ with $\sigma(l)\in\destInd$ as well as
the encryptions of some dummy blocks. Clearly, $\inputN\geq\destIndN$.
$\inputR$ is stored on the server and $\destInd$ is a private input to $\RSpray$.

$\RSpray$ is parametrized by the size $S$ of the client storage
and outputs $q:=S$ {\em temporary arrays}, $\temp_1,\ldots,\temp_q$, of ciphertexts and
a partition of set $\destInd$ into $q$ {subsets of destination indices}
$\destInd_1,\ldots,\destInd_q$.
The arrays and the subsets of the partition are linked by the following property:
if $\sigma(l)\in\destInd_j$ then one of the ciphertexts of $\temp_j$ is
an encryption of block $l$.

We next formally describe $\RSpray$.
Algorithm $\RSpray$ partitions $\destInd$ into $q$ {subsets of destination buckets},
$\destInd_1,\ldots,\destInd_q$, by assigning each index in $\destInd$
to a randomly and uniformly selected subset of the $q$.
Each subset $\destInd_j$ is associated with a temporary array $\temp_j$ stored on the server
and a cache $\Q_j$ stored on the client.
Initially, both $\temp_j$ and $\Q_j$ are empty and $\temp_j$ will grow to contain exactly
$s:= \inputN/q$ ciphertexts.
The algorithm then partitions $\inputR$ into $s$
{\em source buckets}, $\sourceB_1,\ldots,\sourceB_s$ that are stored on the server.
Each ciphertext of $\inputR$ is randomly assigned to one of the $s$ source buckets uniformly at random.

Now, just as $\Spray$, algorithm $\RSpray$ has $s$ {\em spray} rounds, one for each source bucket.
The spray round for a source bucket also terminates by uploading exactly
one ciphertext from each cache $\Q_j$ to the corresponding temporary bucket $\temp_j$.
If a cache happens to be empty, a dummy block is encrypted and uploaded.

After all spray rounds have been completed, each $\temp_j$ contains
exactly $s$ ciphertexts (as exactly one is uploaded for each source bucket)
and
we have that if an encryption of block $B_l$ was in $\inputR$ at the start of
$\RSpray$ then at the end of the spray phase an encryption of the same block
occupies a location in $\Q_j$ or $\temp_j$, where $\sigma(l)\in\destInd_j$.

Algorithm $\RSpray$ has a final {\em adjustment} phase for each $\temp_j$
in which all ciphertexts in the cache $\Q_j$ are uploaded to $\temp_j$.
This is achieved in the following way.
In the adjustment phase for $\temp_j$, each ciphertext in $\temp_j$ is downloaded and decrypted.
If decryption returns a real block (non-dummy) then the block is re-encrypted and uploaded again.
If instead a dummy block is obtained, then two cases are possible.
In the first case, $\Q_j$ is not empty;
then a ciphertext from the cache is uploaded instead.
In the second case instead $\Q_j$ is empty and a new ciphertext of a dummy block is uploaded.

If, once all adjustment phases have been completed, there is a non-empty cache $\Q_j$
then $\RSpray$ fails and aborts.

\subsubsection{Properties of $\RSpray$}
We first observe that $\RSpray$ uses bandwidth $4n$.
Indeed, in the spray phase exactly $n$ ciphertexts are downloaded from $\inputR$ to client memory
and exactly $n$ are uploaded to the temporary buckets. In the adjustment phase exactly $n$ are downloaded and
$n$ are uploaded from the temporary buckets.

Moreover, if $\RSpray$ does not abort,
we have that if an encryption of block $B_l$ was in $\inputR$ then at the end
of $\RSpray$ an encryption of $B_l$ is found in $\temp_j$ for $j$ such that
$\sigma(l)\in\destInd_j$.

We next prove that if there exists a constant $\epsilon$ such that  $d\leq(1-\epsilon)n$,
then the algorithm aborts with negligible probability. In other words, we assume that of the $n$ ciphertexts
in $\inputR$, at least an $\epsilon$ fraction consists of encryptions of dummy blocks.
We will then show that, except with negligible probability, this is the case in all calls to
$\RSpray$ of $\thirdC$.
\begin{lemma}
\label{lem:abort}
If $d\leq(1-\epsilon)n$ then $\RSpray$ aborts with probability at most $c^{-\frac{n}{S}}$
for some constant $c>1$ that only depends on $\epsilon$.
\end{lemma}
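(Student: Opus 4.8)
The plan is to first pin down exactly when $\RSpray$ aborts, and then bound the probability of that event with a balls-and-bins / Chernoff estimate. The first step is a structural claim: \emph{$\RSpray$ aborts if and only if some subset $\destInd_j$ receives more than $s:=\inputN/S$ real blocks}. To see this, note that for each position $p\in\destInd$ the unique block $\sigma^{-1}(p)$ is real and appears exactly once in $\inputR$, so the number of real blocks whose destination falls in $\destInd_j$ equals $|\destInd_j|$; denote this $A_j$. The key point is a conservation argument that is \emph{independent of the spray ordering}: the array $\temp_j$ has exactly $s$ slots; the spray rounds deposit into $\temp_j$ one real block for each round in which $\Q_j$ is non-empty — say $D_j$ such blocks — together with $s-D_j$ dummies; and the adjustment phase scans all $s$ slots of $\temp_j$ and overwrites every dummy slot by a cached block as long as $\Q_j$ is non-empty. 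Hence the number of real blocks ultimately stored in $\temp_j$ is $\min(A_j,s)$, and $\Q_j$ retains exactly $\max(0,A_j-s)$ blocks at the end. Consequently $\RSpray$ aborts iff $A_j>s$ for some $j$, i.e.\ iff $\max_j|\destInd_j|>s$.

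Next I would bound $\Pr[\max_j|\destInd_j|>s]$. Since each of the $\destIndN=d$ destination indices is thrown independently and uniformly into one of the $q=S$ subsets, the vector $(|\destInd_1|,\dots,|\destInd_q|)$ is the outcome of a balls-and-bins process, and in particular $|\destInd_j|$ is a sum of $d$ independent indicators with mean $\mu_j=d/q$. Because $d\le(1-\epsilon)\inputN$ and $s=\inputN/q$, we have $\mu_j\le(1-\epsilon)s$. Applying the upper-tail Chernoff bound of Theorem~\ref{thm:chernoff} with the deviation $\delta$ chosen so that $(1+\delta)\mu_j=s$, the exponent $\delta^2\mu_j/(2+\delta)$ simplifies to $(s-\mu_j)^2/(s+\mu_j)$, which is decreasing in $\mu_j$; hence over the admissible range $\mu_j\le(1-\epsilon)s$ it is at least its value at $\mu_j=(1-\epsilon)s$, namely $\epsilon^2 s/(2-\epsilon)$. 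This gives
$$\Pr[\,|\destInd_j|\ge s\,]\le\exp\left(-\tfrac{\epsilon^2}{2-\epsilon}\,s\right).$$

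Finally, a union bound over the $q=S$ subsets yields $\Pr[\RSpray\text{ aborts}]\le q\exp(-\alpha s)$ with $\alpha=\epsilon^2/(2-\epsilon)$ and $s=\inputN/S$. Since $q=S$ is only polynomial in the instance size while $\exp(\alpha s)$ is exponential in $s$, the polynomial prefactor is absorbed by slightly shrinking the base: taking $c=e^{\alpha/2}$ we get $q\exp(-\alpha s)\le c^{-s}=c^{-\inputN/S}$ whenever $s=\Omega(\log q)$, the regime in which the lemma is invoked throughout Section~\ref{sec:fullRecursive}; the constant $c>1$ then depends only on $\epsilon$, as claimed. I expect the structural claim of the first paragraph to be the main obstacle: the tail estimate is a routine Chernoff computation, but the real content is arguing that the residual cache size is exactly $\max(0,A_j-s)$, independent of the spray ordering and of the per-round departure counts $D_j$.
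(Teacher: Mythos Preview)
Your proposal is correct and follows the same approach as the paper: identify that $\RSpray$ aborts precisely when some $|\destInd_j|>s=n/S$, then bound this via Chernoff. Your version is considerably more detailed than the paper's three-sentence proof, which simply notes $\E[|\destInd_j|]\le(1-\epsilon)n/S$ and invokes the Chernoff bound without spelling out the structural argument, the explicit exponent, or the union-bound absorption you work through.
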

\begin{proof}
$\RSpray$ aborts when it cannot copy an encryption of each block
assigned to some $D_j$ by $\sigma$ to $\temp_j$ because $D_j$ is larger than
$\temp_j$ (note that each temporary bucket has exactly $s=n/S$ slots).
Note that $\E[|D_i|]=d/q\leq (1-\epsilon)n/S$ and it is the sum
of $d$ 0/1 independent random variables.
The lemma then follows from the Chernoff bound.
\end{proof}

We next bound the memory needed by the client to store the caches $\Q_j$.
Specifically, we show that for every $\epsilon>0$, there exists $\delta_\epsilon$ such
that, for all $\delta>\delta_\epsilon$ the probability that the total number of blocks in the caches
exceeds $\delta S$ is negligible.
As before, we let $l_{i,j}$ denote the size of $\Q_j$ after the $i$-th spray round and set
and $l_i=l_{i,1}+\ldots+l_{i,q}$.

\begin{lemma}
For every $\epsilon > 0$ and $i\in[s]$
if $d\leq(1-\epsilon)n$, there exists $\delta_\epsilon$ such that
$\Pr[l_i>\delta q] < e^{-q}$, for all $\delta>\delta_\epsilon$.
\end{lemma}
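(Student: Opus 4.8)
The plan is to follow the structure of the proof of Lemma~\ref{lem:cache} almost verbatim, the only difference being that the arrivals now count only the real blocks and that the slackness comes from the dummy fraction rather than from an inflated number of caches. I would let $X_{i,j}$ denote the number of \emph{real} blocks routed from source bucket $\sourceB_i$ into cache $\Q_j$ during the $i$-th spray round; equivalently, $X_{i,j}$ counts the real blocks $B_l$ whose source ciphertext landed in $\sourceB_i$ and whose destination satisfies $\sigma(l)\in\destInd_j$. Since the partition of $\inputR$ into source buckets and the partition of $\destInd$ into destination subsets are drawn independently and uniformly, each of the $d$ real blocks is assigned to a cell $(i,j)$ of the $s\times q$ grid independently and uniformly at random, with $s=n/S$ and $q=S$. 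Hence the collection $\{X_{i,j}\}_{i\in[s],j\in[q]}$ is exactly a Balls and Bins process with $d$ balls and $n=sq$ bins, and in particular $\E[X_{i,j}]=d/n\le 1-\epsilon$ by hypothesis.

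By Theorem~\ref{thm:bbna} the variables $\{X_{i,j}\}$ are then negatively associated. Next I would observe that $l_{i,j}$, the size of cache $\Q_j$ after the $i$-th round, is a non-decreasing function of the arrivals $X_{1,j},\ldots,X_{i,j}$: departures occur at the deterministic rate of one block per round, so increasing any arrival count can only increase the queue length. For fixed $i$ and $j\neq k$, the functions $l_{i,j}$ and $l_{i,k}$ therefore depend on the disjoint variable sets $\{X_{1,j},\ldots,X_{i,j}\}$ and $\{X_{1,k},\ldots,X_{i,k}\}$, both non-decreasing, which is precisely the hypothesis needed to invoke Lemma~\ref{lemma:ndf}.

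With this in place, the remainder mirrors Lemma~\ref{lem:cache}. I would apply Markov's inequality to $e^{\epsilon l_i}=\prod_{j=1}^q e^{\epsilon l_{i,j}}$ and then Lemma~\ref{lemma:ndf} to push the product outside the expectation, obtaining
$$\Pr[l_i>\delta q]<e^{-\epsilon\delta q}\,\E\!\left[\prod_{j=1}^q e^{\epsilon l_{i,j}}\right]\le e^{-\epsilon\delta q}\prod_{j=1}^q \E\!\left[e^{\epsilon l_{i,j}}\right].$$
Each cache $\Q_j$ is a queue with batched arrival rate $\E[X_{i,j}]=d/n\le 1-\epsilon$ and departure rate $1$, so Theorem~\ref{thm:queuep} gives $\E[e^{\epsilon l_{i,j}}]\le 2$ and the product is at most $2^q$. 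This yields $\Pr[l_i>\delta q]<e^{-(\epsilon\delta-\ln 2)q}$, so choosing any $\delta_\epsilon>(1+\ln 2)/\epsilon$ forces the exponent to exceed $q$ and gives $\Pr[l_i>\delta q]<e^{-q}$ for all $\delta>\delta_\epsilon$.

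The one point that deserves care — and the only place the argument genuinely departs from Lemma~\ref{lem:cache} — is justifying that the per-round, per-cache arrival counts really do form a single Balls and Bins process over the $s\times q$ grid with independent, uniform cell assignments. This requires that the source-bucket partition (over ciphertext positions) and the destination-subset partition (over destination indices) be independent, so that a real block's cell $(i,j)$ is uniform, and it requires noting that although the number of real blocks falling in a given source bucket is itself random, viewing the whole assignment as a single $d$-ball, $n$-bin process sidesteps any conditioning and lets Theorem~\ref{thm:bbna} supply negative association directly. Once that reduction is made, the queueing estimate goes through exactly as before.
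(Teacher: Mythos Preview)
Your proposal is correct and follows essentially the same route as the paper: mirror Lemma~\ref{lem:cache}, identify the $X_{i,j}$ as a Balls-and-Bins process to get negative association, then apply Markov, Lemma~\ref{lemma:ndf}, and Theorem~\ref{thm:queuep}. Your framing of the entire collection $\{X_{i,j}\}_{i\in[s],j\in[q]}$ as a single $d$-ball, $sq$-bin process is in fact cleaner than the paper's brief statement (which writes ``$d$ balls and $q$ bins'' and does not spell out how negative association is obtained across source buckets now that they are random rather than deterministic); your observation that the independent uniform source and destination assignments make each real block land in a uniform grid cell is exactly the right way to justify this step.
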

\begin{proof}
The proof proceeds as the one of Lemma~\ref{lem:cache}.
Negative associativity still holds for the $X_{i,j}$, the random 
variable of the number of blocks in $\sourceB_i$ that go into $\Q_j$, as they
have the same distribution of the Balls and Bins process with $d$ balls and $q$ bins.
By Markov's Inequality, we get that $\Pr[l_i > \delta q] = \Pr[e^{\epsilon l_i} > e^{\epsilon \delta q}]
	< e^{-\epsilon \delta q} \E[e^{\epsilon l_i}]$.
Then we observe
each source bucket has expected size $q$ and since each source bucket is randomly
chosen from a set of $n$ ciphertext at most $(1-\epsilon)n$ of which are real, each source bucket
contains on average at most $(1-\epsilon)q$ real ciphertexts.
Therefore the arrival rate at each cache of the $q$ caches is at most $(1-\epsilon)$ and departure is exactly $1$.
The proof then proceeds as in Lemma~\ref{lem:cache}.
\end{proof}

\begin{lemma}
The move transcript of $\RSpray$ is independent of $\sigma$.
\end{lemma}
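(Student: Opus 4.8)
The plan is to mimic the structure of the obliviousness argument for $\fullC$ given in Lemma~\ref{lemma:CSind}, since $\RSpray$ is a move-based subroutine: I only need to show that the sequence of server locations touched (sources of downloads, destinations of uploads) does not depend on $\sigma$, and then invoke the move-based reduction from Section~``Proving $K$-Obliviousness for Move-Based \ShuffA\ Algorithms.'' The key observation is that every server access in $\RSpray$ is determined by data-independent bookkeeping rather than by $\sigma$. First I would recall that $\RSpray$ partitions $\inputR$ into source buckets $\sourceB_1,\ldots,\sourceB_s$ by a \emph{random} assignment and partitions $\destInd$ into $\destInd_1,\ldots,\destInd_q$ by an independent \emph{random} assignment; crucially, these partitions are chosen by the algorithm's own internal randomness and not as a function of $\sigma$, so conditioning on the partition structure is legitimate and the induced access pattern inherits no dependence on $\sigma$ through it.

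I would then walk through each phase and argue access-pattern determinism conditioned on the partitions. In the spray rounds, the $i$-th round downloads exactly the ciphertexts of source bucket $\sourceB_i$ (a fixed, $\sigma$-independent set of server locations) and concludes by uploading exactly one ciphertext to the $i$-th slot of each temporary array $\temp_j$, $j=1,\ldots,q$, regardless of whether the uploaded block is real or a dummy drawn from an empty cache. Thus the destinations of the uploads form the fixed sequence ``slot $i$ of $\temp_1,\ldots,\temp_q$,'' which is independent of $\sigma$. In the adjustment phase for $\temp_j$, every slot of $\temp_j$ is downloaded and then re-uploaded: whether the slot held a real block, a dummy being replaced by a cache element, or a freshly generated dummy, the source and destination server locations are identical (the slots of $\temp_j$ read and written in fixed order). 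Hence the adjustment access pattern is also the fixed sequence of $\temp_j$ slots, independent of $\sigma$.

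The subtle point I must handle carefully is that $\sigma$ does influence \emph{which} cache $\Q_j$ each downloaded block is routed to, and therefore influences cache occupancies and which uploads carry real versus dummy content. The entire argument rests on the fact that this influence is confined to the \emph{client side}: which cache a block enters and whether a cache is momentarily empty affects only the payload of an upload, never its server destination, and by the move-transcript convention the payloads and client locations are stripped out of $\transm$. I would make this explicit by noting that in every round exactly one upload is issued per cache irrespective of cache content, so the padding-with-dummies discipline is precisely what decouples the server-visible destination sequence from the $\sigma$-dependent cache dynamics. The abort event also deserves a remark: aborting is triggered by a cache being non-empty at the end of adjustment, which depends on $\sigma$, but abort is a client-side decision that (when it does not occur) leaves the access pattern untouched; since the statement concerns the move transcript of a non-aborting execution, I can treat the access sequence as the fixed one above. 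The main obstacle, then, is not any calculation but stating cleanly that $\sigma$-dependence lives entirely in client memory and in stripped payloads, so that the projected move transcript is a deterministic function of the ($\sigma$-independent) random partitions and the sizes $n,q,s$; once that is isolated, independence from $\sigma$ is immediate.
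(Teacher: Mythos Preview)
Your proposal is correct and follows essentially the same approach as the paper: showing that every server-side access (download source, upload destination) is fixed by the algorithm's own $\sigma$-independent randomness and bookkeeping, then appealing to the move-based reduction exactly as in Lemma~\ref{lemma:CSind}. The paper's proof is terser---it simply notes that the only change from $\Spray$ is the random assignment to source buckets and defers the rest to the $\Spray$ argument---whereas you additionally spell out the adjustment phase explicitly, which is a welcome elaboration since that phase is new to $\RSpray$; one minor inaccuracy is your remark that the abort event depends on $\sigma$: in fact it depends only on whether some $|\destInd_j|$ exceeds $s$, which is determined by the random partition of $\destInd$ and not by $\sigma$, but this does not affect your argument.
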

\begin{proof}
The only difference between $\Spray$ and $\RSpray$ is that
how the source arrays are distributed. In $\RSpray$, each block
of $\inputR$ is assigned uniformly at random to one $\sourceB_i$
independently of $\sigma$.
The rest of the proof follows identically to $\Spray$.
\end{proof}

\subsection{Description of $\thirdC$}
We are now ready to describe algorithm $\thirdC$ which will use $\RSpray$ and $\Spray$ as subroutines
to Oblivious Shuffle with $O(S)$ client storage.
$\thirdC$ receives permutations $\pi$ and $\sigma$ a 
a {\em source array} $\source[1,\ldots,N]$ of $N$ ciphertexts
such that an encryption of block $B_l$ is stored as $\source[\pi(l)]$, for $l=1,\ldots,N$.
$\thirdC$ outputs a {\em destination array} $\dest[1,\ldots,N]$ of $N$ ciphertexts
such that an encryption of block $B_l$ is stored as $\dest[\sigma(l)]$, for $l=1,\ldots,N$.

$\thirdC$ starts by running the $\Spray$ algorithm of $\fullC$ with
parameters $s:= N/S$ and $q:=(1+\epsilon)S$.
Note, $\Spray$ will only use $O(S)$ client memory with these parameters
and results in the following:
\begin{enumerate}
\item
$q$ caches $\Q_1,\ldots,\Q_q$ on the client;
\item
$q$ temporary arrays $\temp_1,\ldots,\temp_q$ on the server;
\item
$q$ destination buckets $\dBuck_1,\ldots,\dBuck_q$ on the client
such that if $\sigma(i)\in\dBuck_j$ then
$\Q_j$ or $\temp_j$ contain an encryption of $B_i$;
\end{enumerate}
Next, for $j=1,\ldots,q$, the algorithm performs a {\em adjustment} of $\Q_j$ into $\temp_j$ as
explained above in the description of $\RSpray$.
Once adjustment has been performed, we have that for $i=1,\ldots,N$,
if $\sigma(i)\in\dBuck_j$ then $\temp_j$ contains an encryption of $B_i$.

Next, $\thirdC$ calls algorithm $\RSpray$ on each bucket $\temp_j$ until,
after $l=O(\log_S N)$ recursive calls, it obtains buckets
$\temp_{l,j}$ of ciphertexts for destination buckets $\dBuck_{l,j}$ of size smaller than $S^2$.
At this point each bucket is oblivious shuffled into the subset of $\dest$ corresponding to
the indices in the destination bucket using algorithm $\fullC$.

\subsection{Properties of $\thirdC$}
The first invocation $\Spray$ method requires $O(N)$ blocks of bandwidth.
At level $i$ of $\RSpray$ calls, there are $S^i$ calls of $\RSpray$
each on source arrays of size $O(N/S^i)$. Therefore, each level
requires $O(N)$ blocks of bandwidth and altogether $O(N\log_S N)$
blocks of bandwidth for all levels.
Finally, each of the $O(N/S^2)$ executions of $\fullC$ requires $O(S^2)$
blocks of bandwidth.
In total, $O(N\log_S N)$ blocks of bandwidth is required for $\thirdC$.
Also, note that $\thirdC$ requires $O(N)$ server memory.

The following lemma will be instrumental in proving that the abort probability of $\thirdC$ is negligible
and that $\thirdC$ uses $O(S)$ client memory.

\begin{lemma}
\label{lemma:sizeBucket}
The probability that a destination bucket of level $i$ call to $\RSpray$ has size
larger than $(1-\epsilon/2)N/S^i$ is negligible in $N$ for $S=\omega(\log N)$.
\end{lemma}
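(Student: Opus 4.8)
The plan is to express the size of the relevant destination bucket as a sum of $N$ independent indicator variables and then apply the Chernoff bound of Theorem~\ref{thm:chernoff}; the real work is to justify the independence, so that no negative-association machinery (as in Lemma~\ref{lem:cache}) is needed. Fix a level-$i$ call to $\RSpray$ and let $\destInd$ be the destination index set it receives. This set is singled out by a path of random assignments from the root of the recursion: the initial $\Spray$ places each index uniformly into one of $(1+\epsilon)S$ buckets, and each of the subsequent $\RSpray$ partitionings at levels $1,\dots,i-1$ places an index uniformly into one of $S$ sub-buckets. Because every index is routed by its own independent coins and the choice at each level is fresh, a fixed index $x\in[N]$ lands in $\destInd$ precisely when it follows the entire path, which happens with probability
\[
 p=\frac{1}{(1+\epsilon)S}\cdot\Big(\frac1S\Big)^{i-1}=\frac{1}{(1+\epsilon)S^{i}},
\]
and the corresponding $N$ events are mutually independent across indices.

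First I would write $|\destInd|$ as the number of indices that follow the path, i.e.\ a sum of $N$ independent $0/1$ variables with $\mu:=\E[|\destInd|]=N/((1+\epsilon)S^{i})$. Next I would put the threshold into Chernoff form. Since
\[
 (1-\epsilon/2)\,\frac{N}{S^{i}}=(1-\epsilon/2)(1+\epsilon)\,\mu=(1+\delta)\mu,\qquad \delta=(1-\epsilon/2)(1+\epsilon)-1=\tfrac{\epsilon}{2}(1-\epsilon),
\]
and $\delta>0$ is a constant for every $0<\epsilon<1$, the event in the statement is exactly $|\destInd|>(1+\delta)\mu$. Applying Theorem~\ref{thm:chernoff} then yields
\[
 \Pr\!\left[\,|\destInd|>(1-\epsilon/2)\,\frac{N}{S^{i}}\right]\le \exp\!\Big(-\frac{\delta^{2}\mu}{2+\delta}\Big)=\exp\big(-\Theta(N/S^{i})\big).
\]

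To conclude I would lower bound $\mu$ uniformly over the levels that are actually executed. Algorithm $\thirdC$ keeps recursing with $\RSpray$ only while the buckets exceed $S^{2}$ in size, switching to $\fullC$ as soon as a bucket drops below $S^{2}$; hence every level $i$ at which $\RSpray$ runs satisfies $N/S^{i}=\Omega(S^{2})$, so $\mu=\Omega(S^{2})$ and the bound above is $\exp(-\Omega(S^{2}))$. Finally, the hypothesis $S=\omega(\log N)$ gives $S^{2}=\omega(\log^{2}N)$, so this probability is smaller than every inverse polynomial in $N$, i.e.\ $\negl(N)$, which is the claim; the same estimate, summed over the polynomially many buckets of the recursion, is what will later drive the non-abort guarantee (cf.\ Lemma~\ref{lem:abort}) and the $O(S)$ memory bound.

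The step I expect to be delicate is the independence claim in the first paragraph: although each $\RSpray$ partition only re-randomizes the indices that survived into the bucket being split, fixing a complete root-to-node path turns the event ``$x\in\destInd$'' into a conjunction of per-level uniform choices that are independent both across levels and across the $N$ indices. This is exactly what lets me treat $|\destInd|$ as a clean binomial sum and invoke Chernoff directly, rather than reprising the negative-association argument used for the caches in Lemma~\ref{lem:cache}. The remainder is the routine constant chase that $\delta=\tfrac{\epsilon}{2}(1-\epsilon)>0$ and the observation that $N/S^{i}\ge \Omega(S^{2})$ all along the recursion.
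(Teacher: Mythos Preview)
Your proposal is correct and follows essentially the same approach as the paper: view the destination bucket as one cell of a uniformly random partition of $[N]$, so that its size is a sum of $N$ independent indicators, then apply Theorem~\ref{thm:chernoff} and use $S=\omega(\log N)$ together with a lower bound on the mean to conclude negligibility. You supply more detail than the paper does---in particular, your explicit justification that the per-level routing choices are independent across both levels and indices, and your computation of $\delta=\tfrac{\epsilon}{2}(1-\epsilon)$---whereas the paper's proof simply asserts that level $i$ yields a random partition into $S^i$ buckets of expected size $(1-\epsilon)N/S^i$ and invokes Chernoff.

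One small discrepancy worth flagging: you take $\destInd$ to be the set \emph{received} by the level-$i$ call and conclude $\mu=\Omega(S^2)$ from the stopping rule, while the paper speaks of the buckets \emph{produced} at level $i$ and uses the weaker $N/S^i\geq S$. This is a one-level index shift between ``input at level $i$'' and ``output at level $i$''; at the final recursion level the output buckets can have expected size as small as $\Theta(S)$, so the paper's $\Omega(S)$ bound is the safe one there. Either way the Chernoff tail is $\exp(-\Omega(S))$ or better, and $S=\omega(\log N)$ makes this $\negl(N)$, so the conclusion is unaffected.
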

\begin{proof}
This is certainly true for the first level in which we have $n=N$ and $d=(1-\epsilon)N$.
The calls to $\RSpray$ at level $i$ of the recursion determine
a random partition of $[N]$ into $S^i$ destination buckets each of expected size $d_i=(1-\epsilon) N/S^i$.
$\RSpray$ is invoked on each destination bucket with a bucket of $n_i=N/S^i$ ciphertexts.
By applying Chernoff bound, we obtain that the probability that a level $i$ destination bucket
is larger than $(1-\epsilon/2)N/S^i$ is exponentially small in $N/S^i$.
This is negligible in $N$ since $N/S^i\geq S$ and $S=\omega(\log N)$.
\end{proof}

We are now ready to prove the following.

\begin{lemma}
\label{lemma:thirdFail}
Algorithm $\thirdC$ fails with negligible probability.
\end{lemma}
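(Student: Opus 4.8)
The plan is to establish that $\thirdC$ fails only if some individual component fails, and then to union-bound over all components using the negligible per-component failure probabilities already established. Algorithm $\thirdC$ is built from three types of subroutines: the initial $\Spray$ call, the recursive $\RSpray$ calls at levels $1,\ldots,l$ with $l=O(\log_S N)$, and the final $O(N/S^2)$ invocations of $\fullC$ on the leaf buckets. A run of $\thirdC$ fails precisely when at least one of these subroutine invocations fails (either by aborting because a cache is too large to drain, or by using more than the allotted client memory). So the first step is to argue that the abort probability of $\thirdC$ is bounded by the sum of the abort probabilities of all these invocations.

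The key technical point, which I would isolate first, is that each $\RSpray$ invocation at level $i$ is called on a source array of size $n_i=N/S^i$ containing at most $d_i$ real blocks, and I need $d_i\le(1-\epsilon)n_i$ for some fixed constant $\epsilon$ so that Lemma~\ref{lem:abort} applies. This dummy-fraction invariant is exactly what Lemma~\ref{lemma:sizeBucket} provides: except with negligible probability, each level-$i$ destination bucket has size at most $(1-\epsilon/2)N/S^i$, so the number of real ciphertexts fed into the level-$(i+1)$ $\RSpray$ call is at most $(1-\epsilon/2)n_i$, leaving a constant $\epsilon/2$ fraction of dummies. Thus, conditioned on the good event of Lemma~\ref{lemma:sizeBucket} holding at every level, every $\RSpray$ call satisfies the hypothesis of Lemma~\ref{lem:abort} and therefore aborts with probability at most $c^{-n_i/S}$ for a constant $c>1$ depending only on $\epsilon/2$.

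With the invariant in hand, I would carry out the union bound in the following order. First, by Lemma~\ref{lemma:sizeBucket} the bucket-size invariant fails with negligible probability at any fixed level, and summing over the $l=O(\log_S N)$ levels (and the $S^i\le N$ buckets per level) keeps the total negligible in $N$, since $N=\poly(\lambda)$ and a polynomial number of negligible terms is negligible. Second, conditioned on the invariant, each level-$i$ $\RSpray$ aborts with probability at most $c^{-n_i/S}$, and since $n_i/S=N/S^{i+1}\ge S=\omega(\log N)$ at every level used in the recursion, each such probability is negligible in $N$; there are at most $\sum_i S^i=O(N)$ such calls, so their total contribution remains negligible. Third, the $O(N/S^2)$ final $\fullC$ calls on buckets of size at most $S^2$ each fail with probability negligible in their size $S^2$ by Lemma~\ref{lem:cache} (note $S^2=\omega(\log^2 N)$), and again multiplying by the polynomially many invocations preserves negligibility. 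Adding the three negligible contributions gives the claimed bound.

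The main obstacle I anticipate is bookkeeping the conditioning correctly rather than any single hard estimate: the per-call abort bounds of Lemma~\ref{lem:abort} hold only under the hypothesis $d_i\le(1-\epsilon)n_i$, which is itself a probabilistic event supplied by Lemma~\ref{lemma:sizeBucket}, so I must be careful that the conditioning chains cleanly across levels and that the recursion depth $l=O(\log_S N)$ together with the branching factor $S^i$ never inflates the polynomial count of events beyond $\poly(N)=\poly(\lambda)$. Provided each individual failure probability is negligible and the total number of events is polynomial, the union bound closes and $\thirdC$ fails with negligible probability.
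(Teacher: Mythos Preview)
Your proposal is correct and follows essentially the same approach as the paper: apply Lemma~\ref{lemma:sizeBucket} to guarantee the dummy-fraction hypothesis of Lemma~\ref{lem:abort} at every level, then union-bound over the polynomially many subroutine calls. The paper's proof is a two-sentence sketch that only mentions the $\RSpray$ aborts, whereas you spell out the conditioning across levels and additionally account for the leaf $\fullC$ invocations; this extra care is fine but not strictly needed for the lemma as stated, since in the paper only $\RSpray$ has an explicit abort and the client-memory bound for $\fullC$ is handled in a separate lemma.
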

\begin{proof}
By the Union Bound we obtain that the probability that any destination bucket in the $O(N\log_S N)$ calls to
$\RSpray$ is too large remains negligible and thus, by applying Lemma~\ref{lem:abort},
we obtain that $\thirdC$ aborts with negligible probability.
\end{proof}

We now show that $\thirdC$ requires $O(S)$ client memory except with negligible probability.
\begin{lemma}
	For $S=\omega(\log N)$, $\thirdC$ requires $O(S)$ client memory except with negligible in $N$ 
	probability.
\end{lemma}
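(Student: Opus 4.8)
The plan is to track every place where $\thirdC$ holds blocks in client memory and to show that, at any single instant, this is $O(S)$ except with negligible probability. There are only three sources of client storage: the caches $\Q_1,\ldots,\Q_q$ of whichever spray operation is currently running (the initial $\Spray$ or one of the $\RSpray$ invocations), the $O(s)$-size buffer used while streaming through a single $\temp_j$ during an adjustment or recalibration step, and the pseudorandom description of the bucket partitions, which can be represented with $O(S)$ state and so never dominates. The first observation I would make is that $\thirdC$ is a depth-first recursion in which every spray operation runs its adjustment phase \emph{before} the recursion descends into its children: once the adjustment phase of an $\RSpray$ (or of the initial $\Spray$) has completed, all of that operation's caches are empty and every live block resides on the server inside some $\temp_j$. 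Consequently at most one spray operation holds blocks in its caches at any moment, so the peak client memory equals the memory of a single operation, namely $O(S)$.

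Next I would bound, for one operation, the probability that its cache usage exceeds $\delta S$ for a suitable constant $\delta$. For the initial $\Spray$ with parameters $s=N/S$ and $q=(1+\epsilon)S$, this is exactly Lemma~\ref{lem:cache}, invoked via the remark that $\Spray$ generalizes to any $s,q$ with $s\cdot q=(1+\epsilon)N$ and then uses memory $O(q)$ except with probability exponentially small in $q$; here this is $e^{-\Omega(S)}$. For each $\RSpray$ call, provided its input bucket carries at least an $\epsilon$-fraction of dummies (i.e.\ $d\le(1-\epsilon)n$), the cache-size bound for $\RSpray$ gives $\Pr[l_i>\delta q]<e^{-q}=e^{-\Omega(S)}$; and the hypothesis $d\le(1-\epsilon)n$ holds simultaneously for all calls except with negligible probability by Lemma~\ref{lemma:sizeBucket}, exactly as was already used in the abort analysis of Lemma~\ref{lemma:thirdFail}.

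The one point that needs genuine care is the bottom-level invocations of $\fullC$ on buckets of size at most $S^2$. Running $\fullC$ with its default $\Theta(\sqrt m)$ caches on a bucket of size $m$ would give failure probability $e^{-\Omega(\sqrt m)}$, which is \emph{not} negligible in $N$ when $S$ is only slightly superlogarithmic (e.g.\ $S=\log N\cdot\log\log N$), since then $\sqrt m$ may be $o(\log N)$. To avoid this I would instead invoke the generalized $\Spray$ with $q=\Theta(S)$ caches and $s=\Theta(m/S)$ (and, for buckets with $m=O(S)$, simply shuffle inside client memory with no failure probability at all). This keeps both the spray memory $O(q)=O(S)$ and the recalibration buffer $O(s)=O(S)$, while driving the overflow probability down to $e^{-\Omega(S)}$. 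This is the main obstacle of the proof; everything else is bookkeeping.

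Finally I would union-bound over all operations. There are $O(N\log_S N)$ spray operations together with $O(N/S^2)$ bottom-level shuffles, so the total count is $\poly(N)$, and each overflows with probability $e^{-\Omega(S)}$. Hence the probability that the client memory ever exceeds $O(S)$ is at most $\poly(N)\cdot e^{-\Omega(S)}$. Since $S=\omega(\log N)$ we have $e^{-\Omega(S)}=e^{-\omega(\log N)}$, which is smaller than $N^{-c}$ for every constant $c$ once $N$ is large; therefore the product is negligible in $N$, as claimed.
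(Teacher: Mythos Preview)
Your proof follows the same skeleton as the paper's: each of $\Spray$, $\RSpray$, and $\fullC$ uses $O(S)$ client memory except with negligible probability, these subroutines are invoked $\poly(N)$ times in total, and a union bound finishes the argument; the adjustment phase adds only $O(1)$ extra blocks. In that sense the two proofs coincide.

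Where you go beyond the paper is in your treatment of the bottom-level $\fullC$ calls. The paper simply asserts that ``$\fullC$ uses $O(S)$ client memory except with negligible probability'' and moves on. You correctly observe that the default instantiation of $\fullC$ on a bucket of size $m$ has overflow probability $e^{-\Omega(\sqrt{m})}$, and since the leaf buckets can be as small as $\Theta(S)$, this is only $e^{-\Omega(\sqrt{S})}$, which is not negligible in $N$ for $S$ between $\omega(\log N)$ and $O(\log^2 N)$. Your fix---instantiating the leaf-level spray with $q=\Theta(S)$ caches via the generalized $\Spray$ the paper remarks on after Lemma~\ref{lem:cache}, or shuffling entirely in client memory when $m=O(S)$---is exactly the right patch and keeps the per-call failure probability at $e^{-\Omega(S)}$. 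So your argument is not merely a restatement of the paper's proof but a genuine tightening of a step the paper leaves implicit. Your observation that only one spray operation holds blocks at a time (because adjustment empties the caches before recursion descends) is also a useful clarification the paper omits.
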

\begin{proof}
Note, that $\Spray$, $\RSpray$ and $\fullC$ all use $O(S)$ client memory except with negligible probability.
Altogether, these subroutines are called $O(N\log_S N)$ times, meaning
the probability that any single execution results in more than $O(S)$ client memory is remains negligible.
Finally, the moving of $\Q_i$ back to $\temp_i$ after $\Spray$
requires $O(1)$ extra client memory.
\end{proof}

The above lemma only works when $N/S = \omega(\log N)$ or $S \le O(N/\log N)$.
However, we note this is not an issue since when $S = O(\sqrt{N})$,
$\fullC$ should be used instead of $\thirdC$.

\begin{theorem}
$\thirdC$ is an {\ObShuffA} algorithm.
\end{theorem}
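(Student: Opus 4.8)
The plan is to show that $\thirdC$ is move-based and that its move transcript is independent of $\sigma$; obliviousness then follows from the framework for move-based algorithms (it suffices that the transcript be independent of $\sigma$ given $\touched$ and $\pi(\touched)$) together with the IND-CPA security of $(\Enc,\Dec)$. Since $\thirdC$ performs no server computation and issues only download and upload moves through its subroutines, it is move-based, so this reduction applies. The move transcript of an execution is exactly the concatenation, in execution order, of the move transcripts of (i)~the initial $\Spray$ call with parameters $s:=N/S$ and $q:=(1+\epsilon)S$; (ii)~the $q$ adjustment steps that move each $\Q_j$ into $\temp_j$; (iii)~the $O(N\log_S N)$ recursive calls to $\RSpray$ across the $O(\log_S N)$ levels; and (iv)~the $O(N/S^2)$ terminal calls to $\fullC$ on the small destination buckets.

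First I would dispatch the individual pieces using results already established. For the initial $\Spray$ call, the proof of Lemma~\ref{lemma:CSind} already shows that its sources (the groups $\blockSub_j$ in fixed order) and its destinations (the $j$-th slot of each $\temp_i$) are independent of $\sigma$. Each adjustment step for $\temp_j$ downloads all of $\temp_j$ in increasing order and uploads one block back for each downloaded slot, so the server locations it touches are fixed in advance and hence $\sigma$-independent; only the encrypted contents (real versus dummy) vary, and those are hidden by IND-CPA. For each $\RSpray$ call, the lemma asserting that its move transcript is independent of $\sigma$ applies verbatim, and for each terminal $\fullC$ call, Lemma~\ref{lemma:CSind} gives the same conclusion.

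The main obstacle is that the recursion is wired together by the random destination-bucket partitions, and one must argue that this wiring leaks nothing about $\sigma$. I would proceed by induction on the recursion depth. The key observation is that at every level the partition of the current destination-index set into $q$ subsets, and the assignment of ciphertexts to source buckets, are chosen uniformly at random and independently of $\sigma$; these choices alone determine the cardinalities of all buckets and temporary arrays, and hence the fixed pattern of server locations that the level-$i$ moves touch. Conditioned on these random choices, the moves at level $i$ are $\sigma$-independent by the per-subroutine results above, and the random choices themselves carry no information about $\sigma$, so the level-$i$ move transcript, jointly with those of all previous levels, remains independent of $\sigma$. A subtlety to check here is that the visible abort behavior does not reintroduce a dependence: by Lemma~\ref{lem:abort} an $\RSpray$ call aborts exactly when some subset $D_j$ exceeds the capacity of $\temp_j$, an event determined solely by the random partition cardinalities and thus independent of $\sigma$, so conditioning on (non-)abort preserves independence; the analogous cache-overflow event inside $\fullC$ is governed by the same $\sigma$-independent balls-into-bins randomness.

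Combining these observations, the full move transcript of $\thirdC$ is independent of $\sigma$. Invoking the move-based reduction together with IND-CPA security then yields that $\thirdC$ is an {\ObShuffA} algorithm, completing the proof.
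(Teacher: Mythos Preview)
Your proposal is correct and follows essentially the same approach as the paper: decompose the execution of $\thirdC$ into its subroutine calls ($\Spray$, the adjustment of $\Q_j$ into $\temp_j$, the recursive $\RSpray$ calls, and the terminal $\fullC$ calls), invoke the already-proved $\sigma$-independence of each piece, and conclude via the move-based framework. Your treatment is in fact more careful than the paper's own proof, which is quite terse; in particular, your explicit discussion of why the recursion ``wiring'' (the random partitions that determine bucket sizes at each level) and the abort events are themselves $\sigma$-independent fills in points that the paper leaves implicit.
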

\begin{proof}
From previous sections, we have shown that the move transcripts
of $\Spray$, $\RSpray$ and $\fullC$ are independent of $\sigma$
except with negligible probability. Since there are a total of $O(N)$
calls to these three subroutines, the probability that
any subroutine is dependent on $\sigma$ is still negligible.

It remains to show the moving of $\Q_i$ into $\temp_i$ after $\Spray$ is
independent of $\sigma$. Note, the adversary sees the download and upload
to each location of $\temp_i$ in an arbitrary manner.
So, if $\thirdC$ does not fail, this process remains independent of $\sigma$.
By Lemma~\ref{lemma:thirdFail}, $\thirdC$ fails only with negligible probability.
\end{proof}

\section{\kOS\ with $O(K)$ Client Memory}
\label{sec:smallK}
In this section,
we assume that the number $K$ of touched blocks is small enough to
fit into client memory
and give a {\KObShuffA} algorithm, $\firstC$, that uses bandwidth $2N$ to
shuffle $N$ data blocks.

$\firstC$
takes as input two permutations $(\pi,\sigma)$ and the encryptions of blocks $B_1,\ldots,B_N$
in array $\source[1\ldots N]$ arranged according to $\pi$.
That is, an encryption of block $B_i$ is stored as $\source[\pi(i)]$.
In addition, the algorithm also receives $\touched$, the
set of indices of the touched blocks
as well as the set $\pi(\touched)$ of their positions in $\source$.
At the end of the algorithm, encryptions
of the same $N$ blocks will be stored in array $\dest$ arranged according to permutation $\sigma$;
that is, an encryption of block $B_i$ is stored as $\dest[{\sigma(i)}]$.
The algorithm works into two phases.

In the first phase,
algorithm $\firstC$ downloads the encryptions of the touched blocks from $\source$;
that is, the encryption of $B_i$, stored as $\source[\pi(i)]$, is downloaded for all $i\in\touched$.
Each block is decrypted, re-encrypted using fresh randomness and stored in client memory.
Once all touched blocks have been downloaded, 
algorithm $\firstC$ initializes the set $\tbDown$ of indices of data blocks
that have not been downloaded by setting $\tbDown=[N]\setminus\touched$.

The second phase consists of $N$ steps, for $i=1,\ldots,N$.
At the end of the $i$-th step, $\dest[i]$ contains an encryption of block $B_{\sigma^{-1}(i)}$.
Let us use $s$ as a shorthand for $\sigma^{-1}(i)$.
Three cases are possible.
In the first case,
an encryption of $B_s$ is not in client memory, that is $s\in\tbDown$; then the algorithm sets $r=s$.
If instead, an encryption of $B_s$ is already in client memory,
that is $s\not\in\tbDown$, and $\tbDown\ne\emptyset$,
the algorithm randomly selects $r\in\tbDown$.
In both these first two cases,
the algorithm downloads an encryption of block $B_r$ found at $\source[\pi(r)]$, decrypts it and
re-encrypts it using fresh randomness, stores it in client memory and updates $\tbDown$
by setting $\tbDown=\tbDown\setminus\{r\}$.
In the third case in which $s\not\in\tbDown$ and $\tbDown=\emptyset$, no block is downloaded.
The $i$-th step is then complete by uploading an encryption of $B_s$ to $\dest[i]$.
Note that at this point, the client memory certainly contains an encryption of $B_s$.
We present the pseudocode for this algorithm in Appendix~\ref{sec:first_code}.

In the above description, it seems like the algorithm
would require $N$ roundtrips of data between the client and the server.
However, we can easily reduce the roundtrips by grouping indexes of $\dest$ together.
Specifically, we can group indexes of $\dest$ into groups of size $O(K)$
and perform the required downloads and uploads in $O(N/K)$ roundtrips.

\subsection{Properties of $\firstC$}
Initially, the client downloads exactly $K$ blocks.
At each step,
exactly one block is uploaded and at most one is downloaded.
Therefore, client memory never exceeds $K$.
Each block is downloaded exactly once and uploaded exactly once.
So bandwidth is exactly $2N$ blocks.

\begin{theorem}
$\firstC$ is a {\KObShuffA} algorithm.
\end{theorem}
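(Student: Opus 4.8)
The plan is to invoke the framework for move-based algorithms set up in Section~3.3: since $\firstC$ performs only move operations (downloads and uploads, never server-side computation), and since $(\Enc,\Dec)$ is assumed IND-CPA secure, it suffices to show that the move transcript $\transm$ of $\firstC$ is independent of $\sigma$ given $\touched$ and $\pi(\touched)$, for random $\pi$. Once this is established, the conclusion that $\firstC$ is a {\KObShuffA} algorithm follows directly from the reduction stated (but not reproved) at the end of Section~3.3. So the whole burden of the proof is the independence claim for the sequence of server locations touched.

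First I would describe exactly what the move transcript consists of. The first phase downloads $\source[\pi(i)]$ for each $i\in\touched$; the sources here are precisely the set $\pi(\touched)$, which is part of what the adversary is already given, so these downloads reveal nothing new about $\sigma$. The second phase proceeds in $N$ steps. In the $i$-th step the upload destination is simply $\dest[i]$, and these destinations are $1,2,\ldots,N$ in fixed increasing order, hence trivially independent of $\sigma$. The only nontrivial part of the transcript is therefore the sequence of \emph{download sources} produced in the second phase: whenever the algorithm downloads block $B_r$, the observed source is $\source[\pi(r)]$.

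The key step is to argue that the \emph{set} of blocks $r$ downloaded in the second phase, together with the order in which their sources appear, carries no information about $\sigma$ beyond what is already revealed. The clean way to see this: I would show that across the whole execution, every index $r\in\tbDown=[N]\setminus\touched$ is downloaded exactly once, so the second-phase downloads touch precisely the source locations $\{\pi(r): r\notin\touched\} = [N]\setminus\pi(\touched)$, i.e. the complement of the already-revealed positions. What the adversary learns is thus only the \emph{order} in which these fixed locations are visited. I would then argue this order is independent of $\sigma$ given $\pi$: in step $i$, either $s=\sigma^{-1}(i)$ is still in $\tbDown$ (and we download $r=s$), or it is not (and we download a uniformly random $r\in\tbDown$, or nothing). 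The case analysis means the index $r$ downloaded at step $i$ is either forced to equal the untouched block $s$ or chosen uniformly from the remaining untouched blocks. I expect the main obstacle to be showing that this induced distribution over the ordering of source accesses is \emph{exactly} the same for every $\sigma$ — that the ``random $r\in\tbDown$'' substitutions in the second case precisely wash out the dependence on $\sigma$. The cleanest formalization is to exhibit, for any two permutations $\sigma_0,\sigma_1$, a measure-preserving bijection between the internal random choices (the random selections of $r$) of the two executions that yields identical source-access sequences; equivalently, to show that at each step the next source location is uniform over the not-yet-downloaded untouched positions regardless of $\sigma$, since every untouched block is equally likely to be the one ``forced'' at a given step when $\pi$ is random.

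Finally, having established that the download-source sequence, the upload-destination sequence, and the first-phase sources are all independent of $\sigma$ given $\touched$ and $\pi(\touched)$, I would combine this with IND-CPA security of $(\Enc,\Dec)$ via the Section~3.3 reduction to conclude that no $K$-restricted PPT adversary wins $\ShuffG^\adv_{\firstC}$ with advantage better than negligible, which is exactly the definition of a {\KObShuffA} algorithm.
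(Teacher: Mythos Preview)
Your proposal is correct and follows essentially the same approach as the paper: both invoke the move-based framework of Section~3.3, observe that the first-phase downloads are exactly $\pi(\touched)$ and the second-phase uploads are $\dest[1],\ldots,\dest[N]$, and then argue that each second-phase download source $\pi(r)$ is independent of $\sigma$ because $r$ is always an untouched block and the challenger's randomization of $\pi$ on $[N]\setminus\touched$ hides which specific untouched block was requested. Your plan is in fact somewhat more careful than the paper's own proof, which treats each step in isolation without explicitly addressing the joint distribution of the download sequence; your observation that the sequence $r_1,\ldots,r_{N-K}$ of queried block indices is determined solely by $\sigma$ and internal randomness (not by $\pi$), so that applying the random bijection $\pi|_{[N]\setminus\touched}$ yields a uniformly random ordering of $[N]\setminus\pi(\touched)$ irrespective of $\sigma$, is exactly the clean way to fill that in.
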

\begin{proof}
We prove the theorem by showing that the accesses of $\firstC$ to server memory are independent from $\sigma$,
for randomly chosen $\pi$, given the sets $\touched$ and $\pi(\touched)$.
This is certainly true for the downloads of the first phase as they correspond to $\pi(\touched)$.
For the second phase, we observe that at the $i$-th step an upload is made to $\dest[i]$, which is clearly independent from $\sigma$.
Regarding the downloads, we observe that the set $\tbDown$ initially contains $N-K$ elements
and it decrease by one at each step. Therefore, it will be empty for the last $K$ steps and thus
no download will be performed.
For $i\leq N-K$,
the download of the $i$-th step is from $\source[\pi(r)]$.
In the first case $r$ is a random element of $\tbDown$ and thus independent from $\sigma$;
in the second case, the download is from $\source[\pi(s)]$, with $s=\sigma^{-1}(i)$.
Since $s\not\in\touched$, for otherwise an encryption $B_s$ would have been in client memory,
the value $\pi(s)$ is independent from $\sigma$.
\end{proof}

\section{\kOS\ with Smaller Client Memory}
\label{sec:largeK}
In this section, for every $S$, we describe $\kC_S$, a \KObShuffA\ that uses $O(S)$ blocks of client memory.
Algorithm $\kC_S$ (or, simply, $\kC$)
takes as input two permutations $(\pi,\sigma)$ and the encryptions of $N$ blocks
in array $\source$ arranged according to $\pi$. In addition, the algorithm also receives
the set $\touched$ of the indices of the touched blocks
and the set $\pi(\touched)$ of their positions in $\source$.
At the end of the algorithm, encryptions
of the same $N$ blocks will be stored in array $\dest$ arranged according to permutation $\sigma$.
Algorithm $\kC_S$ can be described as consisting of the following three phases.
We let $\epsilon>0$ be a constant.

The first phase obliviously assigns the $K$ touched blocks to $q=(1+\epsilon)K/S$
{\em touched buckets}, $\tCt_1,\allowbreak\ldots,\allowbreak\tCt_q$, each consisting of $S$
ciphertexts that are encryptions of touched and dummy blocks.
Bucket $\tCt_j$, for $j=1,\ldots,q$,
is associated with the subset $\tInd_j\subseteq\sigma(\touched)$ and $\tCt_j$ contains
an encryption of touched block $B_i$ if and only if $\sigma(i)\in\tInd_j$.
This is achieved by invoking algorithm  $\thirdC$ for memory $S$ and skipping the last 
$\Recalibrate$ phase of the last invocation of $\secondC$.
The partition $(\tInd_1,\ldots,\tInd_q)$ returned is a random parition of $\sigma(\touched)$
into $q$ subsets.
The acute reader might notice that $\secondC$ does not
guarantee that each $\tCt_i$ will contain exactly $S$ ciphertexts.
We note that this can be achieved by slightly decreasing the number of
caches for a couple recursion levels of $\thirdC$.

The second phase merges the touched and the untouched blocks into $q$ buckets.
More specifically, the second phase extends the partition $(\tInd_1,\ldots,\tInd_q)$ of $\sigma(\touched)$
into a partition $(\dInd_1,\ldots,\dInd_q)$ of the set $[N]$ of the indices of array $\dest$;
that is, $\tInd_j\subseteq\dInd_j$, for $j=1,\ldots,q$.
In addition, each set of indices $\dInd_j$ is associated with a bucket of ciphertexts $\dCt_j$
that contains an encryption of every block (touched and untouched) $B_i$ such that $\sigma(i)\in\dInd_j$.
It turns out though that an approach similar to the one used in $\firstC$ would not work here and
we need a more sophisticated algorithm.
Let us see why.
Following $\firstC$, the algorithm downloads each touched bucket $\tCt_j$ to client memory
(note that each has size $S$ so it will fit into memory)
decrypt all ciphertexts, removes the dummy blocks, and re-encrypts the other blocks.
The set $\tbDown_j$ of untouched blocks of $\dInd_j$ still to be downloaded is initialized by the algorithm
as $\tbDown_j:=\sigma^{-1}(\dInd_j\setminus \tInd_j)$.
Now, the algorithm iterates through each index $i\in\dInd_j$ in increasing order.
If block $B_k$ assigned to location $i$ by $\sigma$ (that is, $k=\sigma^{-1}(i)$) is not in client memory,
the algorithms downloads its encryption stored as $\source[\pi(\sigma^{-1}(i))]$
and removes $i$ from $\tbDown_j$.
If instead it is available in client memory,
the algorithm randomly selects random index of $k\in\tbDown_j$,
removes it from $\tbDown_j$ and downloads $\source[\pi(k)]$.
When $\tbDown_j$ is empty, the algorithm does not download anything.
Unfortunately, such an algorithm is not oblivious, since the number of downloads performed
for $\dInd_j$ reveals the cardinality of $\tbDown_j$ from which the adversary obtains
the number of touched blocks associated that are assigned by $\sigma$ to $\dInd_j$.
Note that $\firstC$ does not suffer this problem as there is only one
bucket comprising all
$N$ indices. Thus, the algorithm only leaks the total number $K$ of touched blocks which
is already known to the adversary.

The merging of touched and untouched blocks is instead achieved by the following two-phase process.
As before, the algorithm has a round for each $\dInd_j$, starting with $j=1$, and the $j$-th round starts
with the algorithm downloading the $S$ ciphertexts in $\tCt_j$ and by initializing
$\tbDown_j:=\sigma^{-1}(\dInd_j \setminus \tInd_j)$.
However, unlike in the previous approach, in each round the algorithm dowloads exactly
$u_j:=|\dInd_j|-(1-\epsilon)K/q$ untouched blocks.
If more than $u_j$ untouched blocks belong to $\dInd_j$ under $\sigma$,
the algorithm fails (and we will show that this happens with negligible probability).
If instead fewer than $u_j$ untouched blocks are assigned by $\sigma$ to $\dInd_j$,
the extra downloads are used to bring to client memory untouched blocks that belong to $\dInd_q$
(or, if none is left in $\dInd_q$ to be downloaded,
blocks that belong to $\dInd_{q-1}$ are downloaded and so on).
Note that if the algorithm does not abort (that is, no more than $u_j$ touched blocks must be downloaded)
then we can continue as previously described.
Once the encryptions of all blocks have been uploaded to $\dInd_j$,
the algorithm is left with a set $\rem_j$ of extra untouched blocks
that have been downloaded during the $j$-th round. If $|\rem_j|>2\epsilon K/q$, the algorithm aborts.
Otherwise, the algorithm pads $\rem_j$ with
encryptions of dummy blocks until there are exactly $2\epsilon K/q$ blocks.
Then, $\rem_j$ is uploaded to the server.
At the end of the round, the algorithm has in client memory all blocks that are assigned by $\sigma$
to $\dInd_j$ and the round terminates by uploading the blocks in the current positions of $\dest$.
This second phase ends when
all untouched blocks have been downloaded and they have been uploaded either to
the position in $\dest$ according to $\sigma$ or are still in some $\rem_j$.
That is, for some $l$,
the algorithm has still to process $\dInd_l,\ldots,\dInd_q$.

Finally, in the third phase, the algorithm handles all touched blocks whose encryptions are
in $\rem_1,\ldots,\rem_{l-1}$ and the touched blocks whose encryptions are in $\tCt_l,\ldots,\tCt_q$.
As we shall prove the total number of remaining blocks is $c\epsilon K$ and they are shuffled
into $\dInd_l,\ldots,\dInd_q$ by using algorithm $\thirdC$ with memory $S$.

If the client has $O(\sqrt{K})$ blocks of client storage, then
we may replace $\thirdC$ with $\secondC$ above.
We refer to this construction as $\ksecondC$.

\subsection{Properties of $\kC$}
We first show that the probability that $\kC$ aborts is negligible.
In addition to the executions of $\thirdC$ failing,
$\kC$ introduces two new possible points of aborting,
when $|\dInd_j|-|\tInd_j|>u_j$ or $|\rem_j| > 2\epsilon K/q$.
We next show that when $S$ is not too small, the abort probabilityis negligible.

\begin{lemma}
If $S=\omega(\log N)$ then $\kC$ aborts with probability negligible in $N$.
\end{lemma}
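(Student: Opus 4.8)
The plan is to bound the abort probability by decomposing the three distinct failure events and showing each is negligible in $N$ when $S=\omega(\log N)$, then finishing with a union bound over the (polynomially many) rounds.

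First I would dispense with the failures inherited from the internal calls to $\thirdC$. Both the first phase (assigning touched blocks to buckets $\tCt_j$) and the third phase (shuffling the $c\epsilon K$ leftover blocks) invoke $\thirdC$ with memory $S$, and by Lemma~\ref{lemma:thirdFail} each such invocation aborts only with negligible probability. Since the total number of calls is polynomial in $N$, a union bound keeps the aggregate contribution negligible.

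The core of the argument concerns the two new abort conditions in the second phase. For the event $|\dInd_j\setminus\tInd_j|>u_j$, I would observe that the untouched indices are distributed into the $q$ buckets essentially by a balls-and-bins process: each of the roughly $N-K$ untouched destination indices lands in $\dInd_j$ independently with probability about $1/q$, so $\E[|\dInd_j\setminus\tInd_j|]\approx (N-K)/q$. The threshold $u_j=|\dInd_j|-(1-\epsilon)K/q$ is set so that the expected number of untouched blocks sits strictly below $u_j$ by a constant fraction, which is exactly the slack the $\epsilon$ buys. A Chernoff bound (Theorem~\ref{thm:chernoff}, part~1) then gives that the probability of exceeding $u_j$ is $\exp(-\Omega(\text{expected count}))$; since the expected count is $\Theta(S)=\omega(\log N)$, this is negligible in $N$. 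The symmetric event $|\rem_j|>2\epsilon K/q$ concerns the accumulated surplus of extra untouched downloads; here I would model $|\rem_j|$ as the difference between the $u_j$ forced downloads and the actual count of untouched blocks assigned to $\dInd_j$, whose mean is $\Theta(\epsilon K/q)=\Theta(\epsilon S)$, and again apply Chernoff to bound the deviation above $2\epsilon K/q$ by something exponentially small in $S$.

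The main obstacle I anticipate is the second abort condition, because the sizes $|\rem_j|$ are not independent across rounds: surplus untouched blocks that could not be placed in $\dInd_j$ are carried forward and deposited into later buckets (the ``downloading from $\dInd_q$, then $\dInd_{q-1}$'' mechanism), so the $\rem_j$ are coupled through the global supply of undownloaded untouched blocks. To handle this cleanly I would either reformulate each $|\rem_j|$ as a function of the underlying independent bucket-assignment variables (which are negatively associated by Theorem~\ref{thm:bbna}, so that Lemma~\ref{lemma:ndf} and the Chernoff-type bound still apply), or argue via a queueing/conservation argument that the cumulative surplus after any prefix of rounds stays concentrated. Once each of the $q=(1+\epsilon)K/S=\poly(N)$ rounds contributes a negligible abort probability and the $\thirdC$ calls are handled, the final union bound yields that $\kC$ aborts with probability negligible in $N$, completing the proof.
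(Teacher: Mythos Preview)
Your decomposition into the $\thirdC$ failures plus the two new second-phase abort conditions, followed by Chernoff and a union bound, matches the paper's proof in spirit. However, you are missing a one-line algebraic simplification that the paper exploits, and as a consequence you manufacture an obstacle that does not actually exist.

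Recall that $u_j = |\dInd_j| - (1-\epsilon)K/q$ and that $\tInd_j\subseteq\dInd_j$. Then
\[
|\dInd_j\setminus\tInd_j| > u_j
\quad\Longleftrightarrow\quad
|\dInd_j| - |\tInd_j| > |\dInd_j| - (1-\epsilon)K/q
\quad\Longleftrightarrow\quad
|\tInd_j| < (1-\epsilon)K/q,
\]
and, since the number of ``extra'' downloads in round $j$ is exactly $u_j - |\dInd_j\setminus\tInd_j|$,
\[
|\rem_j| \;=\; u_j - \bigl(|\dInd_j| - |\tInd_j|\bigr) \;=\; |\tInd_j| - (1-\epsilon)K/q,
\qquad\text{so}\qquad
|\rem_j| > 2\epsilon K/q \;\Longleftrightarrow\; |\tInd_j| > (1+\epsilon)K/q.
\]
Thus \emph{both} new abort events are nothing more than lower- and upper-tail deviations of the single random variable $|\tInd_j|$, which is a sum of $K$ independent $\{0,1\}$ variables with mean $K/q=\Theta(S)=\omega(\log N)$. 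A direct Chernoff bound on $|\tInd_j|$ handles both simultaneously, and a union bound over the $q$ rounds finishes.

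This is exactly what the paper does. In particular, $|\rem_j|$ is a deterministic function of $|\tInd_j|$ alone; it does \emph{not} depend on which blocks from $\dInd_q,\dInd_{q-1},\ldots$ were drained in earlier rounds, because the size of the surplus is fixed by how many untouched blocks happen to lie in $\dInd_j$, regardless of where the extra downloads come from. Your anticipated coupling obstacle, and the negative-association or queueing machinery you propose to overcome it, are therefore unnecessary. Your approach of treating $|\dInd_j\setminus\tInd_j|$ directly against the \emph{random} threshold $u_j$ would also eventually work, but it is strictly more cumbersome than cancelling $|\dInd_j|$ and reducing to $|\tInd_j|$.
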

\begin{proof}
The probability that $\kC$ aborts during $\thirdC$ is negligible by the result in the previous section.
Let us now compute the probability that the algorithm aborts because one of the $\rem_j$ is too large.
Note that $\rem_j=u_j-(|\dInd_j|-|\tInd_j|)=|\tInd_j|-(1-\epsilon)K/q$ and
thus if $|\rem_j|>2\epsilon K/q$ then it must be the case that
$|\tInd_j|>(1+\epsilon)K/q$.
Note, that $|\tInd_j|$ is the sum of independent binary random variables and its
expected value is $K/q$. Therefore,
by Chernoff Bounds the probability that $\rem_j$ is larger than its expected value by a constant fractions
is exponentially small in $K/q=\Theta(S)$
and thus negligible in $N$ since $S = \omega(\log N)$.

Finally, let us compute the probability
that the algorithm also aborts because $\dInd_j$ has more than $u_j$ untouched blocks.
This happens when $|\tInd_j|\leq (1-\epsilon)K/q$ which, again by Chernoff Bounds and by
the fact that $S=\omega(\log N)$, has negligible probability
as it is the probability that a sum of independent binary random variables is a constant fraction away from
its expected value.
\end{proof}

The entire algorithm requires $O(N)$ blocks of server memory
and $O(S)$ blocks of client memory.
It is clear that the first execution of $\thirdC$
requires $O(K\log_S K)$ blocks of bandwidth.
The uploading and downloading while processing destination buckets
requires at most $2N$ blocks of bandwidth.
We now show that the last execution of $\thirdC$ (or $\fullC$)
is performed over $O(\epsilon K)$ blocks.
That implies that the algorithm has a total of $2N + (1+\epsilon)O(K\log_S K)$ blocks of bandwidth.

\begin{lemma}
If $S=\omega(\log N)$ and $K\leq N/2$,
then the number of ciphertexts left before the third phase starts is at most
$$ \frac{4\cdot\epsilon}{1-\epsilon}\,K$$
except with probability negligible in $N$.
\end{lemma}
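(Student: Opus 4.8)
The plan is to account separately for the two kinds of ciphertexts that survive into the third phase: the padded remainder buckets $\rem_1,\ldots,\rem_{l-1}$ and the still-unprocessed touched buckets $\tCt_l,\ldots,\tCt_q$. Since each $\rem_j$ is padded to exactly $2\epsilon K/q$ ciphertexts and each $\tCt_j$ holds exactly $S$ ciphertexts, the total number of leftover ciphertexts is
$$R=(l-1)\cdot\frac{2\epsilon K}{q}+(q-l+1)\cdot S.$$
Because $l-1\le q$, the first term is at once bounded by $2\epsilon K\le\frac{2\epsilon}{1-\epsilon}K$, so the whole estimate reduces to controlling the number $m:=q-l+1$ of touched buckets that the second phase never reaches.

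To bound $m$ I would use a counting identity on the untouched downloads. In round $j$ the second phase downloads exactly $u_j=|\dInd_j|-(1-\epsilon)K/q$ untouched blocks, and since $(\dInd_1,\ldots,\dInd_q)$ partitions $[N]$ we have $\sum_{j=1}^q u_j=N-(1-\epsilon)K$. On the other hand the second phase halts precisely when all $N-K$ untouched blocks have been downloaded, so $\sum_{j=1}^{l-1}u_j\ge N-K$; subtracting gives
$$\sum_{j=l}^{q}u_j\le\epsilon K.$$
Thus the leftover touched buckets collectively owe only $\epsilon K$ untouched downloads, and if each $u_j$ is not too small this forces $m$ to be small.

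The remaining ingredient is therefore a lower bound on $u_j$, and this is the step I expect to be the most delicate. Because the untouched indices are spread at random over the $q$ buckets, each $|\dInd_j|$ has mean $N/q$ with $N/q=\Omega(S)=\omega(\log N)$, so a Chernoff bound (Theorem~\ref{thm:chernoff}) together with a union bound over the $q\le N$ buckets shows that $|\dInd_j|\ge(1-\epsilon/2)N/q$ for every $j$ except with probability negligible in $N$. Conditioning on this event and using the hypothesis $K\le N/2$, a short computation gives $u_j\ge N/(2q)\ge K/q$. Feeding this into the identity above yields $m\le\epsilon K/(K/q)=\epsilon q\le\frac{2\epsilon}{1-\epsilon^2}q$, so the touched-bucket term is
$$(q-l+1)\cdot S=m\cdot\frac{(1+\epsilon)K}{q}\le\frac{2\epsilon}{1-\epsilon}\,K.$$

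Combining the two bounds gives $R\le 2\epsilon K+\frac{2\epsilon}{1-\epsilon}K\le\frac{4\epsilon}{1-\epsilon}K$, as claimed, the only failure event being the negligible-probability deviation of some $|\dInd_j|$. The two subtleties I would treat carefully are (i) the boundary round $l-1$, where one should observe that the second phase downloads \emph{at most} $u_j$ fresh untouched blocks per round, which is exactly what is needed for $\sum_{j=1}^{l-1}u_j\ge N-K$; and (ii) ensuring the concentration argument for $|\dInd_j|$ matches the precise way $\kC$ extends the touched partition $(\tInd_1,\ldots,\tInd_q)$ to the full partition $(\dInd_1,\ldots,\dInd_q)$ — if instead the algorithm balances the buckets so that $|\dInd_j|=N/q$ exactly, this step becomes deterministic and the negligible failure probability here disappears entirely.
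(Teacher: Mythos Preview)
Your proof is correct and follows the same high-level decomposition as the paper --- bound the $\rem$ contribution by $2\epsilon K$, then bound the number $m=q-l+1$ of unprocessed touched buckets by a per-bucket lower bound --- but the mechanism you use for the second step is different. The paper observes that every untouched block destined for $\dInd_l,\ldots,\dInd_q$ already sits in some $\rem_j$, so there are at most $2\epsilon K$ such untouched blocks in total; it then applies Chernoff to the number of \emph{untouched} blocks per bucket (mean $(N-K)/q$) to get $m\le\frac{2\epsilon K}{(1-\epsilon)(N-K)/q}$. You instead use the additive identity $\sum_j u_j=N-(1-\epsilon)K$ together with the stopping condition $\sum_{j<l}u_j\ge N-K$ to get $\sum_{j\ge l}u_j\le\epsilon K$, and then apply Chernoff to $|\dInd_j|$ itself (mean $N/q$) to lower-bound each $u_j$. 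Both routes land on the same $\frac{4\epsilon}{1-\epsilon}K$; yours is arguably a little cleaner arithmetically because the counting identity on the $u_j$'s is exact and avoids the detour through the $\rem_j$'s a second time, while the paper's version makes the role of the leftover untouched blocks more explicit. Your caveats (i) and (ii) are on point: (i) is handled exactly as you say, and for (ii) the paper does assign both touched and untouched indices to buckets uniformly at random, so your Chernoff bound on $|\dInd_j|$ is justified.
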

\begin{proof}
Let $\dInd_l$ be the first subset that has not been processed by the second phase of the algorithm.
Therefore the third phase receives
$$|\rem_1|+\ldots+|\rem_{l-1}|+|\tCt_l|+\ldots+|\tCt_q|$$
ciphertexts from the second phase.
We know that $|\rem_j|=2\epsilon K/q$ and therefore the $(l-1)$ $\rem_j$'s contribute
at most $2\epsilon K$ ciphertexts.
Moreover, we know that $|\tCt_j|=S$ and therefore we only need to upper bound the number
$(q-l+1)$ of touched buckets that are left for the third phase.

First observe that the $\rem_1,\ldots,\rem_{l-1}$ contain encryptions of all the untouched blocks
for $\dInd_j$ for $j=l,\ldots,q$. Therefore the number of untouched blocks in the last $q-l+1$ subsets
$\dInd_l,\ldots,\dInd_q$ is at most $2\epsilon K$.
Moreover, since each untouched block is assigned to a randomly chosen $\dInd_j$,
we have the expected number of untouched blocks in $\dInd_j$ is $(N-K)/q\geq S$.
Therefore, by Chernoff Bounds and since $S=\omega(\log N)$,
$\dInd_j$ contains at least $(1-\epsilon)(N-K)/q$ untouched
blocks except with probability negligible in $N$.
Hence, we have
$$q-l+1\leq \frac{2\epsilon K}{1-\epsilon}\cdot\frac{q}{N-K}\leq
                 2\epsilon\cdot\frac{1+\epsilon}{1-\epsilon}\frac{K}{S}.$$
\end{proof}

The above lemma assumes that $K \le N/2$. If $K > N/2$,
we can instead use $\thirdC$ without any performance loss.
Finally we prove $K$-obliviousness of $\kC$ by showing that
the transcript $\transmD^{\kC}(\B, \pi, \sigma, \touched)$ is generated
independently of $\sigma$ given $\pi$ and $\pi(\touched)$.

\begin{theorem}
        $\kC$ is a {\KObShuffA} algorithm.
\end{theorem}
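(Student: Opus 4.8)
The plan is to prove $K$-obliviousness of $\kC$ by invoking the framework from the ``Move-Based'' subsection: it suffices to show that the move transcript $\transmD^{\kC}(\B,\pi,\sigma,\touched)$ is independent of $\sigma$ given $\touched$ and $\pi(\touched)$, and then combine this with IND-CPA security of the encryption scheme. I would decompose the transcript phase by phase, exactly matching the three phases of the algorithm, and argue $\sigma$-independence of each phase's server accesses separately, conditioning on the (negligible-probability) event that the algorithm does not abort.

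First I would handle the \emph{first phase}. Here the touched blocks are assigned to buckets $\tCt_1,\ldots,\tCt_q$ by invoking $\thirdC$ (with the final $\Recalibrate$ omitted). Since we have already proved that the move transcript of $\thirdC$ is independent of $\sigma$ (the theorem concluding Section~\ref{sec:fullRecursive}), and since the partition $(\tInd_1,\ldots,\tInd_q)$ is a \emph{random} partition of $\sigma(\touched)$ chosen independently of $\sigma$, the accesses in this phase carry no information about $\sigma$ beyond what is already fixed by $\touched$ and $\pi(\touched)$.

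Next I would treat the \emph{second phase}, which is the crux. The key observation to isolate is that, unlike the naive approach described in the text, each round $j$ downloads \emph{exactly} $u_j=|\dInd_j|-(1-\epsilon)K/q$ untouched blocks and uploads a $\rem_j$ padded to \emph{exactly} $2\epsilon K/q$ blocks --- both counts are fixed quantities that do not depend on how many touched blocks actually land in $\dInd_j$. This is precisely the fix that defeats the leakage the text warns about. I would argue that within round $j$ the individual download sources are, as in $\firstC$, either a fresh random index drawn from $\tbDown_j$ (manifestly $\sigma$-independent) or the index $\source[\pi(\sigma^{-1}(i))]$ for an untouched $i$; since untouched indices are filled in randomly by $\ch$ and $i\notin\touched$, the value $\pi(\sigma^{-1}(i))$ is independent of $\sigma$ given $\pi(\touched)$, reusing the argument from the $\firstC$ theorem verbatim. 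The upload destinations run over $\dInd_j$ in increasing order, and the overflow-redirection rule (spill into $\dInd_q$, then $\dInd_{q-1}$, etc.) is a deterministic function of the \emph{fixed} bucket sizes, hence again $\sigma$-independent. \textbf{The hard part} will be making rigorous that the bucket sizes $|\dInd_j|$ and the redirection pattern are functions only of the random partition and not of $\sigma$: one must verify that $\dInd_j$ extends $\tInd_j$ via a random assignment of the untouched indices, so the whole partition $(\dInd_1,\ldots,\dInd_q)$ is chosen independently of $\sigma$, and therefore the counts $u_j$ and the spill decisions are measurable with respect to randomness independent of $\sigma$.

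Finally, the \emph{third phase} shuffles the $O(\epsilon K)$ leftover ciphertexts from $\rem_1,\ldots,\rem_{l-1}$ and $\tCt_l,\ldots,\tCt_q$ into $\dInd_l,\ldots,\dInd_q$ using $\thirdC$, whose move transcript we have already shown to be $\sigma$-independent. I would close by a union bound over the three phases: conditioned on non-abort (which occurs except with negligible probability by the preceding lemmas, since $S=\omega(\log N)$), each phase's accesses are independent of $\sigma$, so the full move transcript is too; combining with IND-CPA to handle the uploaded ciphertexts, the result follows from the Move-Based reduction. The one subtlety to flag is that the number of rounds actually executed in the second phase (the index $l$) is itself a random variable, but it too is determined by the $\sigma$-independent partition and the fixed per-round counts, so conditioning on its value does not reintroduce dependence on $\sigma$.
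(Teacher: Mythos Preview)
Your proposal is correct and follows essentially the same approach as the paper: establish that the move transcript is independent of $\sigma$ given $\touched$ and $\pi(\touched)$, by appealing to the already-proved obliviousness of $\thirdC$ for the first and third phases and, for the second phase, arguing that the per-round download count $u_j$ depends only on the $\sigma$-independent bucket sizes while the actual download locations lie in $\pi([N]\setminus\touched)$, which the challenger fills independently of $\sigma$. Your write-up is in fact more careful than the paper's own proof---you explicitly address the spill mechanism, the fixed padding of $\rem_j$, and the stopping index $l$, all of which the paper glosses over---but the underlying argument is the same.
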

\begin{proof}
We know the $\Spray$ phase and execution of $\thirdC$ are independent from previous sections.
Note, the destination buckets are revealed during the $\Recalibrate$ phase.
When we are uploading to $\dest$, round $i$ of $\Recalibrate$
uploads one block exactly to each index of $\dBuck_i$.
However, all destination buckets are generated independently of $\sigma$.
Furthermore, the cardinality of destination buckets are generated
independently of $\sigma$ implying the number of untouched blocks
downloaded each round is also independent of $\sigma$.
All untouched blocks downloaded belong to the set of indexes
$\pi([N]\setminus\touched)$, which are generated by the challenge $\calC$
independent of $\sigma$. Therefore, $\transmD^{\kC}(\B,\pi,\sigma,\touched)$
is independent of $\sigma$, given $\touched$ and $\pi(\touched)$.
\end{proof}

\section{\kOS\ with Dummy Blocks}
\label{sec:dummy}
In this section, we consider an extension of the version of a {\KObShuffA} algorithm which has
applications in fields such as Oblivious RAM constructions.
Recall, our reference scenario is a cloud storage model
with a {\em client} that wishes to outsource the storage
of $N$ data blocks of identical sizes and identified with the integers in $[N]$
to a {\em server}.
Suppose that the client also wants to upload $D$ {\em dummy} blocks identified
with the integers $N+1,\ldots,N+D$.
The values of dummy blocks are meaningless and they might be used to help mask actions
from an adversarial server.
Clearly,
we can arbitrarily pick a value for the dummy blocks and run a \KObShuffA\ algorithm on the $N+D$ blocks.
In this section we show that, at the price of having the server perform some computation,
we can design a more efficient algorithm in terms of bandwidth.
Because of the computation that must be performed by the server, the algorithm is not a move-based algorithm.

\subsection{Problem Definition}
We modify the permutation maps $\pi$ and $\sigma$ to account for dummy blocks.
Specifically, we define $\pi,\sigma:[N+D]\rightarrow([N]\cup\{\perp\})$ and, as before,
the value $\pi(j)=i\in[N]$ means that the $i$-th data block is stored in
location $j$ on the server.
Instead, if $\pi(j)=\perp$, then location $j$ on the server contains a
{\em dummy} block. Dummy blocks can be any arbitrary value but still
the server which blocks are dummy or not, since that reveals information about $\sigma$.
Note that the number, $N$, of real blocks and the number, $D$, of dummy blocks are known to an adversary and
we set $M:=N+D$.

The security game remains the same. A $K$-restricted adversary, $\calA$,
gets to know the value of $K$ indices of $\pi$. Note, these $K$ indices
might correspond to dummy blocks.
Afterwards, the challenger, $\calC$, fills in the remaining $M-K$ uniformly at random
such that each of the $N$ blocks
appears exactly once and the rest of the locations contain dummy blocks.
The crux of the security remains hiding any information about $\sigma$ from the adversarial server.

\subsection{Polynomial Interpolation}
The main tool of this section is polynomial interpolation and relies on the
property that there exists a unique degree $k-1$ polynomial that passes
through $k$ different points with distinct x-coordinates.

We will work on a field $\F_p$, where $p$ is a prime
whose bit length is at least the bit length of block encryptions (including metadata).
Suppose the client wishes to upload $n$ real blocks
$B_1,\ldots,B_n$ to locations $i_1,\ldots,i_n$
and
$d$  dummy blocks at locations $j_1,\ldots,j_d$ on the server.
The client first computes the unique degree $n-1$ polynomial $P$
that passes through the points $(i_1,\Enc(\key,B_1)),\ldots,(i_n,\Enc(\key,B_n))$.
The polynomial $P$ can be constructed by solving the
Vandermonde matrix or using Lagrangian interpolation, which is computationally
faster.
The client then sends the $n$ coefficients of $P$ to the server along
with the indices $i_1,\ldots,i_n$ and $j_1,\ldots,j_d$ in some arbitrary pre-fixed order; e.g., increasing.
The server evaluates the polynomial at the indices received and
writes the value obtained at the location specified by the indices.
Note that the $n$ coefficients need bandwidth equal to $n$ blocks to be transferred.
The obvious property that we are using here to hide which blocks are dummy and which are not
is that any subset of $n$ points of the $m=n+d$ points on which the server is asked to evaluate the polynomial $P$
would have given the same polynomial $P$. Therefore, the memory of the $n$ points corresponding to the real blocks
that the algorithm used to determine $P$ is completely lost.
Polynomial interpolation via the Vandermonde matrix was used in~\cite{partition_oram}.

\subsection{\fourthC\ Description}
In this section we describe an \KObShuffA\ algorithm, $\fourthC$, that can be used
when a constant fraction of the blocks are dummies and that uses $O(K)$ client storage.
The algorithm is parametrized by parameter $0<\epsilon<1$ and
is adapted from $\firstC$ and its design requires some extra care to make the
the polynomial interpolation technique applicable. Specifically, a naive application
of the technique might reveal the number of (or an upper bound on) the number of non-dummy blocks from
among the $K$ touched blocks.
Indeed, the actual fraction $\rho:=\frac{N}{N + D}$ of real blocks is assumed to be known
but the algorithm should not leak the fraction of real in a smaller set of blocks and,
specifically, in the set $\touched$ of the touched blocks.
However, we use the fact that if we pick any set of $L$ blocks at random,
approximately $\rho L$ blocks will be real, for $L$ large enough and,
of course, this means that approximately $(1-\rho)L$ blocks will be dummies.

We proceed to describe $\fourthC$ formally now. Similar to $\firstC$,
we download all $K$ touched blocks onto the client, that is the set
$\source[\pi(\touched)]$. Set $p = \frac{N+D}{L}$.
Now, partition the set $[N+D]$ into
$p$ subsets, $\dInd_1,\ldots,\dInd_p$. For each $d \in [N+D]$,
$d$ is assigned uniformly at random to one of $\dInd_1,\ldots,\dInd_p$.
We set $\tbDown$ to the indices of $\source$ that have yet to be downloaded
and initialize $\tbDown = [N+D]\ /\ \touched$.
We process each of the $p$ partitions, one at a time.
If $|\dInd_i| > (1+\epsilon)L$, then $\fourthC$ aborts and fails.
Fix an order of $\dInd_i$, say increasing.
For each $d \in \dInd_i$, if $\sigma^{-1}(d) \notin \tbDown$, then
an index, $r$, is chosen uniformly at random from $\tbDown$, if $\tbDown$ is non-empty.
We remove $r$ from $\tbDown$ and download the block from $\source[\pi(r)]$.
On the other hand, if $\sigma^{-1}(d) \in \tbDown$, then
$\source[\pi(\sigma^{-1}(d))]$ is downloaded and $\sigma^{-1}(d)$ is removed
from $\tbDown$.
Using $\sigma$, we may check the number of dummy and non-dummy blocks
that need to be uploaded to $\dest[\dInd_i]$. If there are more than
$(1 + \epsilon)\rho L$ non-dummy blocks, $\fourthC$ aborts and fails.
Otherwise, we apply the polynomial interpolation trick.
Specifically, we construct the $(1 + \epsilon)\rho L - 1$ degree polynomial
$f(x)$
using the points
$\{(d, \Enc(\Key, \source[\pi(\sigma^{-1}(d))])) :
d \in \dInd_i, \sigma(d) \ne \perp \}$.
If there are less than $(1 + \epsilon)\rho L$ non-dummy
blocks, we can just use dummy blocks (whose values can be chosen arbitrarily)
as points for interpolation.
The polynomial $f(x)$ is given to the server along with the set $\dInd_i$.
For all $d \in \dInd_i$, the server places
$f(d)$ into $\dest[d]$. The pseudocode of $\fourthC$ can be found
in Appendix~\ref{sec:fourth_code}.

\subsection{Properties of $\fourthC$}

\begin{theorem}
For every constant $\epsilon > 0$, $\fourthC$ uses $O(K+L)$ blocks of client
memory, $O(N+D)$ blocks of server memory and $D + (2+\epsilon) N$ blocks
of bandwidth.
\end{theorem}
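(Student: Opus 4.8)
The plan is to bound each of the three resources separately, since they are governed by essentially independent parts of the algorithm.

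\medskip

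\textbf{Bandwidth.} First I would account for the $D+(2+\epsilon)N$ bound by tracking exactly what crosses the network. In the download portion, every one of the $N+D$ blocks is downloaded exactly once: the $K$ touched blocks are pulled at the start, and the second phase downloads each remaining block in $\tbDown$ exactly once (each $\sigma^{-1}(d)$ or randomly chosen $r$ is removed from $\tbDown$ upon download). This contributes $N+D$ blocks. The crucial saving is on the upload side: rather than uploading each block, the algorithm uploads, for each of the $p$ partitions $\dInd_i$, only the $(1+\epsilon)\rho L$ coefficients of the interpolating polynomial $f(x)$ (together with the index set $\dInd_i$, whose description cost I would fold into the server-computation term of the bandwidth measure, or argue it is dominated). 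Since there are $p=(N+D)/L$ partitions, the total upload cost is $p\cdot(1+\epsilon)\rho L = (1+\epsilon)\rho(N+D)$. Recalling $\rho=\frac{N}{N+D}$, this is exactly $(1+\epsilon)N$. Summing, the bandwidth is $(N+D)+(1+\epsilon)N = D+(2+\epsilon)N$, as claimed.

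\medskip

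\textbf{Client memory.} Next I would bound client storage by $O(K+L)$. The $K$ touched blocks are held throughout, giving the $K$ term. While processing a single partition $\dInd_i$, the client holds at most the blocks destined for $\dInd_i$ plus any extra blocks pulled to keep the download count uniform; since the algorithm aborts unless $|\dInd_i|\le(1+\epsilon)L$, at most $O(L)$ untouched blocks are resident at once. I would argue the client can free blocks once a partition's polynomial has been sent, so the per-partition working set does not accumulate across the $p$ rounds. This yields $O(K+L)$ total.

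\medskip

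\textbf{Server memory.} Finally, the server holds the $N+D$ input ciphertexts and writes the $N+D$ output slots of $\dest$, so $O(N+D)$ suffices; this is the most routine of the three.

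\medskip

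The main obstacle I anticipate is handling the index-set and polynomial-description costs cleanly within the paper's formal bandwidth definition, which counts move operations plus circuit-description sizes divided by $B$. Unlike the move-based algorithms, $\fourthC$ uploads $p$ polynomials and index sets $\dInd_i$ rather than individual blocks, so I must verify that each polynomial of degree $(1+\epsilon)\rho L - 1$ truly costs $(1+\epsilon)\rho L$ blocks of bandwidth (one field element per coefficient, and field elements have bit length at least that of a block), and that the index descriptions either fall under the server-computation circuit term or are otherwise subsumed. I would also note that this theorem states only the resource bounds and implicitly presumes the algorithm does not abort; the abort probability and correctness/obliviousness are separate claims, so here I only need the accounting to hold whenever execution completes, namely under the partition-size guarantee $|\dInd_i|\le(1+\epsilon)L$ enforced by the aborts.
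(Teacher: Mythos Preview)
Your proposal is correct and follows essentially the same argument as the paper: both bound downloads by $N+D$ (each block fetched once), uploads by $p\cdot(1+\epsilon)\rho L=(1+\epsilon)N$, client memory by the $K$ cached touched blocks plus an $O(L)$ per-partition working set, and server memory by $\source$ and $\dest$. You are actually more careful than the paper about the polynomial-coefficient and index-set accounting; the paper simply asserts the per-phase upload cost and does not discuss the circuit-description term at all.
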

\begin{proof}
Note, downloading $\source[\pi(\touched)]$ requires $K$ blocks of client memory.
During each processing phase of $\dInd_i$, an extra $L$ blocks are downloaded
onto client memory. Afterwards, exactly $L$ blocks are sent back to the server.
Therefore, at any point in time, at most $O(K+L)$ blocks are on client memory.
Only $\source$ and $\dest$ are required on the server, meaning $O(N+D)$ blocks
of server memory. For bandwidth, note that each of the $N+D$ blocks of $\source$
are downloaded exactly once. In each of the $\frac{N+D}{L}$ phases,
$(1+\epsilon)(1-\rho) L$ blocks are uploaded. Therefore, a total of
$N + D + (1+\epsilon)\rho (N + D) = D + (2+\epsilon)N$ blocks of bandwidth are required.
\end{proof}

\begin{lemma}
\label{lemma:fourth_fail}
For every constant $\epsilon > 0$ and $L = \omega(\log N)$, then
$\fourthC$ aborts with negligible probability.
\end{lemma}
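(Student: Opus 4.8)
The plan is to observe that, since $\fourthC$ is adapted from $\firstC$ and performs no recursive shuffling calls, the only two events that can trigger an abort are, for some bucket $\dInd_i$: (i) the bucket is too large, $|\dInd_i| > (1+\epsilon)L$; and (ii) the bucket is assigned too many real blocks, i.e. more than $(1+\epsilon)\rho L$ non-dummy blocks must be written into $\dest[\dInd_i]$. I would bound the probability of each event for a single fixed bucket using the Chernoff bound of Theorem~\ref{thm:chernoff} and then take a union bound over all $p=(N+D)/L$ buckets.

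For event (i), recall that each of the $N+D$ indices of $[N+D]$ is placed into one of the $p$ buckets independently and uniformly at random. Fixing a bucket $\dInd_i$ and writing $X_j=1$ when index $j$ lands in $\dInd_i$ and $X_j=0$ otherwise, the variables $X_1,\ldots,X_{N+D}$ are independent with $\Pr[X_j=1]=1/p$, so $|\dInd_i|=\sum_j X_j$ has mean $\mu=(N+D)/p=L$. The first inequality of Theorem~\ref{thm:chernoff} then yields $\Pr[|\dInd_i|>(1+\epsilon)L]\le \exp(-\epsilon^2 L/(2+\epsilon))$. For event (ii), note that under $\sigma$ exactly $N$ of the $N+D$ destination indices receive a real block; call this fixed set $R$ with $|R|=N$. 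The number of real blocks landing in $\dInd_i$ is $\sum_{j\in R}X_j$, again a sum of $N$ independent $0/1$ variables of probability $1/p$, with mean $N/p=\rho L$ since $\rho=N/(N+D)$. Chernoff gives $\Pr[\,\cdot>(1+\epsilon)\rho L\,]\le\exp(-\epsilon^2\rho L/(2+\epsilon))$.

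Finally, since a constant fraction of the blocks are dummies, $\rho$ is a constant bounded away from $0$, so both $L$ and $\rho L$ are $\omega(\log N)$ and each of the two bounds is $\exp(-\omega(\log N))$, hence negligible in $N$. Union bounding the two events over all $p=(N+D)/L=\poly(N)$ buckets preserves negligibility, giving the claim. The only point requiring care is verifying that each family of indicator variables is genuinely independent---which holds because the bucket of each index is chosen independently---and that $\rho L=\omega(\log N)$, which is precisely where the constant-dummy-fraction hypothesis is used; everything else is a routine Chernoff-plus-union-bound argument, so I do not expect a substantive obstacle.
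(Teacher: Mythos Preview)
Your proposal is correct and follows essentially the same approach as the paper: identify the two abort conditions, bound each with a Chernoff inequality on the independent bucket-assignment indicators (means $L$ and $\rho L$ respectively), and union bound over the $p$ buckets. You are in fact slightly more careful than the paper in explicitly noting that $\rho$ is a positive constant so that $\rho L=\omega(\log N)$, which the paper's proof glosses over.
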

\begin{proof}
$\fourthC$ fails when there exists a partition $\dInd_i$ either with
more than $(1+\epsilon)L$ indexes or more than $(1+\epsilon)\rho L$ non-dummy
blocks, that is
$|\{d \in \dInd_i : \sigma(d) \ne \perp\}| > (1+\epsilon)\rho L$.
We show both these events happen with negligible probability using
Chernoff Bounds.
Fix any partition $\dInd_i$. For any index $d \in [N+D]$, the probability
that $d \in \dInd_i$ is $\frac{1}{p} = \frac{L}{N+D}$.
We set the variable $X_d = 1$
if and only if $d \in \dInd_i$ and $X_d = 0$ otherwise.
Let $X = X_1 + \ldots + X_{N+D}$ and note that $\E[X] = L$.
By Chernoff Bounds and since $L = \omega(\log N)$,
$$
\Pr[X > (1+\epsilon)L] \le e^{-\frac{(1+\epsilon)^2}{3+\epsilon}L} = \negl(N).
$$
We further define variable $Y_d = 1$ if and only if
$d \in \dInd_i$ and $\sigma(d) \ne \perp$. Otherwise, $Y_d = 0$.
Set $Y = Y_1 + \ldots + Y_{N+D}$, which is the number of non-dummy blocks
destined for indexes in $\dInd_i$. Note that $\E[Y] = \rho L$.
By Chernoff Bounds and since $L = \omega(\log N)$,
$$
\Pr[Y > (1 + \epsilon)\rho L] \le e^{-\frac{(1+\epsilon)^2}{3+\epsilon}\rho L} = \negl(N).
$$
Therefore, the probability that $\fourthC$ fails when processing $\dInd_i$
is negligible in $N$. Finally, by Union Bound over $\dInd_1,\ldots,\dInd_p$,
we complete the proof.
\end{proof}

\begin{lemma}
\label{lemma:fourth_obv}
If $\fourthC$ does not abort, then $\fourthC$ is \kO.
\end{lemma}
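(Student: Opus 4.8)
The plan is to adapt the argument used for move-based algorithms, the difference being that $\fourthC$ is not move based: besides download sources it also outputs, for each partition $\dInd_i$, a server-computation step consisting of the index set $\dInd_i$ together with the coefficient vector of the interpolating polynomial $f_i$. Accordingly I would split the transcript $\transmD^\fourthC(\Enc(\key,\B_b),\pi,\sigma_b,\touched)$ into two parts: the \emph{access pattern} (the ordered list of download sources, the partition sets $\dInd_1,\ldots,\dInd_p$, and the degree of each $f_i$) and the \emph{data} (the initial ciphertexts in $\source$ and the coefficient vectors of the $f_i$). I would first show that, conditioned on $\fourthC$ not aborting, the access pattern is distributed independently of $\sigma$ given $\touched$ and $\pi(\touched)$, and then show that, conditioned on any fixed access pattern, the data is computationally indistinguishable for the two challenge choices by a reduction to $\indCpa$.

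For the access pattern, the partition $\dInd_1,\ldots,\dInd_p$ is obtained by assigning each index of $[N+D]$ to a uniformly chosen one of the $p$ subsets, so it is manifestly independent of $\sigma$; in particular the sizes $|\dInd_i|$, and hence the number of downloads performed in each round (exactly one per $d\in\dInd_i$ until $\tbDown$ is exhausted), are independent of $\sigma$. For the download sources I would reuse the $\firstC$ argument essentially verbatim: the phase-one download sources are exactly the positions $\pi(\touched)$, and in phase two each source is either a uniformly random element of the current $\tbDown$ (plainly $\sigma$-independent) or $\pi(\sigma^{-1}(d))$ for some $d$ with $\sigma^{-1}(d)\notin\touched$, in which case $\pi(\sigma^{-1}(d))$ is a uniformly random not-yet-revealed non-touched location and therefore independent of $\sigma$ given $\touched$ and $\pi(\touched)$. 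Finally, the crucial point for non-abort: since we pad with dummy points to exactly $(1+\epsilon)\rho L$ interpolation points, every $f_i$ has degree exactly $(1+\epsilon)\rho L-1$, so the length of each server-computation step is a fixed constant that carries no information about $\sigma$.

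For the data, I would invoke the interpolation property stated earlier in this section: the coefficient vector of $f_i$ is a fixed linear (Vandermonde-inverse) function of the values at any $(1+\epsilon)\rho L$ of its evaluation points, so the coefficients reveal nothing about \emph{which} indices of $\dInd_i$ carried real blocks---that information is completely lost. What remains are ciphertext values: the entries of $\source$ and the point-values feeding each $f_i$ are all of the form $\Enc(\key,\cdot)$. A standard hybrid argument then replaces these ciphertexts one at a time by encryptions of a fixed value; each swap is indistinguishable by $\indCpa$, and because the coefficients are a deterministic function of the point-values, indistinguishability of the ciphertexts propagates to indistinguishability of the coefficients. After all swaps the transcript no longer depends on $(\B_b,\sigma_b)$, so the two challenge transcripts are computationally indistinguishable and no $K$-restricted $\poly(\lambda)$-time $\adv$ wins $\ShuffG^\adv_\fourthC$ with more than negligible advantage.

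I expect the main obstacle to be exactly the part that the move-based meta-theorem does not cover: arguing rigorously that the polynomial coefficients leak nothing beyond $\indCpa$. This needs the interpolation property (to kill the dependence on the identity of the real points) and the fixed-degree guarantee from the no-abort hypothesis (to keep the transcript length constant) to be combined cleanly with the $\indCpa$ hybrid; care is also needed that the padding dummies used as interpolation fillers are themselves encrypted, so that a real-versus-dummy point is never distinguishable in the coefficient vector.
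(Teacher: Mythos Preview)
Your proposal is correct and follows essentially the same approach as the paper: reduce the access-pattern part to the $\firstC$ argument, and handle the new server-computation step via the interpolation property (any $(1+\epsilon)\rho L$ of the evaluation points determine the same polynomial, so the coefficients cannot reveal which indices carried real blocks). The paper's own proof is a two-line sketch that states exactly these two points and defers the remainder to the proof for $\firstC$; your write-up is simply a more explicit version that spells out the fixed-degree consequence of non-abort and the $\indCpa$ hybrid over the ciphertext point-values.
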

\begin{proof}
$\fourthC$ and $\firstC$ only differ in that
$\fourthC$ uploads a description of a polynomial using $(1+\epsilon)\rho L$
values instead of all $L$ values like $\firstC$.
Note, picking any subset of $(1+\epsilon)\rho L$ of the $\dInd_i$ would
have resulted in the same polynomial. Therefore, it is impossible
to distinguish dummy and non-dummy blocks using the polynomial.
The rest of the proof is similar to $\firstC$.
\end{proof}

\begin{theorem}
For every $\epsilon > 0$ and $L = \omega(\log N)$, $\fourthC$ is \kO\ except
with negligible probability.
\end{theorem}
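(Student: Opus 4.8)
The plan is to derive the theorem directly from the two preceding lemmas, treating the failure event as the negligible ``bad'' event. First I would let $\mathsf{Abort}$ denote the event that $\fourthC$ aborts while processing some partition $\dInd_i$. By Lemma~\ref{lemma:fourth_fail}, for $L=\omega(\log N)$ we have $\Pr[\mathsf{Abort}]=\negl(N)$, and since $N=\poly(\lambda)$ this is also negligible in $\lambda$; moreover the bound there is uniform over $\sigma$, because for any fixed $\sigma$ the number of real destinations landing in a uniformly random $\dInd_i$ is binomially distributed with parameters depending only on $N$, $D$ and $L$. This lets me decompose the winning probability in $\ShuffG^\adv_{\fourthC}$ by conditioning on $\mathsf{Abort}$: the abort branch contributes at most $\Pr[\mathsf{Abort}]$ for each value of the challenge bit, so it suffices to bound the advantage conditioned on $\neg\mathsf{Abort}$.

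Conditioned on $\neg\mathsf{Abort}$, I would invoke Lemma~\ref{lemma:fourth_obv} to argue that the transcript carries no information about $\sigma$ beyond $\touched$ and $\pi(\touched)$. The subtlety, and the reason the generic move-based criterion does not apply verbatim, is that $\fourthC$ performs server computation: the transcript additionally contains, for each partition, the public index set $\dInd_i$ together with the $(1+\epsilon)\rho L$ coefficients of the interpolating polynomial $f$. I would split this into two parts. The move-like part (the download sources $\source[\pi(\cdot)]$, the destination sets $\dInd_i$, and their fixed processing order) is handled exactly as for $\firstC$: the partition of $[N+D]$ is drawn independently of $\sigma$, and the block downloaded for each $d$ is either $\sigma^{-1}(d)\in\tbDown$, whose $\pi$-position is independent of $\sigma$ given $\touched$ and $\pi(\touched)$, or a uniformly random element of $\tbDown$. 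The polynomial part is precisely the content of Lemma~\ref{lemma:fourth_obv}: since any $(1+\epsilon)\rho L$-subset of the interpolation points determines the same $f$, the coefficient vector reveals nothing about which positions of $\dInd_i$ hold real versus dummy blocks.

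To convert this structural independence into a negligible distinguishing advantage I would run a standard hybrid argument anchored on IND-CPA. Starting from the real game with bit $b$, I would replace, one at a time, every block ciphertext appearing in the view, those in the initial server array $\source$, those produced by upload moves, and the ciphertext $y$-values that define $f$, by encryptions of a fixed plaintext; each swap is justified by a single IND-CPA hop, and since there are only polynomially many ciphertexts the accumulated loss is $\negl(\lambda)$. In the final hybrid the entire transcript is a function only of $\pi$ restricted to $\touched$, the random partition, and content-free ciphertexts, and hence is distributed identically for $(\B_0,\sigma_0)$ and $(\B_1,\sigma_1)$. Combining the two bounds yields $\Pr[\ShuffG^\adv_{\fourthC}(N,\lambda)=1]\le \tfrac12+\negl(\lambda)$.

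I expect the main obstacle to be the treatment of the polynomial coefficients inside the hybrid. One must observe that the coefficient vector is a public, deterministic, $\F_p$-linear function of the block ciphertexts (the $x$-coordinates being the public indices in $\dInd_i$), so that a reduction can take IND-CPA challenge ciphertexts, \emph{compute} the coefficients itself, and feed the assembled transcript to the shuffle adversary; any advantage of the latter then transfers to the underlying ciphertexts, contradicting IND-CPA. Once the coefficients are expressed this way the reduction is routine, and the only remaining bookkeeping, that the abort decision depends solely on $\sigma$-independent quantities, namely partition sizes and the identically distributed count of real blocks in a random $\dInd_i$, follows from the same observations already used in Lemma~\ref{lemma:fourth_fail}.
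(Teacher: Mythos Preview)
Your proposal is correct and follows the same two-step decomposition as the paper: invoke Lemma~\ref{lemma:fourth_fail} for the negligible abort probability and Lemma~\ref{lemma:fourth_obv} for $K$-obliviousness conditioned on non-abort. The paper's own proof is in fact just those two sentences; the additional material you supply---the uniformity of the abort bound over $\sigma$, the IND-CPA hybrid over all ciphertexts, and the observation that the polynomial coefficients are a public linear map of the underlying ciphertexts so that the reduction can compute them itself---is detail the paper leaves implicit in its framework and in the proof of Lemma~\ref{lemma:fourth_obv}, but it is correct and makes the argument more self-contained.
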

\begin{proof}
By Lemma~\ref{lemma:fourth_fail}, $\fourthC$ aborts with negligible probability.
By Lemma~\ref{lemma:fourth_obv}, if $\fourthC$ does not abort,
$\fourthC$ is \kO.
\end{proof}

\noindent
We note that when $K \ge L$, then the client storage can simply be represented as $O(K)$.

\section{Applications to Oblivious RAM}
In this section we show how to apply the constructions of the previous sections to the problem
of designing efficient ORAM.
\subsection{Problem Definition}
We formally define the Oblivious RAM problem here.
A client wishes to outsource their data to a server.
The client's data consists of $N$ blocks, each containing exactly
$B$ words. Before uploading to the server, the client
will encrypt blocks using an \indCpa\ scheme. We will implicitly assume
that before uploading a block, the client always encrypts.
Similarly, after downloading a block, the client will automatically decrypt.
Encryption ensures the data contents are not revealed to the server.
However, the patterns of accessing blocks may reveal information.
Oblivious RAM protocols protect the client from giving information
about accesses to the server. Formally, an adversarial server
cannot distinguish two patterns with the same number of accesses
in an Oblivious RAM scheme. We describe a modern variation of
the first ORAM scheme described in
\cite{Goldreich:1987:TTS:28395.28416} and show improvements by using $\firstC$.

\subsection{Original Square Root ORAM}
It is assumed that $N$ blocks are stored on the server according
some permutation $\pi$ (which could be pseudorandom like the Sometimes-Recurse Shuffle).
The permutation $\pi$ is stored on the client, hidden from the
server.
Additionally, the client initially has $\sqrt{N}$ empty slots for blocks.
To query for block $q$, the client first checks if block $q$
exists in one of the $\sqrt{N}$ client block slots.
If block $q$ is not on the client, ask the server to download the block
at location $\pi(q)$ and store it in an empty slot on the client.
Otherwise, the client asks the server for an arbitrary location that has
not previously been downloaded, which is also stored on the client.
The client can perform $\sqrt{N}$ queries (until all slots are filled),
before an oblivious shuffle occurs. In the original work,
the AKS sorting network~\cite{AKS} was used.
However, AKS is too slow for practice
due to large constants, so Batcher's Sort~\cite{Batcher}
is usually used for practical solutions.
We replace AKS with $\firstC$ with
$K = \sqrt{N}$.
The $\sqrt{N}$ downloaded blocks will act as the
revealed indices of $\touched$. We note that all revealed indices are
already on the client, so the initial download of revealed indices
can be skipped. Therefore, exactly $2N - \sqrt{N}$ blocks of bandwidth
is used by $\firstC$.
Note, for each query, we use exactly $1$ block of bandwidth.
So over $\sqrt{N}$ queries, exactly $2N$ blocks of bandwidth are required.
The amortized bandwidth of this protocol is $2\sqrt{N}$, which is at least 5x
better than any previous variant.

\begin{center}
        \begin{tabular}{ | l | l | }
        \hline
        Shuffling Algorithm & Amortized Bandwidth\\ \hline
        Batcher's Sort~\cite{Batcher} & $O(\sqrt{N} \log^2 N)$ \\ \hline
        AKS~\cite{AKS} & $O(\sqrt{N} \log N)$ \\ \hline
        MelbourneShuffle~\cite{Ohrimenko2014} & $(10 + \Theta(1)) \sqrt{N}$ \\ \hline
        \hline
        $\firstC$ & $2\sqrt{N}$ \\ \hline
        \end{tabular}
\end{center}

When we applied $\firstC$, we never showed that security remains intact.
We will now argue that the resulting construction is still an Oblivious
RAM. Consider any two access sequences of equal length. If they perform
less than $\sqrt{N}$ queries, the access sequences are clearly
indistinguishable.
Suppose that there are more than $\sqrt{N}$ queries.
Note, the $N - \sqrt{N}$ remaining untouched blocks were previously
obliviously shuffled. The adversary only knows that these blocks are
untouched, but cannot determine the plaintext identity.
However, for the $\sqrt{N}$ touched blocks, the server can identify
the exact order for which they were queried. When $\firstC$ executes,
the adversary is unable to distinguish whether any resulting block
was previously touched and/or untouched. Therefore, future queries
remain hidden from the adversary and this argument remains identical
after every execution of $\firstC$.

\subsection{Hierarchical ORAMs}
In the work of Goldreich and Ostrovsky~\cite{Goldreich:1996:SPS:233551.233553},
they presented the first polylog overhead Oblivious RAM algorithm.
Further work by Ostrovsky and Shoup~\cite{Ostrovsky:1997:PIS:258533.258606},
improved the worst case overhead.
In both schemes, an oblivious shuffle is required to permute
data blocks randomly in a manner hidden from the adversarial server.
Furthermore, a constant fraction of the data blocks are dummies whose
values can be arbitrary.
Therefore, by using $\fourthC$, we improve the hidden constants by at least
5x compared to constructions which use the MelbourneShuffle.

\section{Experiments}
\label{sec:experiments}
In this section, we empirically investigate the hidden
constants of \nameAlgoS.
We first investigate the necessary client storage of $\secondC$
for various parameters.
Also, the performance of $\fourthC$ is
compared to $\firstC$ for handling dummy blocks.

\begin{figure*}
\centering
\subfloat[Required client storage for $\secondC$.]{
\label{fig:cache_size}
\includegraphics[width=0.5\textwidth]
        {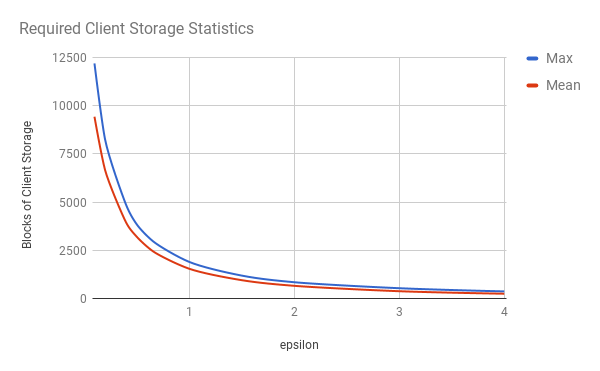}}
\subfloat[Client storage over trials for $\secondC$.]{
\label{fig:cache_trials}
\includegraphics[width=0.5\textwidth]{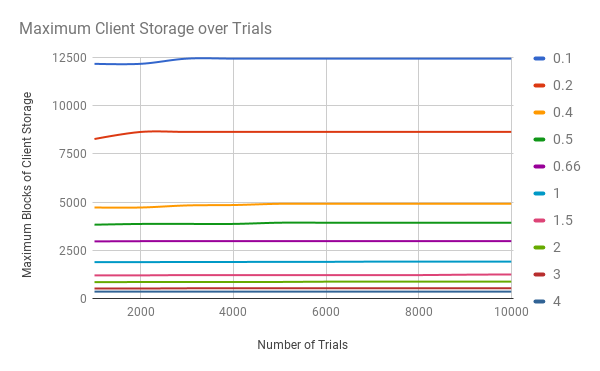}}
\caption{Client Storage of $\secondC$.}
\end{figure*}

\subsection{Client Storage of $\secondC$}

We consider multiple instantiations of $\secondC$ with various
parameters of $\epsilon$. Each instance is executed over
one million blocks of data.
From Figure~\ref{fig:cache_size}, we see that both the
max and mean cache sizes exponentially decrease as $\epsilon$
increases.
Also, the max and mean cache sizes do not differ significantly.
Furthermore, we run each instance of $\secondC$ on one million blocks
over multiple trials
and record the max cache size encountered.
It turns out the max cache size is reached fairly quickly and does not
change as the number of trials increase (see Figure~\ref{fig:cache_trials}).

\subsection{Bandwidth Comparison of $\secondC$ and MelbourneShuffle}
In these experiments, we will investigate the hidden constants of
$\secondC$ and compare them with the MelbourneShuffle when $K = N$.
Asymptotically, both algorithms require $O(\sqrt{N})$ blocks
of client storage and $O(N)$ blocks of bandwidth.
For practical use cases, the hidden constants are important.
For example, the constants affect the costs that cloud service providers
must consider for their products.
To provide a fair comparison, we will ensure to pick parameters such that
\nameAlgoS\ uses the same client storage as the Melbourne Shuffle.
The Melbourne Shuffle requires $O(\sqrt{N})$ client storage.
Therefore, we can use $\nameAlgoS_{\sqrt{K}, \epsilon}$ since $K = N$
with some small $\epsilon > 0$.

Using crude analysis of the hidden constants,
we already know that MelbourneShuffle requires $(10+c)N$
blocks of bandwidth and some $c = \Theta(1)$.
On the other hand, $\secondC$ uses only $(4+\epsilon)N$ blocks
of bandwidth, for small $\epsilon$.
In both algorithms, $c$ and $\epsilon$ are directly related with
the bandwidth as well as the hidden constants of the required
client storage.
We attempt to quantify
$c$ and $\epsilon$ for practical data sizes and a fixed number of blocks
of client storage.

Our experiments run both $\secondC$ and MelbourneShuffle
using the same input
and output permutations. Furthermore, we assume
that exactly $\sqrt{N}$ blocks of client storage are available.
It turns out that $\epsilon < 1$ is sufficient for $\secondC$,
while $c \ge 8$ is required for MelbourneShuffle.
Therefore, $\secondC$ is at least a 4x improvement over MelbourneShuffle.
A comparison of the performances can be seen in Figure~\ref{fig:melbourne}.

\begin{figure}[H]
	\centering
	\includegraphics[width=0.5\textwidth]{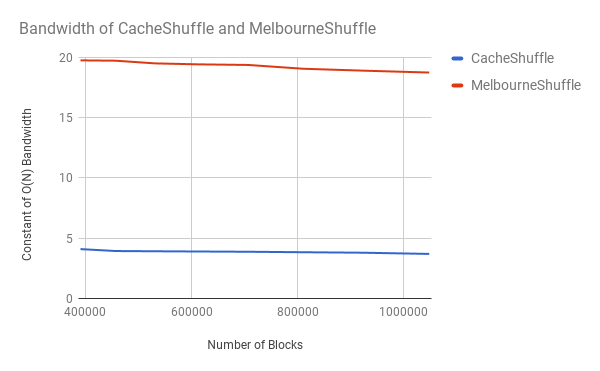}
	\caption{Comparison of $\secondC$ and MelbourneShuffle.}
	\label{fig:melbourne}
\end{figure}

Let us also consider $\sqrt{N}$-Oblivious Shuffling.
Again, MelbourneShuffle requires $(10+c)N$ blocks of bandwidth.
On the other hand, $\firstC$ requires exactly $2N$ blocks of bandwidth.
So, $\firstC$ is a 9x improvement
for practical sizes of $N$.

\subsection{Bandwidth of $\fourthC$}
We investigate the bandwidth costs of $\fourthC$ in a scenario
with dummy blocks.
For convenience, we assume that there are $N+D$ input blocks
and $N+D$ output blocks. We assume that $D=N$, that is half the blocks are dummies.
Using $\firstC$, we know that $2(N + D) = 4N$ blocks of bandwidth are required.
On the other hand, $\fourthC$ only uses $D + (2 + \epsilon)N = (3+\epsilon)N$
for some
small $\epsilon$. Using experiments, we show that
$\epsilon$ is very small for practical data sizes (see Figure~\ref{fig:dummy}).
Furthermore, as the number of blocks increase, $\epsilon$ decreases.
\begin{figure}[H]
	\centering
	\includegraphics[width=0.5\textwidth]{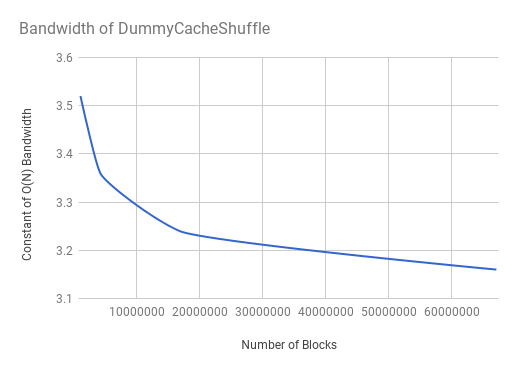}
	\caption{Bandwidth of $\fourthC$.}
	\label{fig:dummy}
\end{figure}

\bibliographystyle{abbrv}
\bibliography{biblio}

\newpage
\appendix

\section{Revisiting MelbourneShuffle}
The notion of {\em Oblivious Shuffling} was first introduced in
\cite{Ohrimenko2014}, which introduced the Melbourne Shuffle.
The Melbourne Shuffle required $O(N)$ blocks of bandwidth
and only $O(\sqrt{N})$ client memory. We show that
their security notion is an $N$-Oblivious Shuffle.

We recall the original oblivious shuffle security definition.
We show that Oblivious Shuffle is exactly $N$-Oblivious Shuffling
under the assumption that $\Enc$ is \indCpa\ secure.
Specifically, it is assumed in the original Oblivious Shuffling notion
that the adversary knows the entirety of $\pi$, the input allocation map.
\begin{definition}[Shuffle-IND]
	For challenger $\ch$ with shuffle algorithm $\Shuffle$ and
	adversary $\adv$,
	we define game $\shuffleInd_\adv^{\Shuffle}(N, \lambda)$ as follows

\begin{enumerate}
\item
	$\adv$ sends $\{B_i, \pi_i, \sigma_i\}_{i \in [l]}$ to $\ch$ where $l = \poly(\lambda)$.
\item
	$\calC$ sends $\{\pi_i(\Enc(K, B_i)), \trans_i\}_{i \in [l]}$ to $\calA$ where each $\trans_i$ is picked according to $\transD^{\Shuffle}(\pi_i, \sigma_i)$.
\item
$\calA$ submits distinct $(C_0, \rho_0, \rho'_0)$ and $(C_1, \rho_1, \rho'_1)$ to $\calA$ as the challenge.
\item
$\ch$ random selects $b\from\{0,1\}$. $\ch$ sends $\{\rho_b(\Enc(K, C_b)), \trans\}$ to $\adv$ where $\trans$ is drawn according to $\transD^{\Shuffle}(\rho_b, \rho'_b)$.
\item
Repeat Steps 1-2.
\item
$\calA$ outputs $b'$.
\end{enumerate}
Output 1 iff $b = b'$.
\end{definition}

\begin{definition}[Shuffle-IND Secure]
        Suppose that $\Shuffle$ is a shuffling
        algorithm over $N = \poly(k)$ items. Then, $\Shuffle$ is Shuffle-IND
	secure if for
        every probabilistically polynomial-time bounded adversary $\calA$
        $$
		\Pr[\shuffleInd_\adv^{\Shuffle}(N, \lambda) = 1] \le \frac{1}{2} + \negl(N).
        $$
\end{definition}

\begin{theorem}
	$\Shuffle$ is Shuffle-IND secure if and only if
	$\Shuffle$ is an Oblivious Shuffle.
\end{theorem}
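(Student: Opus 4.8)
The plan is to prove the two implications separately, in each case by a reduction between the single-shot game $\ShuffG^{\adv}_{\Shuffle}$ instantiated with $K=N$ and the oracle-flavored game $\shuffleInd^{\Shuffle}_{\adv}$. The crucial simplification is that for $K=N$ we have $\touched=[N]$, so the adversary fixes all of $\pi$ and Step~3 of $\ShuffG$ adds no randomness; the two games then differ only in that $\shuffleInd$ (i) grants the adversary transcript oracles in Steps~1--2 and the repeat, and (ii) lets the challenge pair carry its own input permutations $\rho_0,\rho_1$. Throughout I will lean on two facts stated earlier: that $(\Enc,\Dec)$ is \indCpa\ secure, and the earlier observation that $\sigma$-independence of the move transcript together with \indCpa\ security implies $K$-obliviousness.

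First the easy direction, Shuffle-IND secure $\Rightarrow$ Oblivious Shuffle. Given any (necessarily $N$-restricted) adversary $\adv$ for $\ShuffG$, I build $\adv'$ for $\shuffleInd$ that makes no oracle queries (Steps~1--2 and the repeat are answered with trivial queries) and, upon receiving $\adv$'s choice of $\pi$ and the pair $(\B_0,\sigma_0),(\B_1,\sigma_1)$, submits the challenge $(C_0,\rho_0,\rho'_0)=(\B_0,\pi,\sigma_0)$ and $(C_1,\rho_1,\rho'_1)=(\B_1,\pi,\sigma_1)$. The response $\{\rho_b(\Enc(\key,C_b)),\trans\}$ with $\trans\sim\transD^{\Shuffle}(\pi,\sigma_b)$ is distributed exactly as the transcript $\adv$ expects, so $\adv'$ forwards it and echoes $\adv$'s guess. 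The advantage is preserved verbatim, which is the contrapositive of this direction.

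The reverse direction, Oblivious Shuffle $\Rightarrow$ Shuffle-IND secure, is the substantive one, and I would prove it by a hybrid argument starting from a Shuffle-IND adversary $\adv'$. I would first replace, one at a time, the ciphertexts in every oracle answer $\{\pi_i(\Enc(\key,B_i)),\trans_i\}$ and in the challenge content $\rho_b(\Enc(\key,C_b))$ by encryptions of a fixed dummy plaintext, justifying each swap by a reduction to \indCpa; the \emph{move} parts $\trans_i$ are untouched, since by the definition of the move transcript they depend only on the permutations and the algorithm's coins, not on block contents. After these hybrids the oracle answers no longer depend on the secret key, so the reduction $\adv$ can generate them itself by sampling fresh randomness and running $\Shuffle$ on the adversary's publicly chosen permutations. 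What remains is a single-challenge game that embeds directly into $\ShuffG$: $\adv$ forwards $\adv'$'s challenge, invoking \indCpa\ to argue the initial content hides both $C_b$ and the placement $\rho_b$, and $\sigma$-independence of the move transcript to argue the access pattern is faithfully reproduced and carries no information about $b$. Summing the polynomially many hybrid gaps yields a negligible bound.

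I expect the main obstacle to be exactly the transcript oracles: in $\shuffleInd$ the challenger returns genuinely generated transcripts under its own key $\key$, which the single-shot reduction cannot reproduce because it has no access to $\key$. The \indCpa\ hybrid above is precisely what strips the key-dependence out of those answers so they become locally simulatable, and the delicate point is to make the hybrid uniform across the $l=\poly(\lambda)$ queries together with the repeat phase. A secondary subtlety is the challenge's freedom to use distinct $\rho_0,\rho_1$; here one must verify that, once the initial content is reduced to dummy ciphertexts by \indCpa, what survives of $\rho_b$ is only the (adversary-known, $\sigma$-independent) access pattern already permitted by obliviousness, so that nothing beyond the bound leaks.
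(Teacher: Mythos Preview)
Your proof is correct and considerably more careful than the paper's own argument, so this falls in the ``different route'' bucket. The paper dispatches the equivalence in three sentences: it notes that with an $N$-restricted adversary the only structural difference between $\shuffleInd$ and $\ShuffG$ is the presence of Steps~1--2 and~5 (the transcript oracles), and then simply asserts that ``$\adv$ can simulate steps 1--2 and 5 without the help of $\calC$ since $\Shuffle$ is known to $\adv$,'' concluding the games are identical. There is no hybrid, no explicit invocation of \indCpa, and no discussion of the $\rho_0\neq\rho_1$ freedom in the challenge.

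You go further on exactly the two points the paper elides. First, you correctly observe that naive self-simulation of the oracles is not perfect because the challenger's key $\key$ is shared between the oracle answers and the challenge, and you resolve this with an \indCpa\ hybrid that strips the key-dependence out of the oracle ciphertexts so the reduction can manufacture them locally. Second, you flag the asymmetry that $\shuffleInd$ lets the challenge carry distinct input permutations $\rho_0,\rho_1$ whereas $\ShuffG$ has a single adversary-chosen $\pi$, and you sketch how \indCpa\ on the initial content together with obliviousness of the move pattern absorbs that difference. The paper's proof is best read as a two-line summary that leans on the ambient \indCpa\ assumption (stated just before the theorem) without actually invoking it; your version is what the argument looks like when those steps are written out. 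What the paper's approach buys is brevity; what yours buys is an actual reduction.
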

\begin{proof}
	We compare the two games $\shuffleInd$ and $\osGame$ with an
	$N$-restricted adversary.
	That is, we are allowing $\adv$ to pick the entirety
	of the input permutations for $\osGame$.
	If we remove steps 1-2 and 5 from $\shuffleInd$, the games
	are identical. However, we see that $\adv$ can simulate
	steps 1-2 and 5 without the help of $\calC$ since $\Shuffle$
	is known to $\adv$. Therefore, the games are identical.
\end{proof}

\newpage\section{$\secondC$ Pseudocode}
\label{sec:second_code}\begin{algorithm}[H]
        \floatname{algorithm}{$\secondC$}
        \renewcommand{\algorithmicrequire}{\textbf{Input:}}
        \renewcommand{\algorithmicensure}{\textbf{Output:}}
        \caption{Oblivious Shuffling with $O(\sqrt{N})$ client storage.}
\begin{algorithmic}
        \REQUIRE $\pi, \sigma, \Aa, \Dd$
	\STATE Set $\ntBucket \leftarrow \sqrt{K}$ and $\ndBucket \leftarrow (1 + \epsilon/2)\sqrt{K}$.
        \STATE Set $\cnt \leftarrow 0$ and $\cnt_\ntBucket \leftarrow 1$.
        \STATE Initialize $\A_1, \ldots, \A_\ntBucket$ to be empty on the client.
        \FORALL{$i \in \K$}
                \STATE Set $\A_{\cnt_\ntBucket} \leftarrow \A_{\cnt_\ntBucket} \cup \{ i \}$.
                \STATE Increment $\cnt$ by 1.
                \IF {$\cnt = \sqrt{K}$}
                        \STATE Set $\cnt \leftarrow 0$.
                        \STATE Increment $\cnt_\ntBucket$ by 1.
                \ENDIF
        \ENDFOR
        \STATE Initialize $\dMap$ to be an empty key-value storage.
        \STATE Initialize $\D_1, \ldots, \D_{\ndBucket}$ to be empty on the client.
        \FORALL{$i \in [N]$}
                \STATE Choose $j$ uniformly at random from $[\ndBucket]$.
                \STATE Set $\D_j \leftarrow \D_j \cup \{ i \}$.
                \STATE Set $\dMap[i] \leftarrow j$.
        \ENDFOR
        \STATE Initialize $\Q_1, \ldots, \Q_{\ndBucket}$ to be empty maps on the client.
        \STATE Initialize $\T_1, \ldots, \T_{\ndBucket}$ to be empty on the server
        with $p$ empty slots for blocks each.
        \FOR{$i = 1, \ldots, \ntBucket$}
                \STATE Run $\Spray(\pi, \sigma, \dMap, \Aa, \A_i, \Q_1, \ldots, \Q_{\ndBucket}, \T_1, \ldots, \T_{\ndBucket})$.
        \ENDFOR
        \FOR{$i = 1, \ldots, \ndBucket$}
                \STATE Run $\Recalibrate(\pi, \sigma, \Aa, \Dd, \D_i, \Q_i, \T_i)$.
        \ENDFOR
\end{algorithmic}
\end{algorithm}

\begin{algorithm}[H]
        \floatname{algorithm}{$\Spray$}
        \renewcommand{\algorithmicrequire}{\textbf{Input:}}
        \renewcommand{\algorithmicensure}{\textbf{Output:}}
        \caption{The Spray phase of $\secondC$.}
\begin{algorithmic}
        \REQUIRE $\pi, \sigma, \dMap, \Aa, \A, \Q_1, \ldots, \Q_{\ndBucket}, \T_1, \ldots, \T_{\ndBucket}$
        \FORALL {$i \in \A$}
                \STATE Download $\Aa[\pi(i)]$.
                \STATE Set $B_i \leftarrow \Dec(\Key, \Aa[\pi(i)])$.
                \STATE Set $\pos_i \leftarrow \dMap[\sigma(i)]$.
                \STATE Set $\Q_{\pos_i}[\sigma(i)] \leftarrow B_i$.
        \ENDFOR
        \FOR {$i = 1, \ldots, \ndBucket$}
                \IF {$\Q_i \ne \emptyset$}
                        \STATE Remove any element $B$ from $\Q_i$.
                        \STATE Upload $\Enc(\Key, B)$ to $\T_i$.
                \ELSE
                        \STATE Upload $\Enc(\Key, \bZero)$ to $\T_i$.
                \ENDIF
        \ENDFOR
\end{algorithmic}
\end{algorithm}

\begin{algorithm}[H]
        \floatname{algorithm}{$\Recalibrate$}
        \renewcommand{\algorithmicrequire}{\textbf{Input:}}
        \renewcommand{\algorithmicensure}{\textbf{Output:}}
        \caption{The Recalibrate phase of $\secondC$.}
\begin{algorithmic}
        \REQUIRE $\pi, \sigma, \Aa, \Dd, \D, \Q, \T$
        \STATE Download $\T$ from the server.
        \FORALL {blocks $B \in \T$}
                \STATE Set $B \leftarrow \Enc(\Key, B)$.
                \STATE Set $\Q[\sigma(B.\idx)] \leftarrow B$.
        \ENDFOR
        \FORALL {$j \in \D$}
                \STATE Upload $\Enc(\Key, \Q[j])$ to $\D[j]$.
                \STATE Remove $\Q[j]$.
        \ENDFOR
\end{algorithmic}
\end{algorithm}

\newpage
\section{$\firstC$ Pseudocode}
\label{sec:first_code}\begin{algorithm}[H]
        \floatname{algorithm}{$\firstC$}
        \renewcommand{\algorithmicrequire}{\textbf{Input:}}
        \renewcommand{\algorithmicensure}{\textbf{Output:}}
        \caption{$K$-Oblivious Shuffle with $O(K)$ client storage.}
\begin{algorithmic}
        \REQUIRE $\pi, \sigma, \source, \dest, \touched$
        \STATE Initialize $\tbDown \leftarrow  [N]$.
	\STATE Initialize $\U$ to be an empty map on the client.
        \FORALL{$i \in \touched$}
                \STATE Download $B \leftarrow \source[\pi(i)]$.
		\STATE Set $\U[i] \leftarrow \Dec(\Key, B)$.
                \STATE Set $\tbDown \leftarrow \tbDown\ /\ \{\pi(i)\}$.
        \ENDFOR
        \FOR{$i = 1, \ldots, N$}
                \IF{$i \le N - K$}
                        \IF{$\sigma^{-1}(i) \notin \tbDown$}
                                \STATE Pick $j$ uniformly at random from $\tbDown$.
				\STATE Download $B \leftarrow \source[\pi(j)]$.
				\STATE Set $\U[j] \leftarrow \Dec(\Key, B)$.
                                \STATE Set $\tbDown \leftarrow \tbDown\ /\ \{j\}$.
                        \ELSE
                                \STATE Download $B \leftarrow \source[\pi(\sigma^{-1}(i))]$.
				\STATE Set $\U[\sigma^{-1}(i)] \leftarrow \Dec(\Key, B)$.
                                \STATE Set $\tbDown \leftarrow \tbDown\ /\ \{\sigma^{-1}(i)\}$.
                        \ENDIF
                \ENDIF
		\STATE Upload $\Enc(\Key, \U[\sigma^{-1}(i)])$
                to $\dest[i]$.
        \ENDFOR
\end{algorithmic}
\end{algorithm}

\newpage
\section{$\fourthC$ Pseudocode}
\label{sec:fourth_code}\begin{algorithm}[H]
        \floatname{algorithm}{$\fourthC$}
        \renewcommand{\algorithmicrequire}{\textbf{Input:}}
        \renewcommand{\algorithmicensure}{\textbf{Output:}}
        \caption{$K$-Oblivious Shuffle including dummies with $O(K)$ client storage.}
\begin{algorithmic}
        \REQUIRE $\pi, \sigma, \source, \dest, \touched$
        \STATE Initialize $\tbDown \leftarrow [M]$.
	\STATE Initialize $\U$ to be an empty map on the client.
        \FORALL{$i \in \touched$}
                \STATE Download $B \leftarrow \source[\pi(i)]$ to client memory.
		\STATE Set $\U[i] \leftarrow \Dec(\Key, B)$.
                \STATE Set $\tbDown \leftarrow \tbDown\ /\ \{i \}$.
        \ENDFOR
	\STATE Initialize $\dBuck_1,\ldots,\dBuck_p$ to be empty on client memory.
	\FOR{$i = 1,\ldots, M$}
		\STATE Pick $r$ uniformly at random from $[p]$.
		\STATE Set $\dBuck_r \leftarrow \dBuck_r \cup \{ i\}$.
	\ENDFOR
	\FOR{$i = 1,\ldots,p$}
		\IF{$|\dBuck_i| > (1+\epsilon) K$}
			\STATE Abort algorithm and fail.
		\ELSIF{$|\{d \in \dBuck_i : \sigma(d) \ne \perp\}| >
			(1+\epsilon)\rho K$}
			\STATE Abort algorithm and fail.
		\ENDIF
		\FORALL{$d \in \dBuck_i$}
			\IF {$d \in \tbDown$}
				\STATE Download $B \leftarrow \source[\pi(d)]$.
				\STATE Set $\U[d] \leftarrow \Dec(\Key, B)$.
				\STATE Set $\tbDown \leftarrow \tbDown\ /\ \{d\}$.
			\ELSE
				\IF{$\tbDown \ne \emptyset$}
					\STATE Pick $r$ uniformly at random from $\tbDown$.
					\STATE Download $B \leftarrow \source[\pi(r)]$.
					\STATE Set $\U[r] \leftarrow \Dec(\Key, B)$.
					\STATE Set $\tbDown \leftarrow \tbDown\ /\ \{r\}$.
				\ENDIF
			\ENDIF
		\ENDFOR
		\STATE Construct $f(x)$ such that $f(\sigma(d)) = \Enc(\Key, \U[d])$, for all $d \in \dBuck_i$.
		\STATE Client sends server the coefficients of $f(x)$ 
		and $\dBuck_i' \leftarrow \sigma(\dBuck_i)$.
		\FORALL{$x \in \dBuck_i'$}
			\STATE Server places $f(x)$ in location $\dest[x]$.
		\ENDFOR
	\ENDFOR
\end{algorithmic}
\end{algorithm}

\end{document}